\def\rest#1#2{#1_{\restriction#2}}
\def\outcome#1#2{\langle #1 \rangle_{#2}}
\DeclareMathOperator{\Succ}{Succ}
\DeclareMathOperator{\Min}{min}
\DeclareMathOperator{\Max}{max}
\DeclareMathOperator{\Occ}{Occ}
\DeclareMathOperator{\Inf}{Inf}
\DeclareMathOperator{\Hist}{Hist}
\DeclareMathOperator{\Plays}{Plays}
\DeclareMathOperator{\First}{First}
\DeclareMathOperator{\Gain}{Gain}
\DeclareMathOperator{\Win}{Obj}
\newcommand{\fixpointalgo}{fixpoint algorithm}
\newcommand{\init}{init}
\newcommand{\W}{\Omega}
\DeclareMathOperator{\rP}{\mathbf{P}}
\DeclareMathOperator{\Wit}{\mathcal{P}}
\newcommandx{\unsure}[2][1=]{\todo[linecolor=red,backgroundcolor=red!25,bordercolor=red,#1]{#2}}
\newcommandx{\change}[2][1=]{\todo[linecolor=blue,backgroundcolor=blue!25,bordercolor=blue,#1]{#2}}
\newcommandx{\info}[2][1=]{\todo[linecolor=BlueGreen,backgroundcolor=OliveGreen!25,bordercolor=OliveGreen,#1]{#2}}
\newcommandx{\improvement}[2][1=]{\todo[linecolor=SpringGreen,backgroundcolor=SpringGreen!25,bordercolor=SpringGreen,#1]{#2}}
\newcommandx{\thiswillnotshow}[2][1=]{\todo[disable,#1]{#2}}
\theoremstyle{definition}
\newtheorem{defi}{Definition}
\newtheorem{example}{Example}
\theoremstyle{theorem}
\newtheorem{thm}{Theorem}
\newtheorem{prop}{Proposition}
\newtheorem{lemma}{Lemma}
\newtheorem{corollary}{Corollary}
\theoremstyle{remark}
\newtheorem{remark}{Remark}
\let\c@defi\c@thm
\let\c@lemma\c@thm
\let\c@prop\c@thm
\let\c@corollary\c@thm
\let\c@remark\c@thm
\let\c@example\c@thm
\let\c@proposition\c@thm
\author{Thomas Brihaye
\institute{Université de Mons (UMONS), Belgium}
\email{thomas.brihaye@umons.ac.be}
\and
Véronique Bruyère
\institute{Université de Mons (UMONS), Belgium}
\email{veronique.bruyere@umons.ac.be}
\and
Aline Goeminne
\institute{Université de Mons (UMONS), Belgium \\
Université libre de Bruxelles (ULB), Belgium }
\email{aline.goeminne@umons.ac.be}
\and
Jean-François Raskin\thanks{Author supported by ERC Starting Grant (279499: inVEST).}
\institute{Université libre de Bruxelles (ULB), Belgium}
\email{jraskin@ulb.ac.be}
}
\title{Constrained existence problem for weak subgame perfect equilibria with $\omega$-regular Boolean objectives (full version)\thanks{This article is based upon work from COST Action GAMENET CA 16228 supported by COST (European Cooperation in Science and Technology).}}
\begin{document}
\maketitle

\begin{abstract}
We study multiplayer turn-based games played on a finite directed graph such that each player aims at satisfying an $\omega$-regular Boolean objective. Instead of the well-known notions of Nash equilibrium (NE) and subgame perfect equilibrium (SPE), we focus on the recent notion of weak subgame perfect equilibrium (weak SPE), a refinement of SPE. In this setting, players who deviate can only use the subclass of strategies that differ from the original one on a finite number of histories. We are interested in the constrained existence problem for weak SPEs. We provide a complete characterization of the computational complexity of this problem: it is P-complete for Explicit Muller objectives, NP-complete for Co-B\"uchi, Parity, Muller, Rabin, and Streett objectives, and PSPACE-complete for Reachability and Safety objectives (we only prove NP-membership for B\"uchi objectives). We also show that the constrained existence problem is fixed parameter tractable and is polynomial when the number of players is fixed. All these results are based on a fine analysis of a fixpoint algorithm that computes the set of possible payoff profiles underlying weak SPEs.
\end{abstract}

\section{Introduction}

\emph{Two-player zero-sum graph games} with $\omega$-regular objectives are the classical mathematical model to formalize the reactive synthesis problem~\cite{PnueliR89,Thomas95}. More recently, generalization from zero-sum to non zero-sum, and from two players to $n$ players have been considered in the literature, see e.g.~\cite{berwanger07,BrenguierCHPRRS16,BRS-concur15,BRS14,BMR14,KHJ06,FismanKL10,KupfermanPV14,Ummels06} and the surveys~\cite{Bruyere17,GU08}. 
Those extensions are motivated by two main limitations of the classical setting. First, zero-sum games assume a fully antagonistic environment while this is often not the case in practice: the environment usually has its own goal.  While the fully antagonistic assumption is simple and sound (a winning strategy against an antagonistic environment is winning against any environment that pursues its own objective), it may fail to find a winning
strategy even if solutions exist when the objective of the environment is accounted.
Second, modern reactive systems are often composed of several modules, and each module has its own specification and should be considered as a player on its own right. This is why we need to consider \emph{$n$-player graph games}.

For $n$-player graph games, solution concepts like \emph{Nash equilibria} (NEs)~\cite{nash50} are natural to consider. 
A strategy profile is an NE if no player has an incentive to deviate unilaterally from his strategy, \emph{i.e.} no player can strictly improve on the outcome of the strategy profile by changing his strategy only.
In the context of sequential games (such as games played on graphs), NEs allow for non-credible threats that rational players should not carry out. To avoid non-credible threats, refinements such as \emph{subgame perfect equilibria} (SPEs)~\cite{osbornebook} have been advocated. A strategy profile is an SPE if it is an NE in all the subgames of the original game. So players need to play rationally in all subgames, and this ensures that non-credible threats cannot exist. For applications of this concept to $n$-player graph games, we refer the reader to~\cite{BrihayeBDG12,KHJ06,Ummels06}. 

In~\cite{BrihayeBMR15}, the notion of {\em weak} subgame perfect equilibrium (weak SPE) is introduced, and it is shown how it can be used to study the existence SPEs (possibly with contraints) in quantitative reachability games.  While an SPE must be resistant to any unilateral deviation of one player, a weak SPE must be resistant to deviations restricted to deviating strategies that differ from the original one on a \emph{finite number} of histories only. In~\cite{Bruyere0PR17} the authors study general conditions on the structure of the game graph and on the preference relations of the players that guarantee the existence of a weak SPE for quantitative games. Weak SPEs retain most of the important properties of SPEs and they coincide with them when the payoff function of each player is continuous (see e.g.~\cite{fudenberg1991game}). Weak SPEs are also easier to characterize and to manipulate algorithmically. We refer the interested reader to~\cite{BrihayeBMR15,Bruyere0PR17} for further justifications of their interest, as well as for related work on NEs and SPEs.

\paragraph{Main contributions}
In this paper, we concentrate on graph games with \emph{$\omega$-regular Boolean objectives}. While SPEs, and thus weak SPEs, are always guaranteed to exist in such games, we here study the computational complexity of the {\em constrained existence problem} for weak SPEs, \emph{i.e.} equilibria in which some designated players have to win and some other ones have to loose. More precisely, our main results are as follows:
\begin{itemize}
  	\item We study the constrained existence problem for games with Reachability, Safety, B\"uchi, Co-B\"uchi, Parity, Explicit Muller, Muller, Rabin, and Streett objectives. We provide a \emph{complete characterization} of the computational complexity of this problem for all the classes of objectives with one exception: B\"uchi objectives. The problem is P-complete for Explicit Muller objectives, it is NP-complete for Co-B\"uchi, Parity, Muller, Rabin, and Streett objectives, and it is PSPACE-complete for Reachability and Safety objectives. In case of B\"uchi objectives, we show membership to NP but we fail to prove hardness. 
	\item Our complexity results rely on the identification of a \emph{symbolic witness} for the constrained existence of a weak SPE, the size of which allows us to prove NP/PSPACE-membership. As the constrained existence problem is PSPACE-complete for Reachability and Safety objectives, symbolic witnesses as compact as those for the other objectives cannot exist unless NP $=$ PSPACE. The identification of symbolic witnesses is obtained thanks to a \emph{fixpoint algorithm} that computes the set of all possible payoff profiles underlying weak SPEs.
	\item When the number of players is fixed, we show that the constrained existence problem can be solved in \emph{polynomial} time for all $\omega$-regular objectives. We also prove that it is \emph{fixed parameter tractable} where the parameter is the number of players, for Reachability, Safety, B\"uchi, Co-B\"uchi, and Parity objectives. For Rabin, Streett, and Muller objectives, we still establish fixed parameter tractability but we need to consider some additional parameters depending on the objectives. These tractability results are obtained by a fine analysis of the complexity of the fixpoint algorithm mentioned previously. 
\end{itemize}

\paragraph{Related work and additional contributions} 
In~\cite{GU08,Ummels06}, a tree automata-based algorithm is given to decide the constrained existence problem for SPEs on graph games with $\omega$-regular objectives defined by parity conditions. A complexity gap is left open: this algorithm executes in EXPTIME and NP-hardness of the decision problem is proved. In this paper, we focus on weak SPEs for which we provide precise complexity results for the constrained existence problem. We also observe that our results on Reachabilty and Safety objectives transfer from weak SPEs to SPEs: the constrained existence problem for SPEs is PSPACE-complete for those objectives. {\em Quantitative} Reachability objectives are investigated in~\cite{BrihayeBMR15} where it is proved that the constrained existence problem for weak SPEs and SPEs is decidable, but its exact complexity is left open.

In~\cite{BrihayeBMR15,Bruyere0PR17,FleschKMSSV10}, the existence of (weak) SPEs in graph games is established using a construction based on a fixpoint.  Our fixpoint algorithm is mainly inspired by the fixpoint technique of~\cite{Bruyere0PR17}. However, we provide complexity results based on this fixpoint while transfinite induction is used in~\cite{Bruyere0PR17}. Furthermore, we have modified the technique of~\cite{Bruyere0PR17} in a way to get a fixpoint that contains exactly all the possible payoff profiles of weak SPEs. This is necessary to get a decision algorithm for the constrained existence problem. 
 
Profiles of strategies with finite-memory are more appealing from a practical point of view. It is shown in~\cite{Ummels06} that when there exists an SPE in a graph game with $\omega$-regular objectives, then there exists one that uses finite-memory strategies and has the same payoff profile. Thanks to the symbolic witnesses, we have refined those results for weak SPEs.

\paragraph{Structure of the paper}  In Section~\ref{sec:prelim}, we recall the notions of $n$-player graph games and of (weak) SPE, and we state the studied constrained existence problem. In Section~\ref{section:charac}, we provide a fixpoint algorithm that computes all the possible payoff profiles for weak SPEs on a given graph game. From this fixpoint, we derive symbolic witnesses of weak SPEs. In Section~\ref{sec:classes}, we study the complexity classes of the constrained existence problem for all objectives except Explicit Muller objectives. In Section~\ref{sec:FPT}, we prove the fixed parameter tractability of the constrained existence problem and we show that is in polynomial time when the number of players is fixed. We also show that this problem it is P-complete for Explicit Muller objectives. In Section~\ref{sec:conc}, we give a conclusion and propose future work.

\section{Preliminaries}
\label{sec:prelim}

In this section, we introduce multiplayer graph games in which each player aims to achieve his Boolean objective. We focus on classical $\omega$-regular objectives, like Reachability, B\"uchi, aso. We recall two classical concepts of equilibria: Nash equilibrium and subgame perfect equilibrium~(see \cite{GU08}). We also recall weak variants of these equilibria as proposed in~\cite{BrihayeBMR15,Bruyere0PR17}. We conclude the section by the constrained existence problem that is studied in this paper.

\subsection{Multiplayer Boolean games}

\begin{defi}[Boolean game]
A \emph{multiplayer Boolean game} is a tuple $\mathcal{G} = (\Pi, V, (V_i)_{i \in \Pi}, E, (\Gain_i)_{i \in \Pi})$ where 
\begin{itemize}
\item $\Pi = \{ 1,2, \ldots,n \}$ is a finite set of $n$ \emph{players};
\item $G = (V,E)$ is a finite directed graph and for all $v \in V$ there exists $v' \in V$ such that $(v,v') \in E$;
\item $(V_i)_{i \in \Pi}$ is a partition of $V$ between the players;
\item $\Gain = (\Gain_i)_{i \in \Pi}$ is a tuple of functions $\Gain_i: V ^\omega \rightarrow  \{0,1\}$ that assigns a Boolean value  to each infinite path of $G$ for player $i$.
\end{itemize}
\end{defi}

A \emph{play} in $\mathcal{G}$ is an infinite  sequence of vertices $\rho = \rho_0 \rho_1 \ldots$ such that for all $k \in \mathbb{N}$, $(\rho_k, \rho_{k+1}) \in E$.  A \emph{history} is a finite sequence $h = h_0h_1 \ldots h_n$ ($n \in\mathbb{N}$) defined similarly. We denote the set of plays by $\Plays$ and the set of histories by $\Hist$. Moreover, the set $\Hist_i$ is the set of histories such that the last vertex $v$ is a vertex of player $i$, i.e. $v \in V_i$. The \emph{length} $|h|$ of $h$ is the number $n$ of its edges. A play $\rho$ is called a \emph{lasso} if it is of the form $\rho = h\ell^\omega$ with $h\ell \in \Hist$. Notice that $\ell$ is not necessary a simple cycle. The \emph{length of a lasso} $h\ell^\omega$ is the length of $h\ell$. For all $h\in \Hist$, we denote by $\First(h)$ the first vertex $h_0$ of $h$. We use notation $h < \rho$ when a history $h$ is prefix of a play (or a history)  $\rho$. Given a play $\rho = \rho_0\rho_1 \ldots$, the set $\Occ(\rho) = \{ v \in V \mid \exists k, \rho_k = v \}$ is the set of vertices \emph{visited} by $\rho$, and $\Inf(\rho) = \{ v \in V \mid \forall k, \exists j \geq k, \rho_j = v \}$ is the set of vertices \emph{infinitely often visited} by $\rho$. Given a vertex $v \in V$, $\Succ(v) = \{v' \mid (v, v') \in E \}$ is the set of successors of $v$, and $\Succ^*(v)$ is the set of vertices reachable from $v$ in $G$.

When an \emph{initial} vertex $v_0\in V$ is fixed, we call $(\mathcal{G}, v_0)$ an \emph{initialized game}. A play (resp. a history) of $(\mathcal{G},v_0)$ is a play (resp. a history) of $\mathcal{G}$ starting in $v_0$. The set of such plays (resp. histories) is denoted by $\Plays(v_0)$ (resp. $\Hist(v_0)$). We also use notation $\Hist_i(v_0)$ when these histories end in a vertex $v \in V_i$.

The goal of each player $i$ is to achieve his objective, \emph{i.e.}, to maximize his gain.

\begin{defi}[Objective]
	\label{defi:winningCondition}
	For each player $i \in \Pi$, let $\Win_i \subseteq V^\omega$ be his \emph{objective}. In the setting of multiplayer Boolean game, the gain function $\Gain_i$ is defined such that $\Gain_i(\rho) = 1$ (resp. $\Gain_i(\rho) = 0$) if and only if $\rho \in \Win_i$ (resp. $\rho \not \in \Win_i$).
	\end{defi}
	
An objective $\Win_i$ (or the related gain function $\Gain_i$) is \emph{prefix-independent} if for all $h \in V^*$ and $ \rho \in V^{\omega}$, we have $\rho \in \Win_i$ if and only if $h\rho \in \Win_i$. In this paper, we focus on classical \emph{$\omega$-regular} objectives: Reachability,  Safety, Büchi, Co-Büchi, Parity, Explicit Muller, Muller, Rabin, and Streett and we suppose that each player has the \emph{same type} of objective. For instance, we say that $\mathcal{G}$ is a \emph{Boolean game with B\"uchi objectives} to express that all players have a B\"uchi objective. 

\begin{defi}[Classical $\omega$-regular objective]
The set $\Win_i$ is a \emph{Reachability, Safety, Büchi, Co-Büchi, Parity, Explicit Muller, Muller, Rabin}, or \emph{Streett} objective for player~$i$ if and only if $\Win_i$ is composed of the plays $\rho$ satisfying:
\begin{itemize}
	\item \emph{Reachability}:  given $F \subseteq V$, $\Occ(\rho) \cap F \neq \emptyset$;
	\item \emph{Safety}: given $F \subseteq V$, $ \Occ(\rho) \cap F = \emptyset$;
	\item \emph{Büchi}: given $F \subseteq V$, $ \Inf(\rho) \cap F \neq \emptyset $;
	\item \emph{Co-Büchi}: given $F \subseteq V$, $ \Inf(\rho) \cap F = \emptyset$;
	\item \emph{Parity}: given a coloring function $\Omega : V \rightarrow \{1,\ldots,d\}$, $ \Max(\Inf(\Omega(\rho)))\footnote{Where $\Omega(\rho) = \Omega(\rho_0)\Omega(\rho_1)\ldots \Omega(\rho_n)\ldots$.} \text{ is even}$;
	\item \emph{Explicit Muller}: given $\mathcal{F}  \subseteq 2^V$, $\Inf(\rho)\in \mathcal{F}$; 
	\item \emph{Muller}: given a coloring function $\Omega : V \rightarrow \{1,\ldots,d\}$, and $\mathcal{F} \subseteq 2^{\Omega(V)}$, $\Inf(\Omega(\rho)) \in \mathcal{F}$;
	\item \emph{Rabin}: given $(G_j,R_j)_{1\leq j \leq k}$ a family of pair of sets $G_j,R_j \subseteq V$,\\there exists $j \in {1,\ldots,k}$ such that $\Inf(\rho) \cap G_j \neq \emptyset$ and $\Inf(\rho) \cap R_j = \emptyset$;
	\item \emph{Streett}: given $(G_j,R_j)_{1\leq j \leq k}$ a family of pair of sets $G_j,R_j \subseteq V$,\\ for all $j \in {1,\ldots,k}$,  $\Inf(\rho) \cap G_j = \emptyset$ or $\Inf(\rho) \cap R_j \neq \emptyset$.
\end{itemize}
\end{defi}

\noindent
All these objectives are prefix-independent except Reachability and Safety objectives.

A \emph{strategy} of a player $i\in \Pi$ is a function $\sigma_i: \Hist_i \rightarrow V$. This function assigns to each history $hv$ with $v \in V_i$, a vertex $v'$ such that $(v,v') \in E$. In an initialized game $(\mathcal{G},v_0)$, $\sigma_i$ needs only to be defined for histories starting in $v_0$. A play $\rho=\rho_0\rho_1\ldots$ is \emph{consistent} with  $\sigma_i$ if for all $\rho_k \in V_i$ we have that $\sigma_i(\rho_0 \ldots \rho_k) = \rho_{k+1}$. A strategy $\sigma_i$ is \emph{positional} if it only depends on the last vertex of the history, \emph{i.e.}, $\sigma_i(hv) = \sigma_i(v)$ for all $hv \in \Hist_i$. It is \emph{finite-memory} if it can be encoded by a deterministic \emph{Moore machine} ${\cal M} = (M, m_0, \alpha_u, \alpha_n)$ where $M$ is a finite set of states (the memory of the strategy), $m_0 \in M$ is the initial memory state, $\alpha_u\colon M \times V \rightarrow M$ is the update function, and $\alpha_n\colon M \times V_i \rightarrow V$ is the next-action function. The Moore machine $\cal M$ defines a strategy $\sigma_i$ such that $\sigma_i(h v) = \alpha_n(\widehat{\alpha}_u(m_0,h),v)$ for all histories $h v \in \Hist_i$, where $\widehat{\alpha}_u$ extends $\alpha_u$ to histories as expected. The \emph{size} of the strategy $\sigma_i$ is the size $|M|$ of its machine $\cal M$. Note that $\sigma_i$ is positional  when $|M| = 1$.

A \emph{strategy profile} is a tuple $\sigma = (\sigma_i)_{i\in \Pi}$ of strategies, one for each player. It is called positional (resp. finite-memory) if for all $i \in \Pi$, $\sigma_i$ is positional (resp. finite-memory).  Given an initialized game $(\mathcal{G}, v_0)$ and a strategy profile $\sigma$, there exists an unique play from $v_0$ consistent with each strategy $\sigma_i$. We call this play the \emph{outcome} of $\sigma$ and it is denoted by $\outcome{\sigma}{v_0}$. Let $p = (p_i)_{i \in \Pi} \in \{0,1\}^\Pi$, we say that $\sigma$ is a strategy profile \emph{with payoff} $p$ or that $\outcome{\sigma}{v_0}$ \emph{has payoff} $p$ if $p_i = \Gain_i(\outcome{\sigma}{v_0})$ for all $i \in \Pi$.

\subsection{Solution concepts}

In the multiplayer game setting, the solution concepts usually studied are \emph{equilibria} (see~\cite{GU08}). We here recall the concepts of Nash equilibrium and subgame perfect equilibrium, as well as some variants. We begin with the notion of deviating strategy.

Let $\sigma = (\sigma_i)_{i\in \Pi}$ be a strategy profile in an initialized Boolean game $(\mathcal{G},v_0)$. Given $i \in \Pi$, a strategy $\sigma'_i \neq \sigma_i$ is a \emph{deviating} strategy of player~$i$, and $(\sigma'_i, \sigma_{-i})$ denotes the strategy profile $\sigma$ where $\sigma'_i$ replaces $\sigma_i$. Such a strategy is a \emph{profitable deviation} for player~$i$ if $\Gain_i(\outcome{\sigma}{v_0}) < \Gain_i(\outcome{\sigma'_i, \sigma_{-i}}{v_0})$. We say that $\sigma'_i$ is  \emph{finitely deviating} from $\sigma_i$ if  $\sigma'_i$ and $\sigma_i$ only differ on a finite number of histories, and that $\sigma'_i$ is \emph{one-shot deviating} from $\sigma_i$ if $\sigma'_i$ and $\sigma_i$ only differ on $v_0$~\cite{BrihayeBMR15,Bruyere0PR17}.  

The notion of Nash equilibrium (NE) is classical: a strategy profile $\sigma$ in an initialized game $(\mathcal{G},v_0)$ is a \emph{Nash equilibrium} if no player has an incentive to deviate unilaterally from his strategy since he has no profitable deviation, \emph{i.e.}, for each $i \in \Pi$ and each deviating strategy $\sigma'_i$ of player $i$ from $\sigma_i$, the following inequality holds: $\Gain_i(\outcome{\sigma}{v_0}) \geq \Gain_i(\outcome{\sigma'_i, \sigma_{-i}}{v_0})$. In this paper we focus on two variants of NE:  weak/very weak NE~\cite{BrihayeBMR15,Bruyere0PR17}. 

\begin{defi}[Weak/very weak Nash equilibrium]
	A strategy profile $\sigma$ is a \emph{weak NE} (resp. \emph{very weak NE}) in $(\mathcal{G},v_0)$ if, for each player $i\in \Pi$, for each finitely deviating (resp. one-shot) strategy $\sigma'_i$ of player~$i$ from $\sigma_i$, we have $\Gain_i(\outcome{\sigma}{v_0}) \geq \Gain_i(\outcome{\sigma'_i, \sigma_{-i}}{v_0})$.
\end{defi}

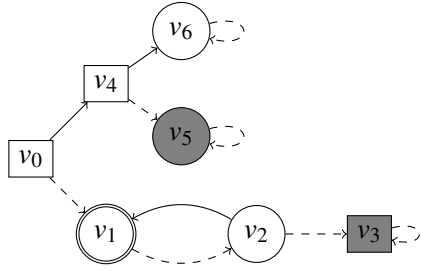
\begin{figure}
	\centering
	\scalebox{1}{

	\begin{tikzpicture}[]
		\node[draw] (v0) at (0,0){$v_0$};
		\node[draw,accepting,circle] (v1) at (1,-1){$v_1$};
		\node[draw,circle] (v2) at (3,-1){$v_2$};
		\node[draw, fill=gray] (v3) at (4.5,-1){$v_3$};
		\node[draw] (v4) at (1,1){$v_4$};
		\node[draw,circle, fill=gray] (v5) at (2,0.3){$v_5$};
		\node[draw,circle] (v6) at (2,1.7){$v_6$};

		\draw[->,dashed] (v0) to (v1);
		\draw[->,dashed] (v1) to [bend right] (v2);
		\draw[->] (v2) to [bend right] (v1);

		\draw[->,dashed] (v2) to (v3);
		\draw[->] (v0) to  (v4);

		\draw[->,dashed] (v3) to [loop right] (v3);
		\draw[->,dashed] (v4) to (v5);
		\draw[->] (v4) to (v6);
		\draw[->,dashed] (v5) to [loop right] (v5);
		\draw[->,dashed] (v6) to [loop right] (v6);

	\end{tikzpicture}}
	\caption{Example of a Boolean game with B\"uchi objectives}
	\label{fig:weakNEvsNE}   
%
%
%
%
%
\end{figure}

\begin{example}
\label{ex:weakNE}
Figure~\ref{fig:weakNEvsNE} illustrates an initialized Boolean game $(\mathcal{G},v_0)$ with B\"uchi objectives  in which there exists a weak NE that is not an NE. In this game, player~1 (resp. player~2) owns round (resp. square) vertices and wants to visits $v_1$ (resp. $v_3$ or $v_5$) infinitely often. The positional strategy profile $\sigma = (\sigma_1,\sigma_2)$ is depicted by dashed arrows, its outcome is equal to $\outcome{\sigma}{v_0}=v_0v_1v_2v_3^\omega$, and $\sigma$ has payoff $(0,1)$. Notice that player 1 has an incentive to deviate from his strategy $\sigma_1$ with a strategy $\sigma'_1$ that goes to $v_1$ for all histories ending in $v_2$. This is indeed a profitable deviation for him since $\Gain(\outcome{(\sigma'_1, \sigma_2)}{v_0}) = (1,0)$. So, $\sigma$ is not an NE. Nevertheless, it is a weak NE because $\sigma'_1$ is the only profitable deviation and it is not finitely deviating (it differs from $\sigma_1$ on all histories of the form $v_0(v_1v_2)^k$ for $k \in \mathbb{N}\backslash \{0 \}$).
\end{example} 

When considering games played on graphs, a well-known refinement of NE is the concept of \emph{subgame perfect equilibrium} (SPE) which a strategy profile being an NE in each subgame. Variants of weak/very weak SPE can also be studied as done with NEs. Formally, given an initialized Boolean game $(\mathcal{G},v_0)$ and a history $hv \in \Hist(v_0)$, the initialized game $(\rest{\mathcal{G}}{h},v)$ is called a \emph{subgame}\footnote{Notice that $(\mathcal{G},v_0)$ is subgame of itself.} of $(\mathcal{G},v_0)$ such that $\rest{\mathcal{G}}{h} = (\Pi, V, (V_i)_{i\in \Pi}, E, \rest{\Gain}{h})$ and $\Gain_{i\restriction h}(\rho) = \Gain_i(h\rho)$ for all $i \in \Pi$ and $\rho \in V^{\omega}$. Moreover if $\sigma_i$ is a strategy for player~$i$ in $(\mathcal{G},v_0)$, then $\sigma_{i\restriction h}$ denotes the strategy in $(\rest{\mathcal{G}}{h},v)$ such that for all histories $h'\in \Hist_i(v)$, $\sigma_{i\restriction h}(h') = \sigma_i(hh')$. Similarly, from a strategy profile $\sigma$ in $(\mathcal{G},v_0)$, we derive the strategy profile $\rest{\sigma}{h}$ in $(\rest{\mathcal{G}}{h},v)$. The play $\outcome{\rest{\sigma}{h}}{v}$ is called a \emph{subgame outcome} of $\sigma$. 

\begin{defi}[Subgame perfect equilibrium and weak/very weak subgame perfect equilibrium]
	A strategy profile $\sigma$ is a \emph{(resp. weak, very weak) subgame perfect equilibrium} in $(\mathcal{G},v_0)$ if for all $hv \in \Hist(v_0)$, $\rest{\sigma}{h}$ is a (resp. weak, very weak) NE in $(\rest{\mathcal{G}}{h},v)$.
\end{defi}

When one needs to show that a strategy profile is a weak SPE, the next proposition is very useful because it states that it is enough to consider one-shot deviating strategies.  
\begin{prop}[\cite{BrihayeBMR15}] \label{prop:equivWSPEetVWSPE}
	A strategy profile $\sigma$ is a weak SPE if and only if $\sigma$ is a very weak SPE.
\end{prop}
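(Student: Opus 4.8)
The plan is to prove the two implications separately; the nontrivial one is that a very weak SPE is a weak SPE. The left-to-right direction is immediate: any one-shot deviating strategy differs from the original one on a single history, hence is in particular finitely deviating, so if $\sigma$ is a weak SPE --- i.e. $\rest{\sigma}{h}$ is a weak NE in every subgame $(\rest{\mathcal{G}}{h},v)$ --- then $\rest{\sigma}{h}$ is also a very weak NE in every such subgame, that is, $\sigma$ is a very weak SPE.

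For the converse, assume $\sigma$ is a very weak SPE. I would prove, by induction on $k \in \mathbb{N}$, the statement: \emph{for every history $hv \in \Hist(v_0)$, every player~$i$, and every strategy $\tau_i$ of player~$i$ in $(\rest{\mathcal{G}}{h},v)$ that differs from $\sigma_{i\restriction h}$ on at most $k$ histories, one has $\Gain_{i\restriction h}(\outcome{\rest{\sigma}{h}}{v}) \geq \Gain_{i\restriction h}(\outcome{\tau_i,\sigma_{-i\restriction h}}{v})$}. Since every finitely deviating strategy differs from the original one on finitely many histories, the conjunction over all $k$ of this statement says precisely that $\sigma$ is a weak SPE. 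The base case $k=0$ is trivial, as then $\tau_i = \sigma_{i\restriction h}$.

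For the inductive step, fix such a $\tau_i$ differing from $\sigma_{i\restriction h}$ on exactly $k \geq 1$ histories (if on fewer, the induction hypothesis applies directly). Among those histories pick one, $h^{*}$, of maximal length; it necessarily ends in a vertex $w \in V_i$. Let $\tau'_i$ be obtained from $\tau_i$ by resetting its value at $h^{*}$ to $\sigma_{i\restriction h}(h^{*})$; then $\tau'_i$ differs from $\sigma_{i\restriction h}$ on exactly $k-1$ histories, so the induction hypothesis gives $\Gain_{i\restriction h}(\outcome{\tau'_i,\sigma_{-i\restriction h}}{v}) \leq \Gain_{i\restriction h}(\outcome{\rest{\sigma}{h}}{v})$, and it suffices to prove $\Gain_{i\restriction h}(\outcome{\tau_i,\sigma_{-i\restriction h}}{v}) \leq \Gain_{i\restriction h}(\outcome{\tau'_i,\sigma_{-i\restriction h}}{v})$. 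If $\outcome{\tau_i,\sigma_{-i\restriction h}}{v}$ does not have $h^{*}$ as a prefix, then $\tau_i$ and $\tau'_i$ induce the same outcome and there is nothing to prove. Otherwise that outcome reaches $w$ along $h^{*}$; by maximality of $|h^{*}|$, the strategy $\tau_i$ coincides with $\sigma_{i\restriction h}$ on every proper extension of $h^{*}$, so after the move out of $w$ prescribed by $\tau_i$ at $h^{*}$ the play follows $\sigma$, and likewise $\outcome{\tau'_i,\sigma_{-i\restriction h}}{v}$ reaches $w$ along $h^{*}$ and then follows $\sigma$ after the original move $\sigma_{i\restriction h}(h^{*})$. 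I would then apply the very weak SPE hypothesis to the subgame rooted at $w$ obtained by restricting along the concatenation of $h$ and $h^{*}$ deprived of its last vertex, together with the one-shot deviation of player~$i$ at $w$ that plays $\tau_i(h^{*})$: this is a genuine one-shot deviation, since $\tau_i(h^{*}) \neq \sigma_{i\restriction h}(h^{*})$ and $w \in V_i$, and it is not profitable. It then remains to translate this inequality back to $(\rest{\mathcal{G}}{h},v)$: the outcome of that one-shot deviation and $\outcome{\tau_i,\sigma_{-i\restriction h}}{v}$ have the same suffix after $w$, hence the same gain for player~$i$; and the reference $\sigma$-outcome of the subgame rooted at $w$, prefixed by the actual history leading to $w$, is exactly $\outcome{\tau'_i,\sigma_{-i\restriction h}}{v}$. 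This yields $\Gain_{i\restriction h}(\outcome{\tau_i,\sigma_{-i\restriction h}}{v}) \leq \Gain_{i\restriction h}(\outcome{\tau'_i,\sigma_{-i\restriction h}}{v})$, which closes the induction.

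The step I expect to be the real work is this last translation. It is not conceptually difficult, but it requires meticulous bookkeeping of which strategy dictates each move along the various plays and of the prefixes attached to the subgames; since Reachability and Safety objectives are not prefix-independent, gains cannot be compared across subgames after discarding prefixes, so these prefixes must be carried along explicitly. One also has to check the side conditions that make the auxiliary one-shot deviation well defined. Once the correspondence between the full-game plays that go through $h^{*}$ and the plays of the subgame rooted at $w$ is set up, the very weak SPE hypothesis applies directly and nothing further is needed.
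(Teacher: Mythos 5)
Your argument is correct: the induction on the number $k$ of histories where the deviating strategy differs from $\sigma_{i\restriction h}$, peeling off a deviation point of maximal depth and handling it with a single one-shot deviation in the subgame rooted there, is exactly the one-shot-deviation-principle argument used for this proposition in the cited reference~\cite{BrihayeBMR15} (the paper itself gives no proof and defers to that source). The bookkeeping you flag as the delicate part does go through as you describe, since the two outcomes under $\tau_i$ and $\tau'_i$ coincide up to $h^{*}$ and the prefixed gain functions $\Gain_{i\restriction h}$ and $\Gain_{i\restriction hg'}$ evaluate the same infinite words.
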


\begin{example}
\label{ex:weakSPE}
In Example~\ref{ex:weakNE} is given a weak NE $\sigma$ in the game $(\mathcal{G},v_0)$ depicted in Figure~\ref{fig:weakNEvsNE}. This strategy profile is also a very weak SPE (and thus a weak SPE by Proposition~\ref{prop:equivWSPEetVWSPE}). For instance, in the subgame $(\rest{\mathcal{G}}{h},v)$ with $h = v_0v_1$ and $v = v_2$, the only one-shot deviating strategy $\sigma'_1$ is such that $\sigma'_1$ coincides with $\sigma_{1\restriction h}$ except that $\sigma'_1(v_2) = v_1$. This is not a profitable deviation for player~$1$ in $(\rest{\mathcal{G}}{h},v)$. Notice that $\sigma$ is not an SPE since it is not an NE as explained in Example~\ref{ex:weakNE}.

\end{example}

In general, the notions of SPE and weak SPE are not equivalent (see Example~\ref{ex:weakSPE}). Nevertheless they coincide for the class of Boolean games with Reachability objectives. 
%

\begin{prop} 
\label{prop:reach}
Let $\sigma$ be a strategy profile in an initialized Boolean game $(\mathcal{G},v_0)$ with Reachability objectives. Then $\sigma$ is an SPE if and only if $\sigma$ is a weak SPE.
\end{prop}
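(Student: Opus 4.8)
The plan is to establish the two implications separately. The implication ``SPE $\Rightarrow$ weak SPE'' is immediate and holds for arbitrary objectives: if $\sigma$ is an SPE then for every $hv \in \Hist(v_0)$ the profile $\rest{\sigma}{h}$ is an NE in $(\rest{\mathcal{G}}{h},v)$, hence it resists \emph{every} deviation, in particular every finitely deviating one, so $\rest{\sigma}{h}$ is a weak NE; thus $\sigma$ is a weak SPE. All the work lies in the converse, which is where the specific shape of Reachability objectives is used.

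For the converse I would argue by contraposition. Assume $\sigma$ is a weak SPE but not an SPE. Then there is a history $hv \in \Hist(v_0)$ and a player $i \in \Pi$ admitting a profitable deviation $\sigma'_i$ from $\rest{\sigma_i}{h}$ in the subgame $(\rest{\mathcal{G}}{h},v)$. Since gains are Boolean, this means $\Gain_i(\outcome{\rest{\sigma}{h}}{v}) = 0$ while $\Gain_i(\outcome{\sigma'_i,\rest{\sigma_{-i}}{h}}{v}) = 1$. Write $\rho' = \outcome{\sigma'_i,\rest{\sigma_{-i}}{h}}{v}$ and let $F_i$ be the target set of player~$i$'s Reachability objective; as $\rho' \in \Win_i$, there is a least index $k$ with $\rho'_k \in F_i$.

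The key step is to turn $\sigma'_i$ into a \emph{finitely deviating} profitable deviation, exploiting the fact that a Reachability objective is witnessed by a finite prefix. Define $\sigma''_i$ to coincide with $\sigma'_i$ on every prefix of $\rho'$ of length less than $k$ that ends in a vertex of player~$i$, and with $\rest{\sigma_i}{h}$ on every other history. There are at most $k$ histories of the first kind, so $\sigma''_i$ differs from $\rest{\sigma_i}{h}$ on only finitely many histories, i.e. it is finitely deviating. A straightforward induction on $j \le k$ shows that $\outcome{\sigma''_i,\rest{\sigma_{-i}}{h}}{v}$ agrees with $\rho'$ on its first $k+1$ vertices (using that $\rho'$ is consistent with $\sigma'_i$ on player~$i$'s moves and with $\rest{\sigma_{-i}}{h}$ on the others), so this outcome visits $\rho'_k \in F_i$ and hence has gain $1$ for player~$i$. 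Thus $\sigma''_i$ is a finitely deviating profitable deviation for player~$i$ in $(\rest{\mathcal{G}}{h},v)$, contradicting the assumption that $\rest{\sigma}{h}$ is a weak NE. Therefore $\sigma$ is an SPE.

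I expect the only delicate point to be the verification that $\outcome{\sigma''_i,\rest{\sigma_{-i}}{h}}{v}$ really does reach $F_i$: one must check that the plays generated by $(\sigma''_i,\rest{\sigma_{-i}}{h})$ and by $(\sigma'_i,\rest{\sigma_{-i}}{h})$ stay equal up to step $k$ and that only finitely many histories are affected by the redefinition; both follow from ``reaching $F_i$'' being an open condition determined by a bounded prefix, which is exactly what fails for Safety and explains why the statement is restricted to Reachability. As an alternative route, one could invoke Proposition~\ref{prop:equivWSPEetVWSPE} and iterate one-shot deviations, but the direct construction above seems more transparent.
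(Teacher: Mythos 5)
Your proof is correct and takes essentially the same approach as the paper: both truncate the profitable deviation $\sigma'_i$ at the first visit of $F_i$ along the deviated play, yielding a finitely deviating profitable deviation that contradicts the weak SPE hypothesis. The only cosmetic difference is that the paper is slightly more explicit that the first occurrence of $F_i$ lies in $\rho'$ rather than in the prefix $h$ (which follows from $\Gain_i(h\rho)=0$), a point your argument implicitly uses.
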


\begin{proof} Each player $i$ has a Reachability objective, let $F_i$ be the set of vertices he aims to visit. 
	
		$(\Rightarrow)$ This implication is a consequence of the definitions of SPE and weak SPE.
		
		$(\Leftarrow)$ Let $\sigma$ be a  weak SPE in $(\mathcal{G},v_0)$. Assume that $\sigma$ is not an SPE, \emph{i.e.}, there exists $hv \in \Hist(v_0)$ such that $\rest{\sigma}{h}$ is not an NE in $(\rest{\mathcal{G}}{h},v)$. Then some player~$i$ has a profitable deviation $\sigma_i'$ in the subgame $(\rest{\mathcal{G}}{h},v)$. As $\Gain_i$ takes its values in $\{0,1\}$, this means that 
		$$0 = \Gain_i(h \rho) < \Gain_i(h \rho') = 1$$
with $\rho = \outcome{\rest{\sigma}{h}}{v}$ and $\rho' = \outcome{\sigma_i', \sigma_{-i\restriction h}}{v}$. We consider the first occurrence of a vertex of $F_i$ along $h\rho'$ (which appears in $\rho'$ and not in $h$ as $\Gain_i(h \rho) = 0$): let $g'$ of mininal length such that $hg' < h\rho'$ and $g'$ ends in some $v' \in F_i$.
Let us define a strategy $\tau_i$ that is finitely deviating from $\sigma_{i\restriction h}$ and profitable for player~$i$ in $(\rest{\mathcal{G}}{h},v)$. This will be in contradiction with our hypothesis. For all $g \in \Hist_i(v)$, let
$$\tau_i(g) = \begin{cases} \sigma'_i(g) & \text{ if } g \leq g' \\
\sigma_{i\restriction h}(g) & \text{ otherwise } \end{cases}.$$
By definition of $\tau_i$, we have that $\Gain_i(h \outcome{\tau_i,\sigma_{-i\restriction h}}{v}) = \Gain_i(h \rho') = 1$ and $\tau_i$ is finitely deviating from $\sigma_{i\restriction h}$ since $|g'|$ is finite. 
\end{proof}

\subsection{Constraint problem}

It is proved in~\cite{Bruyere0PR17} that there always exists a weak SPE in Boolean games.  In this paper, we are interested in solving the following \emph{constraint problem}:

\begin{defi}[Constraint problem]
\label{decidabilityProblem}
Given $(\mathcal{G},v_0)$ an initialized Boolean game and thresholds $x,y \in \{0,1\}^{|\Pi|}$, decide whether there exists a weak SPE in $(\mathcal{G},v_0)$ with payoff $p$  such that $x \leq p \leq y$.\footnote{The order $\leq$ is the componentwise order, that is, $x_i \leq p_i \leq y_i$, for all $i \in \Pi$.}
\end{defi}



In the next sections, we solve the constraint problem for the classical $\omega$-regular objectives. The complexity classes that we obtain are shown in Table~\ref{tab:complexity}. They are detailed in Section~\ref{sec:classes} with the case of Explicit Muller objectives postponed to Section~\ref{sec:ExplicitMuller}. The case of B\"uchi objectives remains open, since we only propose a non-deterministic algorithm in polynomial time but no matching lower bound. In Section~\ref{sec:FPT}, we prove that the constraint problem for weak SPEs is fixed parameter tractable and becomes polynomial when the number of players is fixed. All these results are based on a characterization of the set of possible payoffs of a weak SPE, that is described in Section~\ref{section:charac}. 

\begin{table}[h!]
\centering
\caption{Complexity classes of the constraint problem for $\omega$-regular objectives}
\scalebox{0.9}{
\begin{tabular}{|l|c|c|c|c|c|c|c|c|}
\hline
                & Explicit Muller & Co-Büchi  & Parity  & Muller & Rabin & Streett & Reachability & Safety  \\ \hline\hline
P-complete       & $\times$  &   &   &       &   &   &   &        \\ \hline
NP-complete    &   		& $\times$ & $\times$ &  $\times$  & $\times$      &  $\times$     &     &  \\ \hline
PSPACE-complete      &   &  &  &        &   &    & $\times$  & $\times$\\ \hline
\end{tabular}}
\label{tab:complexity}
\end{table}

\section{Characterization}
\label{section:charac}

In this section our aim is twofold: first, we characterize the set of possible payoffs of weak SPEs and second, we show how it is possible to build a weak SPE given a set of lassoes with some ``good properties". Those characterizations work for Boolean games with \emph{prefix-independent gain functions}. We make this hypothesis all along Section~\ref{section:charac}.

\subsection{Remove-Adjust procedure}
\label{sec:RemoveAdjust}

Let $(\mathcal{G},v_0)$ be an initialized Boolean game with prefix-independent gain functions. The computation of the set of all the payoffs of weak SPEs in $(\mathcal{G},v_0)$ is inspired by a fixpoint procedure explained in~\cite{Bruyere0PR17}. Each vertex $v$ is \emph{labeled} by a set of payoffs $p \in \{0,1\}^{|\Pi|}$. 
Initially, these payoffs are those for which there exists a play in $\Plays(v)$ with payoff $p$. Then step by step, some payoffs are removed for the labeling of $v$ as soon as we are sure they cannot be the payoff of $\rest{\sigma}{h}$ in a subgame $(\rest{\mathcal{G}}{h},v)$ for some weak SPE $\sigma$.\footnote{The value of $h$ is not important since the gain functions are prefix independent. This is why we only focus on $v$ and not on~$hv$.} 
When a fixpoint is reached, the labeling of the initial vertex $v_0$ exactly contains all the payoffs of weak SPEs in $(\mathcal{G},v_0)$. Hence, at each step of this procedure, the payoffs labeling a vertex $v$ are payoffs of \emph{potential} subgame outcomes of a weak SPE. Their number decreases until reaching a fixpoint. 

We formally proceed as follows. For all $v \in V$, we define the initial labeling of $v$ as:
$$\rP_0(v) = \{ p \in \{0,1\}^{|\Pi|} \mid \text{ there exists } \rho \in \Plays(v) \text{ such that } \Gain(\rho) = p \}.$$

Then for each step $k \in \mathbb{N} \setminus \{0\}$, we compute the set $\rP_k(v)$ by alternating between two operations: \emph{Remove} and \emph{Adjust}. To this end, we need to introduce the notion of $(p,k)$-labeled play. Let $p$ be a payoff and $k$ be a step, a play $\rho= \rho_0\rho_1\rho_2 \ldots$ is \emph{$(p,k)$-labeled} if for all $j \in \mathbb{N}$ we have $p \in \rP_k(\rho_j)$, that is, $\rho$ visits only vertices that are labeled by $p$ at step $k$. We first give the definition of the Remove-Adjust procedure and then give some intuition about it.

\begin{defi}[Remove-Adjust procedure] 
\label{def:remove}

Let $k \in \mathbb{N} \setminus \{0\}$.
\begin{itemize}
\item If $k$ is odd, process the \emph{Remove} operation: 
\begin{itemize}
\item If for some $v \in V_i$ there exists $p \in \rP_{k-1}(v)$ and $v' \in \Succ(v)$ such that $p_i < p'_i$ for all $p' \in \rP_{k-1}(v')$,
then $\rP_k(v) = \rP_{k-1}(v) \backslash \{p\}$ and for all $u \neq v$, $\rP_k(u) = \rP_{k-1}(u)$. 
\item If such a vertex $v$ does not exist, then $\rP_k(u) = \rP_{k-1}(u)$ for all $u \in V$.
\end{itemize}
\item If $k$ is even, process the \emph{Adjust} operation: 
\begin{itemize}
\item If some payoff $p$ was removed from $\rP_{k-2}(v)$ (that is, $\rP_{k-1}(v) = \rP_{k-2}(v)\setminus \{p\}$), then 
\begin{itemize}
\item For all $u \in V$ such that $p \in \rP_{k-1}(u)$, check whether there still exists a $(p,k-1)$-labeled play with payoff $p$ from $u$. If it is the case, then $\rP_{k}(u) = \rP_{k-1}(u)$, otherwise $\rP_{k}(u) = \rP_{k-1}(u) \setminus \{p \}$.
\item For all $u \in V$ such that $p \notin \rP_{k-1}(u)$:
$\rP_{k}(u) = \rP_{k-1}(u).$
\end{itemize}
\item Otherwise $\rP_k(u) = \rP_{k-1}(u)$ for all $u \in V$.
\end{itemize}
\end{itemize}
\end{defi}

Let us explain the Remove operation. Let $p$ that labels vertex $v$. This means that it is the payoff of a potential subgame outcome of a weak SPE that starts in $v$. Suppose that $v$ is a vertex of player~$i$ and $v$ has a successor $v'$ such that $p_i < p'_i$ for all $p'$ labeling $v'$. Then $p$ cannot be the payoff of $\rest{\sigma}{h}$ in the subgame $(\rest{\mathcal{G}}{h},v)$ for some weak SPE $\sigma$ and some history $h$, otherwise player~$i$ would have a profitable (one-shot) deviation by moving from $v$ to $v'$ in this subgame.

Now it may happen that for another vertex $u$ having $p$ in its labeling, all potential subgame outcomes of a weak SPE from $u$ with payoff $p$ necessarily visit vertex $v$. As $p$ has been removed from the labeling of $v$, these potential plays do no longer survive and $p$ is also removed from the labeling of $u$ by the Adjust operation.

We can state the existence of a fixpoint of the sequences $(\rP_k(v))_{k\in \mathbb{N}}$, $v \in V$, in the following meaning:
 
 \begin{prop}[Existence of a fixpoint]
 There exists an even natural number $k^* \in \mathbb{N}$ such that for all $v \in V$, $\rP_{k^*}(v) = \rP_{k^*+1}(v) = \rP_{k^*+2}(v)$. 
 \end{prop}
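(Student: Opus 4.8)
The plan is to argue that the sequences $(\rP_k(v))_{k \in \mathbb{N}}$ are eventually constant, and that they all stabilize simultaneously at some even index. First I would establish a global monotonicity invariant: for every $v \in V$ and every $k$, $\rP_{k+1}(v) \subseteq \rP_k(v)$. This is immediate from Definition~\ref{def:remove} since every clause either leaves a labeling unchanged or removes a single payoff $p$ from it; no clause ever adds a payoff. Consequently the quantity $N_k = \sum_{v \in V} |\rP_k(v)|$ is a non-increasing sequence of natural numbers, bounded below by $0$ (and, since $\rP_0(v)$ is finite because $\{0,1\}^{|\Pi|}$ is finite, $N_0 \le |V| \cdot 2^{|\Pi|}$ is finite to begin with). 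Hence $N_k$ is eventually constant: there is $k_0$ with $N_k = N_{k_0}$ for all $k \ge k_0$.

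Next I would convert "the total size is constant" into "every individual labeling is constant". Since $\rP_{k+1}(v) \subseteq \rP_k(v)$ for each $v$ and $\sum_v |\rP_{k+1}(v)| = \sum_v |\rP_k(v)|$, each inclusion must be an equality, so $\rP_{k+1}(v) = \rP_k(v)$ for all $v$ and all $k \ge k_0$. In particular the labelings are literally stationary from index $k_0$ onward, so the required three-term equality $\rP_{k^*}(v) = \rP_{k^*+1}(v) = \rP_{k^*+2}(v)$ holds for any $k^* \ge k_0$. Finally, to get an \emph{even} witness, I would simply take $k^* = k_0$ if $k_0$ is even and $k^* = k_0 + 1$ otherwise; stationarity from $k_0$ guarantees the equalities still hold at this (possibly shifted) even index. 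I would double-check the edge case $k_0 = 0$ is harmless: $0$ is even, and the claim then says $\rP_0 = \rP_1 = \rP_2$, which is exactly what stationarity from index $0$ gives.

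The only subtle point is the monotonicity invariant for the \emph{Adjust} step: I must verify that when $k$ is even the operation cannot enlarge any labeling. Reading Definition~\ref{def:remove}, for even $k$ each vertex $u$ either keeps $\rP_{k-1}(u)$ unchanged or has one payoff $p$ removed from it, so indeed $\rP_k(u) \subseteq \rP_{k-1}(u)$; and for odd $k$ the \emph{Remove} step likewise only deletes. So the invariant $\rP_{k+1}(v)\subseteq\rP_k(v)$ holds at every step regardless of parity, which is all the argument needs. I do not expect any real obstacle here — the proposition is a straightforward termination argument by a decreasing integer potential — the main thing to be careful about is just the bookkeeping that links the even/odd alternation of the procedure to the "even $k^*$" requirement in the statement, which is handled by the $k_0 \mapsto k_0$ or $k_0+1$ shift above.
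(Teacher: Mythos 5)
Your proof is correct and follows essentially the same route as the paper's: both arguments rest on the observation that the Remove and Adjust operations never add payoffs, so the finitely many nonincreasing sequences $(\rP_k(v))_{k\in\mathbb{N}}$ must stabilize, after which one picks an even index. The paper pins down $k^*$ slightly more concretely (as the even step preceding the first Remove that deletes nothing, whence the following Adjust also deletes nothing), but your potential-function bookkeeping with the parity shift establishes the same conclusion.
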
 
 \begin{proof}
For all $v \in V$, the sequence $(\rP_k(v))_{k\in \mathbb{N}}$ is nonincreasing because the Remove and Ajdust operations never add a new payoff. As each $\rP_0(v)$ is finite (it contains at most $2^{|\Pi|}$ payoffs), there exists a natural odd number $k^*+1$ such that for all $v \in V$, $\rP_{k^*}(v) = \rP_{k^*+1}(v)$ during the Remove operation, and thus for all $v \in V$, $\rP_{k^*+1}(v) = \rP_{k^*+2}(v)$ during the Adjust operation.
 \end{proof}

\begin{example}
	\label{exemple:pointFixe}
	We illustrate the different steps of the Remove-Adjust procedure on the example depicted in Figure~\ref{fig:weakNEvsNE}, and we display the result of this computation in Table~\ref{example:remove-adjustProc}. Initially, the sets $\rP_0(v)$, $v \in V$, contains all payoffs $p$ such that there exists a play $\rho \in \Plays(v)$ with $\Gain(\rho) = p$. 
	At step $k = 1$, we apply a Remove operation to $v = v_4$ (this is the only possible $v$): $v$ is a vertex of player~$i = 2$ and $v$ has a successor $v' = v_5$ such that $(0,1) \in \rP_0(v_5)$. Therefore $(0,0)$ is removed from $\rP_0(v_4)$ to get $\rP_1(v_4)$. By definition of the Remove operation, the other sets $\rP_0(u)$ are not modified and are thus equal to $\rP_1(u)$.
 	At step $k=2$, we apply an Adjust operation. The only way to obtain the payoff $(0,0)$ from $v_0$ is by visiting $v_4$ with the play $v_0v_4v_6^\omega$. As there does not exist a $((0,0),1)$-labeled play with payoff $(0,0)$ anymore, we have to remove $(0,0)$ from $\rP_1(v_0)$. The other sets $\rP_1(v)$ remain unchanged. 
	At step $k=3$, the Remove operation removes payoff $(1,0)$ from $\rP_2(v_0)$ due to the unique payoff $(0,1)$ in $\rP_2(v_4)$.
	At step $k=4$, the Adjust operation leaves all sets $\rP_3(v)$ unchanged. 
	Finally at step $k=5$, the Remove operation also leaves all sets $\rP_4(v)$ unchanged, and the fixpoint is reached.
	
	\begin{table}[h!]
		\caption{Computation of the fixpoint on the example of Figure~\ref{fig:weakNEvsNE}}
		\label{example:remove-adjustProc}
		\centering
		\scalebox{0.8}{
		\begin{tabular}{|l||l|l|l|l|l|l|l|}
			\hline
			& $v_0$ & $v_1$ & $v_2$ & $v_3$ & $v_4$ & $v_5$ & $v_6$\\
			\hline
			\hline
			$\rP_0(v)$ & $\{ (0,0),(1,0),(0,1) \}$ & $\{(1,0),(0,1)\}$ &$\{(1,0),(0,1)\}$ &$\{(0,1)\}$ & $\{\mathbf{(0,0)},(0,1)\}$&$\{(0,1)\}$& $\{(0,0)\}$\\
			\hline
			$\rP_1(v)$ & $\{ \textbf{(0,0)},(1,0),(0,1) \}$ & $\{(1,0),(0,1)\}$ &$\{(1,0),(0,1)\}$ &$\{(0,1)\}$ & $\{(0,1)\}$&$\{(0,1)\}$& $\{(0,0)\}$\\
			\hline 
			$\rP_2(v)$ & $\{\textbf{(1,0)},(0,1) \}$ & $\{(1,0),(0,1)\}$ &$\{(1,0),(0,1)\}$ &$\{(0,1)\}$ & $\{(0,1)\}$&$\{(0,1)\}$& $\{(0,0)\}$\\
			\hline
			$\rP_3(v)$  & $\{(0,1) \}$ & $\{(1,0),(0,1)\}$ &$\{(1,0),(0,1)\}$ &$\{(0,1)\}$ & $\{(0,1)\}$&$\{(0,1)\}$& $\{(0,0)\}$\\
			\hline
			$\rP_4(v) = \rP_{k^*}(v)$  & $\{(0,1) \}$ & $\{(1,0),(0,1)\}$ &$\{(1,0),(0,1)\}$ &$\{(0,1)\}$ & $\{(0,1)\}$&$\{(0,1)\}$& $\{(0,0)\}$\\
			\hline
		\end{tabular}}
	\end{table}
\end{example}

\subsection{Characterization and good symbolic witness}
\label{sec: witness}

The fixpoint $\rP_{k^*}(v)$, $v \in V$, provides a characterization of
the payoffs of all weak SPEs as described in the following
theorem. This result is in the spirit of the classical \emph{Folk
    Theorem} which characterizes the payoffs of all NEs in infinitely
  repeated games (see for instance~\cite[Chapter~8]{fudenberg1991game}).

\begin{thm}[Characterization]
 \label{folkThm}
 Let $(\mathcal{G},v_0)$ be an initialized Boolean game with prefix-independent gain functions. Then there exists a weak SPE $\sigma$ with payoff $p_0$ in $(\mathcal{G},v_0)$ if and only if $\rP_{k^*}(v) \neq \emptyset$ for all $v \in \Succ^*(v_0)$ and $p_0 \in \rP_{k^*}(v_0)$.\footnote{We use notation $p_0 \in \{0,1\}^{|\Pi|}$ to highlight that this is the payoff of $\sigma$ from vertex $v_0$. It should not be confused with any component $p_i$, $i \in \Pi$, of a payoff $p$.}
 \end{thm}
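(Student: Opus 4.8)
The plan is to prove the two directions of the equivalence separately, using one-shot deviations throughout thanks to Proposition~\ref{prop:equivWSPEetVWSPE}.

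For the ``only if'' direction I would prove the invariant that each set $\rP_k(v)$ over-approximates the payoffs of subgame outcomes of weak SPEs, i.e. for every weak SPE $\sigma$ in $(\mathcal{G},v_0)$ and every history $hv\in\Hist(v_0)$, the payoff of $\outcome{\rest{\sigma}{h}}{v}$ lies in $\rP_k(v)$ for all $k$. The base case $k=0$ is immediate since $\outcome{\rest{\sigma}{h}}{v}$ is itself a play from $v$. For the inductive step I distinguish the Remove and Adjust operations. For Remove: if $p$ is the payoff of $\outcome{\rest{\sigma}{h}}{v}$ with $v\in V_i$ and some successor $v'$ has $p'_i>p_i$ for all $p'\in\rP_{k-1}(v')$, then by the induction hypothesis applied to the subgame rooted at $hv v'$, every weak-SPE subgame outcome from $v'$ has $i$-payoff strictly above $p_i$; moving from $v$ to $v'$ (a one-shot deviation) and then following $\rest{\sigma}{hv}$ would be a profitable deviation for player~$i$, contradicting that $\sigma$ is a (very) weak SPE, so $p$ never labels such a $v$ for a genuine weak SPE and removing it is sound. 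For Adjust: the outcome $\outcome{\rest{\sigma}{h}}{v}$ is a play all of whose vertices $u$ carry the payoff $p$ (by the induction hypothesis applied at each prefix), hence it witnesses a $(p,k-1)$-labeled play with payoff $p$ from $v$, so $p$ is not removed from $\rP_k(v)$. This gives $p_0\in\rP_{k^*}(v_0)$, and applying the invariant at every reachable $hv$ shows $\rP_{k^*}(v)\neq\emptyset$ for all $v\in\Succ^*(v_0)$ (it contains at least the payoff of the corresponding subgame outcome).

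For the ``if'' direction, assume $\rP_{k^*}(v)\neq\emptyset$ for all $v\in\Succ^*(v_0)$ and fix $p_0\in\rP_{k^*}(v_0)$; I must construct a weak SPE with this payoff. The key structural fact I would extract from the fixpoint is: for every $v$ and every $p\in\rP_{k^*}(v)$, there is a $(p,k^*)$-labeled play $\lambda(v,p)$ from $v$ with payoff $p$ — this play can be taken to be a lasso, since the prefix-independent gain functions and the labeling constraint only depend on the set of vertices (all labeled by $p$) visited infinitely often, so one may pump it into a lasso staying inside the $p$-labeled region. Because the fixpoint is stable under Remove, these lassoes additionally satisfy the ``good property'' that at each vertex $w\in V_i$ on $\lambda(v,p)$, no successor $w'$ has $\rP_{k^*}(w')$ entirely above $p$ in coordinate $i$. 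I would then define a strategy profile $\sigma$ by a ``punishment'' construction typical of folk theorems: play along the lasso $\lambda(v_0,p_0)$; whenever a player $i$ deviates by moving from a vertex $w$ to some $w'\notin$ the prescribed edge, switch to following $\lambda(w', p)$ for a suitably chosen $p\in\rP_{k^*}(w')$ (this set is nonempty by hypothesis). Checking that $\sigma$ is a very weak SPE amounts to verifying, at each subgame, that a one-shot deviation from $w\in V_i$ to $w'$ cannot strictly improve player~$i$'s payoff: after such a move the continuation payoff is some $p'\in\rP_{k^*}(w')$, and the ``good property'' of $\lambda$ (i.e. $w$ survived the Remove operation at the fixpoint) guarantees $p'_i\le (\text{current payoff})_i$; prefix-independence ensures the finite prefix played before the deviation does not matter.

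The main obstacle I anticipate is the ``if'' direction — specifically, making the punishment construction precise enough that after \emph{any} one-shot deviation (including deviations occurring \emph{during} a punishment phase, nested deviations, etc.) the resulting continuation is always a $(p,k^*)$-labeled lasso from the current vertex whose payoff dominates, in the deviator's coordinate, what the deviator was getting. This requires care in choosing which $p\in\rP_{k^*}(w')$ to switch to (one cannot pick arbitrarily; one needs $p_i$ small, and the survival of $w$ under Remove is exactly what makes a good choice available), and in arguing that the chosen lasso genuinely stays in the fixpoint region — which is where the stability of $\rP_{k^*}$ under the Adjust operation is used. The Remove direction is comparatively routine induction; the subtlety there is only in correctly invoking the induction hypothesis on the strictly deeper subgames created by the one-shot deviation.
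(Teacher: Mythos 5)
Your proposal is correct and follows essentially the same route as the paper: the forward direction is the same induction showing each $\rP_k(v)$ over-approximates the payoffs of weak-SPE subgame outcomes (your Adjust step, which observes that the actual outcome is itself a $(p,k)$-labeled play, is if anything slightly cleaner than the paper's), and the converse is the same punishment construction from $(p,k^*)$-labeled plays with the deviator's coordinate minimized over $\rP_{k^*}(w')$, justified by stability under Remove. The only difference is packaging: the paper factors the punishment data into a ``good symbolic witness'' of polynomially many short lassoes, which it then reuses for the finite-memory and complexity results, whereas you describe the strategy profile directly.
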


In this theorem, only sets $\rP_{k^*}(v)$ with $v \in \Succ^*(v_0)$ are considered. Indeed subgames $(\rest{\mathcal{G}}{h},v)$ of $(\mathcal{G},v_0)$ deals with histories $hv \in \Hist(v_0)$, that is, with vertices $v$ reachable from $v_0$. The rest of this section is devoted to the proof of Theorem~\ref{folkThm}. 

We begin with a lemma that states that if a payoff $p$ survives at step $k$ in the labeling of $v$, this means that there exists a play with payoff $p$ from $v$ that only visits vertices also labeled by $p$.

\begin{lemma}
 \label{intermediaryResults:lemmaLabeledPlay}
 For all even $k$ and in particular for $k = k^*$, $p$ belongs to $\rP_k(v)$ if and only if there exists a $(p,k)$-labeled play $\rho \in \Plays(v)$ such that $\Gain(\rho) = p$.
 \end{lemma}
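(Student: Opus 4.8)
The plan is to prove both directions by induction on the even step $k$, using the structure of the Remove–Adjust procedure. For the base case $k=0$, the statement is essentially the definition of $\rP_0(v)$: if $p \in \rP_0(v)$ then by definition there is a play $\rho \in \Plays(v)$ with $\Gain(\rho)=p$; since $\Gain$ is prefix-independent, every suffix of $\rho$ also has payoff $p$, and more to the point every vertex $\rho_j$ visited by $\rho$ has a play from it (namely the suffix $\rho_j\rho_{j+1}\dots$) with payoff $p$, hence $p \in \rP_0(\rho_j)$ for all $j$, so $\rho$ is $(p,0)$-labeled. The converse is trivial since a $(p,0)$-labeled play with payoff $p$ witnesses $p \in \rP_0(v)$ directly.

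For the inductive step, assume the equivalence holds at step $k-2$ (even) and prove it at step $k$ (even). The key observation is that passing from $\rP_{k-2}$ to $\rP_k$ consists of one Remove operation (at odd step $k-1$) followed by one Adjust operation (at even step $k$), and the Adjust operation at step $k$ is precisely designed so that $p \in \rP_k(u)$ iff $p \in \rP_{k-1}(u)$ and there still exists a $(p,k-1)$-labeled play with payoff $p$ from $u$. So the $(\Rightarrow)$ direction at step $k$ is almost immediate from the definition of Adjust: if $p \in \rP_k(u)$ then there is a $(p,k-1)$-labeled play $\rho$ from $u$ with payoff $p$; I must then argue this play is in fact $(p,k)$-labeled, i.e. that every vertex $\rho_j$ it visits still has $p \in \rP_k(\rho_j)$. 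This is where I use prefix-independence again: the suffix $\rho_j\rho_{j+1}\dots$ is itself a $(p,k-1)$-labeled play with payoff $p$ from $\rho_j$, so by the Adjust definition $p$ is not removed from $\rho_j$ at step $k$. (One must also handle the degenerate cases where no payoff was removed at step $k-2$, or where $p$ was removed from $\rho_j$ itself at step $k-1$ — but the latter cannot happen since $\rho$ is $(p,k-1)$-labeled.) The $(\Leftarrow)$ direction is even easier: a $(p,k)$-labeled play with payoff $p$ from $v$ is in particular a $(p,k-1)$-labeled play with payoff $p$ (the sequence $\rP_k(\cdot)\subseteq\rP_{k-1}(\cdot)$ is nonincreasing), so by the Adjust definition $p$ is retained in $\rP_k(v)$, giving $p \in \rP_k(v)$.

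The main subtlety — and the step I expect to require the most care — is making sure the induction is set up at the right granularity: the Lemma is stated only for \emph{even} $k$, and indeed it can fail at odd steps (right after a Remove, a payoff $p$ removed from $v$ may still label some $u$ whose only payoff-$p$ plays go through $v$, and such a play is not $(p,k)$-labeled). So the induction must jump two steps at a time (from $k-2$ to $k$), treating the Remove-then-Adjust pair as a single unit, and I must verify that the Adjust operation at an even step $k$ genuinely enforces the "labeled-play" property for \emph{every} payoff simultaneously — in particular for payoffs $q \neq p$ that were not touched by the removal at step $k-2$, where $\rP_k = \rP_{k-1} = \rP_{k-2}$ on the relevant vertices so the inductive hypothesis transfers directly. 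Once the bookkeeping of which vertices and payoffs are affected at each parity is handled cleanly, the argument is routine; the special case $k = k^*$ follows since $k^*$ is even and $\rP_{k^*} = \rP_{k^*+2}$, so the equivalence at step $k^*$ is a direct instance of the even-step statement.
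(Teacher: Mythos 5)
Your proposal is correct and follows essentially the same route as the paper: the $(\Leftarrow)$ direction is immediate from the definition of a $(p,k)$-labeled play, and the $(\Rightarrow)$ direction is an induction on even $k$ in steps of two, splitting on whether the payoff removed at the intervening Remove step equals $p$, and using prefix-independence of the gains to show that every suffix of the $(p,k-1)$-labeled play supplied by the Adjust check is itself such a play, so that Adjust cannot delete $p$ along it. The only cosmetic difference is that you route the trivial $(\Leftarrow)$ direction through the Adjust definition, whereas taking $j=0$ in the definition of a labeled play suffices.
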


 \begin{proof}
 $(\Leftarrow)$ Suppose that there exists a $(p,k)$-labeled play $\rho = \rho_0\rho_1 \ldots \in \Plays(v)$ such that $\Gain(\rho) = p$. By definition of a $(p,k)$-labeled play, we have $p \in \rP_{k}(\rho_j)$ for all $j$, and so in particular for $j = 0$.
 
 $(\Rightarrow)$ Let us prove that if $p$ belongs to $\rP_k(v)$, then there exists a $(p,k)$-labeled play $\rho \in \Plays(v)$ such that $\Gain(\rho) = p$. We proceed by induction on $k$. For $k=0$, the assertion is satisfied by definition of $\rP_0(v)$ and because $\Gain_i$ is prefix-independent for all $i\in \Pi$. 
 
Suppose that the assertion is true for an even $k$ and let us prove that it remains true for $k+2$. Let $p \in \rP_{k+2}(v)$. As $\rP_{k+2}(v) \subseteq \rP_{k+1}(v) \subseteq \rP_{k}(v)$, we have $p \in \rP_{k}(v)$ and there exists a $(p,k)$-labeled play $\rho \in \Plays(v)$ such that $\Gain(\rho) = p$ by induction hypothesis. In other words $p \in \rP_{k}(\rho_j)$ for all $j$.  

We suppose that there exists $v’$ such that $\rP_{k+2}(v') \neq \rP_{k}(v')$ (the fixpoint is not reached), otherwise $p \in \rP_{k+2}(\rho_j)$ for all $j$ and $\rho$ is also a $(p,k+2)$-labeled play. Therefore the Remove operation has removed some payoff $p'$ from one $\rP_{k}(v')$ and the Adjust operation has possibly removed $p'$ from some other $\rP_{k}(u)$. If $p' \neq p$, then clearly $p$ still belongs to each $\rP_{k+2}(\rho_j)$ and $\rho$ is again a $(p,k+2)$-labeled play. If $p' = p$, then $v' \neq v$ since $p \in \rP_{k+2}(v)$ by hypothesis. Moreover, by the Adjust operation, this means that there exists a $(p,k+1)$-labeled play $\pi = \pi_0\pi_1\ldots$ with payoff $p$ from $v$ which never visits $v'$. Let us show that $\pi$ is also a $(p,k+2)$-labeled play, that is, $p \in \rP_{k+2}(\pi_j)$ for all $j$. Each suffix $\pi_j\pi_{j+1} \ldots$ of $\pi$ is a $(p,k+1)$-labeled play with payoff $p$ thanks to prefix-independence of $\Gain$. By the Adjust operation, it follows that $\rP_{k+2}(\pi_j) = \rP_{k+1}(\pi_j)$ for all $j$. This concludes the proof.
%
%
 \end{proof} 
 
The proof of Theorem~\ref{folkThm} uses the concept of (good) symbolic witness defined hereafter. Some intuition about it is given after the definitions.

\begin{defi}[Symbolic witness] 
	\label{def:symbolicWitnesses}
	 Let $(\mathcal{G},v_0)$ be an initialized Boolean game with prefix-independent gain functions. Let $I \subseteq (\Pi \cup \{0\}) \times V$ be the set 
	 $$I = \{(0,v_0)\} ~\cup~ \{ (i,v') \mid \text{ there exists } (v,v')\in E \text{ such that } v, v' \in \Succ^*(v_0) \text{ and } v \in V_i \}.$$
	 A  \emph{symbolic witness} is a set $\Wit = \{\rho_{i,v} \mid (i,v) \in I \}$ such that each $\rho_{i,v} \in \Wit$ is a lasso of $G$ with $\First(\rho_{i,v}) = v$ and with length bounded by $2 \cdot |V|^2$.
\end{defi}

A symbolic witness has thus at most $|V| \cdot |\Pi| + 1$ lassoes (by definition of $I$) with polynomial length.

\begin{defi}[Good symbolic witness]
	\label{def:Good}
	A symbolic witness $\Wit$ is \emph{good} if for all $\rho_{j,u}, \,\, \rho_{i,v'} \in \Wit$, for all vertices $v \in \rho_{j,u}$ such that $v \in V_i$ and $v' \in \Succ(v)$, we have 
	 $\Gain_i(\rho_{j,u}) \geq \Gain_i(\rho_{i,v'})$.
\end{defi}

The condition of Definition~\ref{def:Good} is depicted in Figure~\ref{fig:Good}.

\begin{figure}[h!]
	\centering
	\begin{tikzpicture}
		\node[draw,circle,minimum width=20pt](u) at (0,0){$u$};
		\node(point1) at (1.5,0){$\ldots$};
		\node[draw,circle,minimum width=20pt](v) at (3,0){$v$};
		\node(vi) at (3,-0.7){$\in V_i$};
		\node[draw,circle,minimum width=20pt](v') at (3.5,1){$ v'$};
		\node(point2) at (5,1){$\ldots$};
		\node[draw,circle,minimum width=20pt](blanc1) at (6.5,1){}; 
		\node(point3) at (8,1){$\ldots$};
		\node[draw,circle,minimum width=20pt](blanc2) at (9.5,1){}; 
		\node (path1) at (10.5,1){$\rho_{i,v'}$};
		\node(point4) at (5,0){$\ldots$};
		\node[draw,circle,minimum width=20pt](blanc3) at (6.5,0){}; 
		\node (point5) at (8,0){$\ldots$};
		\node[draw,circle,minimum width=20pt](blanc4) at (9.5,0){};
		\node (path2) at (10.5,0){$\rho_{j,u}$};	
		
		\draw[->] (u) -- (1,0);
		\draw[->] (2,0) -- (v);
		\draw[->] (v) -- (v');
		\draw[->] (v') -- (4.5,1);
		\draw[->] (5.5,1) --(blanc1);
		\draw[->] (blanc1) -- (7.5,1);
		\draw[->] (8.5,1) -- (blanc2);
		\draw[->] (blanc2) to [bend right] (blanc1);
		
		\draw[->] (v) -- (4.5,0);
		\draw[->] (5.5,0) -- (blanc3);
		\draw [->] (blanc3) -- (7.5,0);
		\draw[->] (8.5,0) -- (blanc4);
		\draw[->] (blanc4) to [bend left] (blanc3);
	\end{tikzpicture}
	\caption{The condition of Definition~\ref{def:Good}}
	\label{fig:Good}
\end{figure}
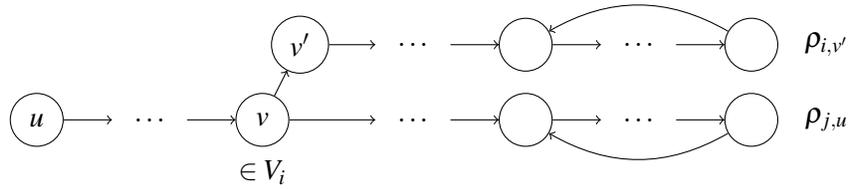

Let us now give some intuition. A strategy profile $\sigma$ in $(\mathcal{G},v_0)$ induces an infinite number of subgame outcomes $\outcome{\rest{\sigma}{h}}{v}$, $hv \in \Hist(v_0)$. A symbolic witness $\Wit$ is a \emph{compact} representation of $\sigma$. It is a finite set of lassoes that represent some subgame outcomes of $\sigma$: the lasso $\rho_{0,v_0}$ of $\Wit$ represents the outcome $\outcome{\sigma}{v_0}$, and its other lassoes $\rho_{i,v'}$ represents the subgame outcome $\outcome{\rest{\sigma}{h}}{v'}$ for some particular histories $hv' \in \Hist(v_0)$. The index $i$ records that player~$i$ can move from $v$ (the last vertex of $h$) to $v'$ (with the convention that $i = 0$ for the outcome $\outcome{\sigma}{v_0}$). When $\sigma$ is a weak SPE, the related symbolic witness $\Wit$ is good, that is, its lassoes avoid profitable one-shot deviations between them. 

\begin{example} 
We come back to our running example. The weak SPE of Example~\ref{ex:weakSPE} depicted in Figure~\ref{fig:weakNEvsNE} has payoff $p = (0,1)$. A symbolic witness $\Wit$ of $\sigma$ is given in Table~\ref{table:setOfSymbolicWitnesses} which is here composed of all the subgame outcomes of $\sigma$. One can check that $\Wit$ is a good symbolic witness. For instance, consider its lassoes $\rho_{0,v_0} = v_0v_1v_2v_3^\omega$ and $\rho_{1,v_1} = v_1v_2v_3^\omega$, the vertex $v_2 \in V_1$ of $\rho_{0,v_0}$ and the edge $(v_2,v_1)$. We have $\Gain_1(\rho_{0,v_0}) \geq \Gain_1(\rho_{1,v_1})$. Indeed in the subgame $(\rest{\mathcal{G}}{v_0v_1},v_2)$, player~$1$ has no profitable one-shot deviation by using the edge $(v_2,v_1)$.
\begin{table}[h!]
	\centering
	\caption{An example of good symbolic witness}
	\begin{tabular}{|l|l|l|l|l|l|l|l|l|l|}
		\hline
		$(i,v)$ & $(0,v_0)$ &$(2,v_4)$ & $(1,v_2)$ & $(1,v_1)$ & $(1,v_3)$ & $(2,v_5)$ & $(2,v_6)$ & $(1,v_5)$ & $(1,v_6)$ \\
		\hline
		lasso & $v_0v_1v_2v_3^\omega$ & $v_4v_5^\omega$ & $v_2v_3^\omega$ & $v_1v_2v_3^\omega$ &  $v_3^\omega$ &  $v_5^\omega$  & $v_6^\omega$ & $v_5^\omega$ & $v_6^\omega$ \\
		\hline
		payoff & $(0,1)$ & $(0,1)$ & $(0,1)$ & $(0,1)$ & $(0,1)$& $(0,1)$ & $(0,0)$ & $(0,1)$ & $(0,0)$\\
		\hline
	\end{tabular}
		\label{table:setOfSymbolicWitnesses}
\end{table}
\end{example}


In Proposition~\ref{prop:lasseFolkThm} below, we are going to prove that there exists a weak SPE if and only if there exists a good symbolic witness, and that the existence of this witness is equivalent to the non-emptiness of the fixpoint $\rP_{k^*}(v)$, $v \in V$. In this way, we will prove Theorem~\ref{folkThm}. We will see that the lassoes $\rho_{i,v}$ of a good symbolic witness can be constructed from $(p,k^*$)-labeled plays for well-chosen payoffs $p \in \rP_{k^*}(v)$.

%
%
%
%
%
%
%

\begin{prop}
	\label{prop:lasseFolkThm}
	 Let $(\mathcal{G},v_0)$ be an initialized Boolean game with prefix-independent gain functions. The following assertions are equivalent:
	\begin{enumerate}
		\item There exists a weak SPE with payoff $p_0$ in $(\mathcal{G},v_0)$; \label{Lasse:1}
		\item $\rP_{k^*}(v) \neq \emptyset$ for all $v \in \Succ^*(v_0)$ and $p_0 \in \rP_{k^*}(v_0);$ \label{Lasse:2}
		\item There exists a good symbolic witness $\Wit$ that contains a lasso $\rho_{0,v_0}$ with payoff $p_0$;  \label{assert:existenceWitnesses}
		\item There exists a finite-memory weak SPE $\sigma$ with payoff $p_0$ in $(\mathcal{G},v_0)$ such that the size of each strategy $\sigma_i$ is in $\mathcal{O}(|V|^3 \cdot |\Pi|)$. \label{assert:existenceWSPEpoly}
	\end{enumerate}	
\end{prop}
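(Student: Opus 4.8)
The plan is to prove the cyclic chain $(\ref{Lasse:1})\Rightarrow(\ref{Lasse:2})\Rightarrow(\ref{assert:existenceWitnesses})\Rightarrow(\ref{assert:existenceWSPEpoly})\Rightarrow(\ref{Lasse:1})$, the last implication being immediate since a finite-memory weak SPE is a weak SPE. Two observations will be used throughout. First, by Proposition~\ref{prop:equivWSPEetVWSPE}, to certify that a strategy profile is a weak SPE it suffices to forbid \emph{one-shot} deviations in every subgame. Second, since the gain functions are prefix-independent, the subgame $(\rest{\mathcal{G}}{h},v)$ is literally $(\mathcal{G},v)$, so $\rest{\sigma}{h}$ is a weak SPE in $(\mathcal{G},v)$ whenever $\sigma$ is a weak SPE in $(\mathcal{G},v_0)$ and $hv\in\Hist(v_0)$; hence ``$p$ is the payoff of a subgame outcome of a weak SPE starting in $v$'' is the same as ``$p$ is the payoff of a weak SPE in $(\mathcal{G},v)$''.

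For $(\ref{Lasse:1})\Rightarrow(\ref{Lasse:2})$ I would prove, by induction on $k$, that if there is a weak SPE $\sigma$ in $(\mathcal{G},v)$ of payoff $p=\Gain(\outcome{\sigma}{v})$, then $p\in\rP_k(v)$; the base case $k=0$ is the definition of $\rP_0$. To see that a Remove step cannot delete $p$ from the label of $v$: for $v\in V_i$ and $v'\in\Succ(v)$, either $v'=\sigma_i(v)$, in which case $\rest{\sigma}{vv'}$ is a weak SPE in $(\mathcal{G},v')$ of payoff $p$, or $v'\neq\sigma_i(v)$, in which case the unprofitability of the one-shot move $v\to v'$ for player~$i$ makes $\rest{\sigma}{vv'}$ a weak SPE in $(\mathcal{G},v')$ of some payoff $q$ with $q_i\leq p_i$; in both cases the induction hypothesis places a payoff with $i$-th component $\leq p_i$ inside $\rP_{k-1}(v')$, so it is false that $p_i<p'_i$ for all $p'\in\rP_{k-1}(v')$ and $p$ survives. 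To see that an Adjust step cannot delete $p$ from the label of $v$: every suffix of $\outcome{\sigma}{v}$ is the outcome of a weak SPE in $(\mathcal{G},w)$ of payoff $p$ for the corresponding vertex $w$, so by the induction hypothesis $\outcome{\sigma}{v}$ is a $(p,k-1)$-labeled play of payoff $p$ from $v$, which is exactly what the Adjust operation tests. Taking $k=k^*$, this applied to $v_0$ gives $p_0\in\rP_{k^*}(v_0)$, and applied to each $v\in\Succ^*(v_0)$ (through a history reaching $v$) gives $\rP_{k^*}(v)\neq\emptyset$.

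For $(\ref{Lasse:2})\Rightarrow(\ref{assert:existenceWitnesses})$ I would manufacture the lassoes of $\Wit$ from the fixpoint, using Lemma~\ref{intermediaryResults:lemmaLabeledPlay}. Set $q_{0,v_0}:=p_0$ and, for each $(i,v')\in I$ with $i\in\Pi$, pick $q_{i,v'}\in\rP_{k^*}(v')$ with a \emph{minimal} $i$-th component. By Lemma~\ref{intermediaryResults:lemmaLabeledPlay} there is, from the relevant vertex, a $(q,k^*)$-labeled play of payoff $q$; from it I extract a lasso starting in that vertex, of the same payoff, staying inside the set of vertices labeled by $q$ at step $k^*$, and of length at most $2|V|^2$ (a shortest path into the set of vertices visited infinitely often followed by a cycle covering that set, which is a strongly connected subgraph lying in the label set, and whose $\Inf$-set determines the payoff for the objectives we consider). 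To see that the resulting symbolic witness is good, take $\rho_{j,u},\rho_{i,v'}\in\Wit$ and $v\in\rho_{j,u}$ with $v\in V_i$, $v'\in\Succ(v)$: if $\Gain_i(\rho_{j,u})=1$ there is nothing to prove; if $\Gain_i(\rho_{j,u})=0$, then $q_{j,u}\in\rP_{k^*}(v)$ has $i$-th component $0$, and if $q_{i,v'}$ had $i$-th component $1$ then minimality would force every payoff of $\rP_{k^*}(v')$ to have $i$-th component $1$, so a Remove step applied to $v,v',i,q_{j,u}$ at stage $k^*+1$ would shrink $\rP_{k^*}(v)$, contradicting that $k^*$ is a fixpoint; hence $\Gain_i(\rho_{i,v'})=0\leq\Gain_i(\rho_{j,u})$.

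For $(\ref{assert:existenceWitnesses})\Rightarrow(\ref{assert:existenceWSPEpoly})$ I would use the standard ``follow the current witness lasso, switch on deviation'' construction: all players follow $\rho_{0,v_0}$, and whenever the play leaves the lasso $\rho_{j,u}$ currently followed because some player~$i$ moves from a vertex $v\in V_i$ of that lasso to a successor $v'$ other than the prescribed one (so $(i,v')\in I$), everyone switches to following $\rho_{i,v'}$. This profile is finite-memory, the memory being ``current lasso and position in it'', whence of size $\mathcal{O}\big((|V|\,|\Pi|+1)\cdot 2|V|^2\big)=\mathcal{O}(|V|^3\,|\Pi|)$; its outcome from $v_0$ is $\rho_{0,v_0}$, of payoff $p_0$. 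By Proposition~\ref{prop:equivWSPEetVWSPE} it remains to exclude profitable one-shot deviations in every subgame, which goodness of $\Wit$ (Definition~\ref{def:Good}) yields directly: after a one-shot move of player~$i$ off the currently followed $\rho_{j,u}$ at a vertex $v\in V_i$ toward $v'$, the continuation follows $\rho_{i,v'}$, so by prefix-independence player~$i$'s payoff becomes $\Gain_i(\rho_{i,v'})\leq\Gain_i(\rho_{j,u})$, i.e.\ at most his payoff had he not deviated. I expect the main obstacle to be $(\ref{Lasse:2})\Rightarrow(\ref{assert:existenceWitnesses})$: the payoffs $q_{i,v}$ carried by the lassoes must be chosen in a globally consistent way so that goodness holds for \emph{all} pairs at once --- the ``minimal $i$-th component'' choice together with the fact that the Remove--Adjust sequence has stabilized at $k^*$ is precisely what makes this work --- and, separately, one must extract from an infinite $(q,k^*)$-labeled play a lasso of length at most $2|V|^2$ realizing the prescribed payoff; the induction in $(\ref{Lasse:1})\Rightarrow(\ref{Lasse:2})$ is more bookkeeping than genuine difficulty.
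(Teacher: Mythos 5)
Your proposal is correct and follows essentially the same route as the paper: the same cyclic chain $1\Rightarrow2\Rightarrow3\Rightarrow4\Rightarrow1$, the same induction showing weak-SPE payoffs survive Remove and Adjust, the same construction of a good symbolic witness from $(p,k^*)$-labeled lassoes with minimal $i$-th components (whose goodness follows from the fixpoint), and the same ``switch to $\rho_{i,v'}$ on deviation'' finite-memory profile. Your handling of the Adjust step (observing that the subgame outcome itself is a $(p,k-1)$-labeled play) is a slightly more direct variant of the paper's argument, and your explicit lasso extraction replaces the paper's citation of a known construction, but neither changes the substance.
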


\begin{proof}
We prove that $1 \Rightarrow 2 \Rightarrow 3 \Rightarrow 4 \Rightarrow 1$. 

$(1 \Rightarrow 2)$ Suppose that there exists a weak SPE $\sigma$ with payoff $p_0$ in $(\mathcal{G},v_0)$. To show that $\rP_{k^*}(v) \neq \emptyset$ for all $v \in \Succ^*(v_0)$, let us prove by induction on $k$ that 
\begin{align} \label{eq:nonempty}
\Gain(\outcome{\rest{\sigma}{h}}{v}) \in \rP_k(v) \text{ for all } hv \in \Hist(v_0).
\end{align}
For case $k = 0$, this is true by definition of $\rP_0(v)$. 
Suppose that this assertion is satisfied for an even $k$. Let us prove that it remains true for $k+2$ by showing that payoff $p =\Gain(\outcome{\rest{\sigma}{h}}{v}) \in \rP_k(v)$ can be removed neither from $\rP_{k}(v)$ at step $k+1$, nor from $\rP_{k+1}(v)$ at step $k+2$. 
	 \begin{itemize}
	 \item Payoff $p$ cannot be removed from $\rP_{k}(v)$ by the Remove operation at step $k+1$. Otherwise, if $v \in V_i$, this means that there exists $v'\in \Succ(v)$ such that 
	 \begin{align}\forall p'\in \rP_{k}(v'),\, p_i < p'_i. \label{eq1}\end{align}

	 By induction hypothesis, \begin{align} \Gain(\outcome{\rest{\sigma}{hv}}{v'}) \in \rP_{k}(v'). \label{eq2}\end{align}
	To get a contradiction, we prove that in the subgame $(\rest{\mathcal{G}}{h},v)$ there exists a one-shot deviating strategy $\sigma'_i$ from $\sigma_{i\restriction h}$ that is a profitable deviation for player~$i$. We define $\sigma'_i$ that only differs from $\sigma_{i\restriction h}$ on $v$: $\sigma'_i(v) = v'$. Therefore we get $\Gain(h \outcome{\sigma'_i, \sigma_{-i\restriction h}}{v}) = \Gain (hv \outcome{\rest{\sigma}{hv}}{v'})$. It follows by \eqref{eq1}, \eqref{eq2}, and prefix-independence of $\Gain_i$ that $\Gain_i( h \outcome{\rest{\sigma}{h}}{v}) = p_i < p'_i = \Gain_i (hv \outcome{\rest{\sigma}{hv}}{v'}) = \Gain(h \outcome{\sigma'_i, \sigma_{-i\restriction h}}{v})$. 
%
%
	 This is impossible since $\sigma$ is a weak SPE.

	\item Payoff $p$ cannot be removed from $\rP_{k+1}(v)$ by the Adjust operation at step $k+2$. Otherwise, this means that there exists $u$ such that $\rP_{k+1}(u) = \rP_{k}(u) \setminus \{p\}$ (by the Remove operation at step $k+1$) and there is no $(p,k+1)$-labeled play with payoff $p$ from $v$. However by Lemma~\ref{intermediaryResults:lemmaLabeledPlay}, as $p \in \rP_k(v)$, there exists a $(p,k)$-labeled play $\pi$ with payoff $p$ from $v$. This means that $\pi$ visit $u$. Let $h'u \in Hist(v)$ such that $h'u < \pi$. Then we get a contradiction with $\sigma$ being a weak SPE if we repeat the argument done before in the previous item for $u$ and the subgame $(\rest{\mathcal{G}}{hh'},u)$ (instead of $v$ and $(\rest{\mathcal{G}}{h},v)$).
	
	 \end{itemize}
	
Now that we know that $\rP_{k^*}(v) \neq \emptyset$ for all $v \in \Succ^*(v_0)$, it remains to prove that $p_0 \in \rP_{k^*}(v_0)$. By~\eqref{eq:nonempty}, we have $p_0 = \Gain(\outcome{\sigma}{v_0}) \in \rP_{k^*}(v_0)$.

$(2 \Rightarrow 3)$ Let us show how to build a good symbolic witness $\Wit$ from the non-empty fixpoint $\rP_{k^*}(v)$, $v \in V$. First recall that if $p \in \rP_{k^*}(v)$, then by Lemma~\ref{intermediaryResults:lemmaLabeledPlay} there exists a $(p,k^*)$-labeled play with payoff $p$ from $v$. Notice that such a play can be supposed to be a lasso with length at most $2 \cdot |V|^2$. Indeed it is proved in~\cite[Proposition 3.1]{BouyerBMU15} that given a play~$\rho$, one can construct a lasso $\rho'$ of length bounded by $2 \cdot |V|^2$ such that $\First(\rho) = \First(\rho')$, $\Occ(\rho) = \Occ(\rho')$, and $\Inf(\rho) = \Inf(\rho')$ (this construction eliminates some cycles of $\rho$ in a clever way). Therefore, if $\rho$ is a $(p,k^*)$-labeled play with payoff $p$ from $v$, the lasso $\rho'$ is also a $(p,k^*)$-labeled play with payoff $p$ from $v$. The required set $\Wit$ will be composed of some of these lassoes.

We start with $\Wit = \emptyset$. As $p_0 \in \rP_{k^*}(v_0)$, then there exists a $(p_0,k^*)$-labeled lasso $\rho_{0,v_0}$ with payoff $p_0$ from $v_0$ that we add to $\Wit$. For all $v, v' \in \Succ^*(v_0)$ such that $v \in V_i$ and $v' \in Succ(v)$, let $p'$ be a payoff in $\rP_{k^*}(v')$ such that 
\begin{align}
p'_i = \Min\{ q_i \mid q \in \rP_{k^*}(v') \}. \label{eq3}
\end{align} 
This payoff exists since $\rP_{k^*}(v') \neq \emptyset$ by hypothesis. Then there exists a $(p',k^*)$-labeled lasso $\rho_{i,v'}$ with payoff $p'$ from $v'$ that we add to $\Wit$.

This set $\Wit$ is a symbolic witness by construction. It remains to prove that it is good. Let $v \in \rho_{j,u}$ such that $v \in V_i$ and $\rho_{j,u} \in \Wit$. As $\rho_{j,u}$ is a $(p,k^*)$-labeled lasso for some payoff $p$, we have $p \in \rP_{k^*}(v)$. Furthermore, as $\rP_{k^*}(v) =  \rP_{k^*+1}(v)$ (by the fixpoint), this means that $p$ was not removed from $\rP_{k^*}(v)$ by the Remove operation at step $k^*$. In particular, by definition of the payoff $p'$ of $\rho_{i,v'}$ (see \eqref{eq3}), we have $p_i \geq p'_i$, that is $\Gain_i(\rho_{j,u}) \geq \Gain_i(\rho_{i,v'})$. This shows that $\Wit$ is a good symbolic witness.
	

\medskip
$(3 \Rightarrow 4)$ Let $\Wit = \{\rho_{i,v} \mid (i,v) \in I \}$ be a good symbolic witness that contains a lasso $\rho_{0,v_0}$ with payoff $p_0$. We define a strategy profile $\sigma$ step by step by induction on the subgames of $(\mathcal{G},v_0)$. We first partially build $\sigma$ such that $\outcome{\sigma}{v_0} = \rho_{0,v_0}$. Consider next $hvv' \in \Hist(v_0)$ with $v \in V_i$ such that $\outcome{\rest{\sigma}{h}}{v}$ is already built but not $\outcome{\rest{\sigma}{hv}}{v'}$. Then we extend the definition of $\sigma$ such that 
\begin{align}
\outcome{\rest{\sigma}{hv}}{v'} = \rho_{i,v'}. \label{eq4}
\end{align} 
Notice that $\outcome{\rest{\sigma}{h}}{v}$ being already built means that there exists $h' \leq h$ and $(j,u) \in I$ such that
\begin{align}
h' \outcome{\rest{\sigma}{h'}}{u} =  h'\rho_{j,u} = h\outcome{\rest{\sigma}{h}}{v}. \label{eq5}
\end{align} 

%

		Let us prove that $\sigma$ is a very weak SPE (and so a weak SPE by Proposition~\ref{prop:equivWSPEetVWSPE}). Consider the subgame $(\rest{\mathcal{G}}{h},v)$ and the one-shot deviating strategy $\sigma'_i$ from $\sigma_{i\restriction h}$ such that $\sigma'_i(v) = v'$. We have to prove that 
\begin{align}		
	\Gain_i(h \outcome{\rest{\sigma}{h}}{v}) \geq \Gain_i (hv \outcome{\rest{\sigma}{hv}}{v'}). \label{eq6}
\end{align} 
By \eqref{eq4}, \eqref{eq5}, and prefix independence of $\Gain_i$, we have 
\begin{align*} 
\Gain_i (hv \outcome{\rest{\sigma}{hv}}{v'}) &= \Gain_i({\rho_{i,v'}}),&  \\
\Gain_i(h \outcome{\rest{\sigma}{h}}{v}) &= \Gain_i(\rho_{j,u}).& 
\end{align*}
Inequality~\eqref{eq6} follows from these equalities and the fact that $\Wit$ is a good symbolic witness (see Figure~\ref{fig:Good}). 

Notice that $\sigma$ has payoff $p_0$ by construction ($\rho_{0,v_0}$ has payoff $p_0$). It remains to show that $\sigma$ is finite-memory. Having $(j,u)$ in memory (the last deviating player $j$ and the vertex $u$ where he moved), the Moore machines $\mathcal{M}_i$, $i \in \Pi$, representing each strategy $\sigma_i$, have to produce together the lasso $\rho_{j,u}$ of length bounded by $2 \cdot |V|^2$. As $|{\mathcal P}| = |I| \leq |V| \cdot |\Pi| + 1$, the size of each $\sigma_i$ is in $\mathcal{O}(|\Pi|\cdot |V|^3)$.
		
%
%
%
%
%
%
%
%
%
%
%
%
%
%

$(4 \Rightarrow 1$) This implication is obvious.
\end{proof}

\begin{example} We consider again the running example. Thanks to Theorem~\ref{folkThm}, payoff $p_0 = (0,1)$ is the only possible payoff for a weak SPE in $(\mathcal{G},v_0)$ since $\rP_{k^*}(v_0)= \{(0,1)\}$ (see Example~\ref{exemple:pointFixe}). Let us illustrate the constructions of the proof of Proposition~\ref{prop:lasseFolkThm}.	
We build a good symbolic witness $\Wit$ as follows. First, we choose a $(p_0,k^*)$-labeled lasso $\rho_{0,v_0}$ with payoff $p_0$ from $v_0$: $\rho_{0,v_0} = v_0v_1v_2v_3^\omega$ (we could also have chosen $v_0v_4v_5^\omega$). Then let $v = v_1, v' = v_2 \in \Succ^*(v_0)$ such that $v \in V_1$ and $v' \in \Succ(v)$. As $\rP_{k^*}(v_2) = \{(1,0),(0,1)\}$, by \eqref{eq3}, we choose a $(p',k^*)$-labeled lasso $\rho_{1,v_2}$ with payoff $p' = (0,1)$ from $v_2$, for instance, $\rho_{1,v_2} = v_2v_3^\omega$.
The whole set $\Wit$ is depicted in Table~\ref{table:setOfSymbolicWitnesses}. From $\Wit$, we get the weak SPE of Example~\ref{ex:weakSPE} depicted in Figure~\ref{fig:weakNEvsNE}.
\end{example}

Theorem~\ref{folkThm} directly follows from Proposition~\ref{prop:lasseFolkThm}. Moreover this proposition highlights a property that will be very useful in Section~\ref{sec:classes} (the equivalence between the existence of a weak SPE and the existence of a good symbolic witness). Finally it shows that when the gain fonctions are prefix-independent, if there exists a weak SPE with a given payoff, then there always exists one with the same payoff but with strategies of polynomial size. We prove in Section~\ref{sec:cor} that for Reachability and Safety objectives which are not prefix-independent, we have the same result however with strategies of exponential size.

\begin{corollary}
\label{cor:finitememory}
Let $(\mathcal{G},v_0)$ be an initialized Boolean game. There exists a weak SPE in $(\mathcal{G},v_0)$ if and only if there exists a finite-memory weak SPE $\sigma$ with the same payoff. Moreover, the size of each strategy $\sigma_i$ is 
\begin{itemize}
\item in $\mathcal{O}(|V|^3 \cdot |\Pi|)$ for B\"uchi, Co-B\"uchi, Parity, Explicit Muller, Muller, Rabin, and Streett objectives,
\item in $\mathcal{O}(|V|^3 \cdot |\Pi| \cdot 2^{3\cdot |\Pi|} )$ for Reachability and Safety objectives.
\end{itemize}

\end{corollary}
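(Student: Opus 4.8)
The plan is to derive Corollary~\ref{cor:finitememory} from Proposition~\ref{prop:lasseFolkThm} together with a reduction that handles the prefix-dependent objectives. For the first bullet, the work is already done: B\"uchi, Co-B\"uchi, Parity, Explicit Muller, Muller, Rabin, and Streett objectives are all prefix-independent (as noted right after Definition~\ref{def:Good}, or rather after the definition of classical $\omega$-regular objectives), so Proposition~\ref{prop:lasseFolkThm} applies directly. The equivalence $1 \Leftrightarrow 4$ of that proposition gives exactly the statement: if there is a weak SPE, there is a finite-memory one with the same payoff and with each strategy $\sigma_i$ of size $\mathcal{O}(|V|^3 \cdot |\Pi|)$. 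So for this case I would simply invoke Proposition~\ref{prop:lasseFolkThm}.

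For the second bullet, the obstacle is that Reachability and Safety objectives are \emph{not} prefix-independent, so Proposition~\ref{prop:lasseFolkThm} cannot be applied to $(\mathcal{G},v_0)$ as is. The standard trick I would use is to encode the relevant finite information into the vertices: build a new game $(\mathcal{G}',v_0')$ whose vertex set is $V' = V \times 2^{\Pi}$, where the second component $S$ records the set of players whose target set has already been visited (for Reachability) resp.\ whose ``bad'' set has already been visited (for Safety). The edge relation updates $S$ deterministically along each move, and the gain functions on $\mathcal{G}'$ depend only on $\Inf$ of the second component (it stabilizes), hence are prefix-independent. Crucially, $(\mathcal{G},v_0)$ and $(\mathcal{G}',v_0')$ are ``the same game'' up to the bijection between plays, and this bijection preserves payoffs, the subgame structure, one-shot deviations, and therefore weak SPEs: a weak SPE in one lifts/projects to a weak SPE in the other with the same payoff. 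This is the kind of routine-but-careful bookkeeping I would spell out, using Proposition~\ref{prop:equivWSPEetVWSPE} to reduce to one-shot deviations so that only the current vertex (now carrying the set $S$) matters.

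Once the equivalence of the two games is established, I apply the first bullet (equivalently, Proposition~\ref{prop:lasseFolkThm}) to the prefix-independent game $(\mathcal{G}',v_0')$: there is a finite-memory weak SPE $\sigma'$ in $(\mathcal{G}',v_0')$ with strategies of size $\mathcal{O}(|V'|^3 \cdot |\Pi|) = \mathcal{O}(|V|^3 \cdot 2^{3|\Pi|} \cdot |\Pi|)$. Pulling $\sigma'$ back to $(\mathcal{G},v_0)$ yields a finite-memory weak SPE $\sigma$ with the same payoff; its Moore machine must additionally track the component $S \in 2^{\Pi}$, which multiplies the memory by at most $2^{|\Pi|}$, giving the claimed bound $\mathcal{O}(|V|^3 \cdot |\Pi| \cdot 2^{3|\Pi|})$. (Since $S$ is a function of the history alone, it is in fact already subsumed by tracking the pair $(j,u) \in I$ over the lifted vertex set, so no extra factor beyond the lifting is needed; I would state it whichever way is cleanest.)

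The main obstacle I expect is not any single hard argument but getting the product-game reduction exactly right: verifying that the map on plays is a bijection compatible with the prefix order, that subgames correspond to subgames, and that one-shot profitable deviations are preserved in both directions so that the weak SPE property transfers. This is precisely where Proposition~\ref{prop:equivWSPEetVWSPE} earns its keep, since checking the weak SPE property through one-shot deviations only requires matching the ``current configuration'' $(v,S)$ rather than entire histories.
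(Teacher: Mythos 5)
Your proposal is correct and follows essentially the same route as the paper: the prefix-independent cases are read off directly from Proposition~\ref{prop:lasseFolkThm}, and the Reachability/Safety cases go through the very same product construction $V' = V \times 2^{\Pi}$ (the one used in the proof of Proposition~\ref{prop:PSPACE}) to reduce to B\"uchi/Co-B\"uchi objectives, yielding the bound $\mathcal{O}(|V'|^3 \cdot |\Pi|) = \mathcal{O}(|V|^3 \cdot |\Pi| \cdot 2^{3|\Pi|})$. Your extra care about the payoff-preserving bijection of plays and the transfer of one-shot deviations is exactly the bookkeeping the paper leaves implicit.
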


\section{Complexity classes of the constraint problem}
\label{sec:classes}

In this section, we study the complexity classes of the constraint problem for Boolean games with classical $\omega$-regular objectives, except the case of Explicit Muller objectives that is postponed to Section~\ref{sec:ExplicitMuller} (see Table~\ref{tab:complexity}). The concept of good symbolic witness is essential in this study.

\subsection{NP-completeness}
\label{sec:NP}

We first prove that the constraint problem for co-Büchi, Parity, Muller, Rabin, and Streett objectives is NP-complete, and that it is in NP for B\"uchi objectives. 

\begin{thm}
\label{thm:NP}
The constraint problem for Boolean games with co-Büchi, Parity, Muller, Rabin, and Streett objectives is NP-complete. It is in NP for B\"uchi objectives. 
\end{thm}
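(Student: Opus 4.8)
\textbf{Plan for the proof of Theorem~\ref{thm:NP}.}
The plan is to prove NP-membership uniformly for all the listed objectives and NP-hardness separately (using a reduction that works for Co-B\"uchi, Parity, Muller, Rabin, and Streett but not for B\"uchi). For the \emph{membership} part, I would exploit the equivalence between items~\ref{Lasse:1} and~\ref{assert:existenceWitnesses} of Proposition~\ref{prop:lasseFolkThm}: a weak SPE with payoff $p$ such that $x \leq p \leq y$ exists if and only if there is a good symbolic witness $\Wit$ whose distinguished lasso $\rho_{0,v_0}$ has payoff $p$ with $x \leq p \leq y$. A symbolic witness consists of at most $|V| \cdot |\Pi| + 1$ lassoes, each of length bounded by $2 \cdot |V|^2$, so $\Wit$ has polynomial size and can be guessed in polynomial time. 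It then remains to check in polynomial time that: (i) each guessed $\rho_{i,v}$ is a genuine lasso of $G$ starting in $v$; (ii) $\Wit$ is good, i.e.\ for every pair $\rho_{j,u}, \rho_{i,v'} \in \Wit$ and every $v \in \rho_{j,u}$ with $v \in V_i$ and $v' \in \Succ(v)$, we have $\Gain_i(\rho_{j,u}) \geq \Gain_i(\rho_{i,v'})$; and (iii) the payoff $p_0$ of $\rho_{0,v_0}$ satisfies $x \leq p_0 \leq y$. The only non-immediate point is that, for each of the nine objective types, the gain $\Gain_i(\rho)$ of a \emph{lasso} $\rho = h\ell^\omega$ is computable in polynomial time: indeed $\Inf(\rho)$ is exactly the set of vertices occurring in $\ell$, and from $\Inf(\rho)$ (and $\Occ(\rho)$ for Reachability/Safety, which are not in scope here) one evaluates the B\"uchi, Co-B\"uchi, Parity, Rabin, Streett, and (Explicit) Muller winning condition directly; this handles (i)--(iii), and condition (ii) involves only polynomially many pairs, so the whole verification is polynomial. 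Hence the constraint problem is in NP for all these objectives, including B\"uchi.

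\textbf{Plan for NP-hardness.}
For the lower bound I would reduce from a known NP-complete problem on graph games. The natural candidate is to encode satisfiability-style constraints into a Boolean game so that a weak SPE with a prescribed payoff profile exists iff the instance is positive; the reduction should produce games in which the notion of weak SPE is easy to control, e.g.\ by using gadgets where deviating strategies cannot help. I expect the cleanest route is to reduce from SAT (or from the NP-complete problem of deciding whether a two-player game with a conjunction/disjunction of Co-B\"uchi conditions is won), building one player per clause or per variable and choosing the thresholds $x,y$ so that the constraint forces a consistent truth assignment; prefix-independence of Co-B\"uchi/Parity/Muller/Rabin/Streett makes the construction go through, while the Streett-to-Rabin duality and the standard encodings of Co-B\"uchi and Parity as Rabin/Streett conditions let one transfer the hardness across the family (alternatively one gives the reduction directly for each class). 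The reason B\"uchi is excluded is that a single B\"uchi condition is too weak to encode the needed combinatorial constraint in this framework, which is exactly why the paper leaves B\"uchi hardness open.

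\textbf{Main obstacle.}
The routine part is the NP-membership argument, which is essentially bookkeeping on top of Proposition~\ref{prop:lasseFolkThm}. The delicate part is the hardness reduction: one must design the game graph and the objectives so that (a) \emph{every} weak SPE — not merely some strategy profile — of the right payoff corresponds to a solution of the source instance, and (b) spurious weak SPEs with an out-of-range payoff do not exist or are irrelevant, i.e.\ one needs tight control over the fixpoint $\rP_{k^*}$ of the source game. Getting the gadgets to be simultaneously expressible as Co-B\"uchi, Parity, Muller, Rabin, \emph{and} Streett objectives (or giving a small number of reductions that cover them via the standard inclusions among these objective types) is where most of the real work lies.
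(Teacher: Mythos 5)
Your proposal takes essentially the same route as the paper: NP-membership is obtained exactly as you describe, by guessing a polynomial-size good symbolic witness (at most $|V|\cdot|\Pi|+1$ lassoes of length at most $2\cdot|V|^2$) and verifying goodness and the payoff constraint in polynomial time via Proposition~\ref{prop:lasseFolkThm}, using that the gain of a lasso is computable from $\Inf$ in polynomial time for each objective. For hardness, the paper follows precisely the alternative you name: it reuses an existing SAT reduction to the constrained existence problem for NEs with Co-B\"uchi objectives (Ummels, 2008), observes that the structure of that game makes the same reduction work for weak SPEs, and then transfers hardness to Parity, Muller, Rabin, and Streett via the standard polynomial translations of Co-B\"uchi conditions — so the "real work" you flag as the main obstacle is discharged by citation rather than by a new gadget construction.
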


\begin{proof}
We begin with the NP-easyness. The objectives considered in Theorem~\ref{thm:NP} are prefix-independent. We can thus apply Proposition~\ref{prop:lasseFolkThm}. Given thresholds $x, y \in \{0,1\}^{|\Pi|}$, there exists a weak SPE in $(\mathcal{G},v_0)$ with payoff $p$ such that $x \leq p \leq y$ if and only if there exists a good symbolic witness $\Wit$ that contains a lasso $\rho_{0,v_0}$ with payoff $p$. Hence a nondeterministic polynomial algorithm works as follows: guess a set $\Wit$ composed of at most $|\Pi| \cdot |V| + 1$ lassoes of length at most $2 \cdot |V|^2$ and check that $\Wit$ is a good symbolic witness that contains a lasso $\rho_{0,v_0}$ with payoff $p$ such that $x \leq p \leq y$. Clearly checking that $\Wit$ is a symbolic witness can be done in polynomial time. Checking that it is good requires to compute the payoffs of its lassoes and to compare them. This can also be done in polynomial time for B\"uchi, co-Büchi, Parity, Muller, Rabin, and Streett objectives.

We now proceed to the NP-hardness. It is obtained thanks to a polynomial reduction from SAT. In~\cite{Ummels08} is provided a polynomial reduction from SAT to the constraint problem for NEs in Boolean games with co-Büchi objectives. Due to the structure of the game constructed in this approach, the same reduction holds for the constraint problem for weak SPEs. As co-Büchi objectives can be polynomially translated into Parity, Muller, Rabin, and Streett objectives (see \cite{2001automata}), the constraint problem for Boolean games with those objectives is also NP-hard.
\end{proof}

\subsection{PSPACE-completeness}

In this section, we show that the constraint problem for Reachability and Safety objectives is PSPACE-complete.  

\begin{thm}
\label{thm:PSPACE}
The constraint problem for Boolean games with Reachability and Safety objectives is PSPACE-complete.
\end{thm}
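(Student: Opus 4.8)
The plan is to prove PSPACE-membership and PSPACE-hardness separately.

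\emph{Membership.} Unlike the prefix-independent case, Reachability and Safety objectives do not satisfy the hypothesis of Proposition~\ref{prop:lasseFolkThm}, so the polynomial-size good symbolic witness is not available. Nevertheless, Corollary~\ref{cor:finitememory} tells us that if a weak SPE with a given payoff exists, there is a finite-memory one in which every strategy $\sigma_i$ has size in $\mathcal{O}(|V|^3 \cdot |\Pi| \cdot 2^{3 \cdot |\Pi|})$. The plan is to show that the constraint problem can be decided by a nondeterministic algorithm running in \emph{polynomial space}: guess (on the fly, rather than storing the whole object) the Moore machines of such a candidate profile $\sigma = (\sigma_i)_{i \in \Pi}$, and then verify that $\sigma$ is a very weak SPE (equivalently a weak SPE, by Proposition~\ref{prop:equivWSPEetVWSPE}) with payoff $p$ satisfying $x \leq p \leq y$. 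Verification amounts to checking, for every subgame $(\rest{\mathcal{G}}{h},v)$ — equivalently for every reachable memory configuration $(m_1,\dots,m_n,v)$ of the product of the Moore machines with the game graph — and every one-shot deviation, that the deviation is not profitable; each such check reduces to computing the payoff of an ultimately periodic outcome in a graph of size polynomial in $|V|$ and $2^{|\Pi|}$, and the number of configurations is at most $|V| \cdot (2^{\mathcal{O}(|\Pi|\log|V|)})^{|\Pi|}$, which is exponential but whose binary encoding has polynomial length, so iterating over all of them takes only polynomial space. Computing the payoff of a lasso and comparing Reachability/Safety gains is clearly doable in polynomial space. This places the problem in NPSPACE, hence in PSPACE by Savitch's theorem.

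\emph{Hardness.} The plan is a polynomial reduction from the canonical PSPACE-complete problem QBF (truth of a quantified Boolean formula $\psi = Q_1 x_1 \cdots Q_m x_m\, \varphi$). The idea, suggested by the macros $\QBFgamei$ and $\PSPACEgamei$ already declared in the preamble, is to build an initialized Boolean game $(\mathcal{G}_\psi, v_0)$ with Reachability (dually Safety) objectives and two distinguished players — a ``verifier'' associated with the $\exists$-quantifiers and a ``refuter'' associated with the $\forall$-quantifiers — together with auxiliary players that encode the clauses of $\varphi$; the variables are assigned in the order $x_1,\dots,x_m$, a player who owns the $x_i$-gadget picking the truth value. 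The key point is that weak SPE (via one-shot deviations and subgame perfectness) forces the quantifier structure to be respected: in the subgame reached after a partial assignment, the player owning the next variable must choose optimally, which mimics the alternation of quantifiers, so that a weak SPE with the prescribed payoff threshold exists if and only if $\psi$ is true. We then invoke Proposition~\ref{prop:reach}: for Reachability objectives weak SPE and SPE coincide, which also yields the stated transfer to SPEs; for Safety objectives one either dualizes the construction or uses the complementary target sets.

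\emph{Main obstacle.} The delicate part is the hardness direction: designing the gadgets so that the equilibrium condition of a \emph{weak} SPE — resistance only to one-shot (equivalently finitely deviating) deviations — is exactly strong enough to enforce the full quantifier alternation, while the target sets of the auxiliary clause-players encode satisfaction of $\varphi$ without introducing spurious equilibria that would make the reduction unsound in either direction. In particular one must ensure that, in every subgame corresponding to a reached partial assignment, the ``threats'' used to discipline the variable-owning players are themselves credible under one-shot deviations, so that the recursion on quantifier prefixes goes through; getting the payoff thresholds $x,y$ and the ownership of the non-terminal vertices right is where the real work lies. The membership direction is comparatively routine once one observes that the exponential object from Corollary~\ref{cor:finitememory} can be handled in polynomial space by processing configurations one at a time and appealing to Savitch's theorem.
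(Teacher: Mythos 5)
Your membership argument has a genuine gap. The finite-memory profile supplied by Corollary~\ref{cor:finitememory} has $\mathcal{O}(|V|^3\cdot|\Pi|\cdot 2^{3\cdot|\Pi|})$ states per player, i.e.\ it is an exponential-size object, and ``guessing it on the fly'' does not make it fit into polynomial space: the verification of the very weak SPE condition must evaluate the same transitions $\alpha_u(m,v)$ and next-actions $\alpha_n(m,v)$ at many different points of the computation (once for every subgame whose configuration passes through $(m,v)$ and once for every one-shot deviation leading into it), and if you discard the guesses you cannot guarantee that these repeated queries are answered consistently. An accepting run of such a nondeterministic machine therefore need not correspond to any actual strategy profile, so soundness fails unless you prove that ``locally consistent'' answers can always be assembled into a genuine weak SPE --- which is precisely the content of the good-symbolic-witness machinery and is nontrivial here because Reachability and Safety are not prefix-independent. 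The paper circumvents this by translating the game into a B\"uchi game on $V\times 2^{\Pi}$, using the monotonicity of the recorded set of already-satisfied players to shorten every lasso of a good symbolic witness to length $(|\Pi|+1)\cdot|V|$ and to bound the number of meaningful challenges by $|\Pi|^2\cdot|V|$, and then running an \emph{alternating} polynomial-time procedure (player~$\vee$ proposes lassoes, player~$\wedge$ challenges with one-shot deviations); membership follows from APTIME $=$ PSPACE, not from Savitch's theorem. Note also that the paper's proof of Corollary~\ref{cor:finitememory} for Reachability and Safety is itself derived from that construction, so it is not available as an independent starting point.

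The hardness direction is the same route as the paper's (a QBF reduction with an existential player, a universal player, and one player per clause whose one-shot ``punishment'' deviations enforce the quantifier alternation), but you only describe the plan and explicitly defer the gadget design, the ownership of vertices, the target sets, and the verification of both directions of the equivalence --- which, as you say yourself, is where the real work lies. In the paper the construction is concrete: a quantifier path $q_1,\dots,q_m$ with $\exists$-vertices owned by player~$n+1$ and $\forall$-vertices by player~$n+2$, followed by a clause path $c_1,\dots,c_n$ where player~$k$ owns $c_k$ and may deviate to a sink $t_k$; player~$k$ wants to reach a literal of $C_k$ or $t_k$, player~$n+1$ wants $t_{n+1}$, player~$n+2$ wants $\{t_1,\dots,t_n\}$, and the constraint is $p_{n+1}=1$. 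The correctness argument checks, subgame by subgame, that a clause player's move to $t_k$ is a profitable one-shot deviation exactly when his clause is falsified. Without these details (and the analogous dualization for Safety), PSPACE-hardness is not established.
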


Recall that weak SPEs and SPEs are equivalent notions for Reachability objectives (Proposition~\ref{prop:reach}). It follows from Theorem~\ref{thm:PSPACE} that the constraint problem for SPEs (instead of weak SPEs) for Boolean games with Reachability objectives is PSPACE-complete. We will see later (in Section~\ref{sec:cor}, from the proof of Theorem~\ref{thm:PSPACE}) that the constraint problem for SPEs is also PSPACE-complete for Safety objectives. 

\begin{corollary}
\label{cor:PSPACE}
The constraint problem for SPEs in Boolean games with Reachability and Safety objectives is PSPACE-complete.
\end{corollary}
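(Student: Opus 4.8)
The plan is to derive this corollary directly from Theorem~\ref{thm:PSPACE} together with the equivalence results relating weak SPEs and SPEs for Reachability and Safety objectives. For the Reachability case, Proposition~\ref{prop:reach} already states that a strategy profile is an SPE if and only if it is a weak SPE; hence the constraint problem for SPEs with Reachability objectives coincides, instance by instance, with the constraint problem for weak SPEs with Reachability objectives, which is PSPACE-complete by Theorem~\ref{thm:PSPACE}. So PSPACE-membership and PSPACE-hardness for SPEs with Reachability objectives are immediate.

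For the Safety case, I would proceed in two steps. First, for PSPACE-membership, I would argue that weak SPEs and SPEs also coincide for Safety objectives: the argument of Proposition~\ref{prop:reach} can be replayed, since with a Safety objective the gain function is still $\{0,1\}$-valued, and a profitable deviation of player~$i$ in a subgame produces a play that \emph{avoids} the forbidden set $F_i$; by a symmetric ``prefix-stabilizing'' construction one can turn such a deviation into a finitely deviating (indeed one-shot-based, via Proposition~\ref{prop:equivWSPEetVWSPE}) profitable deviation, so any weak SPE that is not an SPE would contradict the weak-SPE property. Thus the constraint problem for SPEs with Safety objectives reduces to the same problem for weak SPEs, which lies in PSPACE by Theorem~\ref{thm:PSPACE}. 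Second, for PSPACE-hardness, I would invoke the reduction used to prove the lower bound in Theorem~\ref{thm:PSPACE} for Safety objectives: as announced in the text (``from the proof of Theorem~\ref{thm:PSPACE}''), the hard instances constructed there are such that the weak SPE produced is in fact an SPE, or equivalently the weak/SPE equivalence just established makes the lower bound transfer automatically. Either way, the constraint problem for SPEs with Safety objectives inherits PSPACE-hardness.

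Concretely, the structure of the write-up is: (i) for Reachability, cite Proposition~\ref{prop:reach} to get equivalence and conclude from Theorem~\ref{thm:PSPACE}; (ii) for Safety, state and prove (or cite from Section~\ref{sec:cor}) the analogue of Proposition~\ref{prop:reach}, i.e.\ that SPEs and weak SPEs coincide for Safety objectives, then conclude PSPACE-membership from Theorem~\ref{thm:PSPACE}; (iii) for the Safety lower bound, note that the gadget witnessing PSPACE-hardness of the weak-SPE constraint problem (from the proof of Theorem~\ref{thm:PSPACE}) yields strategy profiles for which the weak-SPE and SPE notions agree, hence hardness transfers.

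The only genuinely non-routine point is step (ii) for Safety: proving that weak SPEs are SPEs for Safety objectives. The subtlety compared to Reachability is that the ``bad'' event is now a reachability event (hitting $F_i$) for the \emph{original} profile and the deviator wants to \emph{avoid} it forever, so one cannot simply truncate a finite witnessing prefix of the deviation. The fix is to note that once the original subgame outcome $\rho$ has already hit $F_i$ (so $\Gain_i=0$ is already forced along any continuation sharing that prefix), \emph{no} profitable deviation exists from that point; hence a profitable deviation in a subgame $(\rest{\mathcal{G}}{h},v)$ must occur while the play is still outside $F_i$, and then the deviating strategy $\sigma_i'$ differs from $\sigma_{i\restriction h}$ only at finitely many histories along the original outcome (all others being already losing for $i$) — after suitably modifying $\sigma_i'$ to agree with $\sigma_{i\restriction h}$ on every history not comparable with the unique consistent deviation outcome — which contradicts $\sigma$ being a weak SPE. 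Making this ``finitely many relevant histories'' argument precise is where care is needed; everything else follows mechanically from Theorem~\ref{thm:PSPACE}.
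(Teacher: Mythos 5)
Your treatment of Reachability is exactly the paper's: Proposition~\ref{prop:reach} gives the equivalence of SPEs and weak SPEs, and Theorem~\ref{thm:PSPACE} then yields PSPACE-completeness. The gap is in your step (ii) for Safety: the claimed analogue of Proposition~\ref{prop:reach} is \emph{false}, and the ``finitely many relevant histories'' repair does not work. Counterexample: let player~1 own all of $V=\{v,w,u\}$ with edges $v\to u$, $v\to w$, $w\to v$, $u\to u$, Safety set $F_1=\{u\}$, and initial vertex $v$; let $\sigma_1$ choose $u$ at every history ending in $v$. Every one-shot deviation in a subgame rooted at $v$ produces a play of the form $vwvu^\omega$, still losing, so $\sigma$ is a very weak SPE and hence a weak SPE by Proposition~\ref{prop:equivWSPEetVWSPE}; but the strategy that always chooses $w$ yields $(vw)^\omega$ and wins, so $\sigma$ is not even an NE. The point your repair misses is that a profitable Safety deviation must keep steering away from $F_i$ at \emph{every} future visit to a decision vertex, and such visits can recur infinitely often; after truncating the deviation to finitely many histories, the original strategy drives the play into $F_i$. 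Moreover, on this instance a weak SPE with payoff $0$ exists while no SPE with payoff $0$ does, so the two constraint problems genuinely differ and no instance-by-instance identification is available.

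Consequently your PSPACE upper bound for SPEs with Safety objectives is unproved: Theorem~\ref{thm:PSPACE} only gives membership for \emph{weak} SPEs. The paper handles this differently: for the upper bound it invokes the known result of~\cite{Ummels05} that the constraint problem for SPEs is in PSPACE, and for the lower bound it uses precisely the observation you offer as an alternative, namely that in the specific QBF game $\mathcal{G}_{\psi}$ the underlying graph is acyclic up to absorbing sinks, so there are only finitely many histories ending in a decision vertex, every deviating strategy is automatically finitely deviating, and weak SPEs and SPEs coincide \emph{in that game}; hardness therefore transfers. Your hardness argument can thus be salvaged by committing to the game-specific version, but the membership half needs a different source than the (false) general equivalence.
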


We detail the proof of Theorem~\ref{thm:PSPACE} in the next two sections for Reachability objectives, and we also show how to adapt it for Safety objectives. To get the PSPACE-easyness, we transform the Boolean game $(\mathcal{G},v_0)$ with Reachability objectives (which are not prefix-independent) into a Boolean game $(\mathcal{G}',v'_0)$ with B\"uchi objectives. In this way, it is possible to use the concept of good symbolic witness as done before in Section~\ref{sec:NP}. Even if the size of the game $(\mathcal{G}',v'_0)$ is exponential in the size of the initial game $(\mathcal{G},v_0)$, we manage to get a PSPACE-membership thanks to the classical complexity result PSPACE $=$ APTIME.  The PSPACE-hardness is obtained with a polynomial reduction from QBF. The reduction is more involved than the one in Theorem~\ref{thm:NP}.  Indeed the reduction for NP-hardness already works for NEs whereas the reduction for PSPACE-hardness really exploits the subgame perfect aspects.

\subsubsection{PSPACE-easyness}
\label{reachPSPACEe}

We here prove that the constraint problem is in PSPACE for Reachability objectives, and we then explain how to adapt the proof for Safety objectives. 

\begin{prop}
\label{prop:PSPACE}
The constraint problem for Boolean games with Reachability objectives is in PSPACE.
\end{prop}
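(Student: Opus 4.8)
The plan is to reduce the constraint problem for a Boolean game $(\mathcal{G},v_0)$ with Reachability objectives to the same problem on a Boolean game $(\mathcal{G}',v_0')$ with B\"uchi objectives, so that the machinery of Section~\ref{section:charac} (Theorem~\ref{folkThm} and Proposition~\ref{prop:lasseFolkThm}) applies. The difficulty is that Reachability objectives are not prefix-independent, whereas all of Section~\ref{section:charac} assumes prefix-independence. The standard fix is to record in the vertices which target sets $F_i$ have already been visited: take $V' = V \times 2^{\Pi}$, where the second component of a vertex is the set of players whose target set has been visited so far along the current history. Edges of $\mathcal{G}'$ follow edges of $\mathcal{G}$ while updating the subset monotonically (adding player $i$ as soon as a vertex of $F_i$ is entered), the initial vertex is $v_0' = (v_0, \{i \mid v_0 \in F_i\})$, and player $i$ in $\mathcal{G}'$ is given the B\"uchi objective ``visit infinitely often some vertex whose second component contains $i$''. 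Since the second component is monotone and eventually stabilizes along any play, this B\"uchi condition holds iff the original Reachability condition for player~$i$ holds; moreover $(V_i)$ lifts to a partition of $V'$, and a play in $\mathcal{G}$ from $v_0$ corresponds bijectively (via the obvious projection) to a play in $\mathcal{G}'$ from $v_0'$ with the same payoff. Consequently there is a weak SPE in $(\mathcal{G},v_0)$ with payoff $p$ iff there is one in $(\mathcal{G}',v_0')$ with payoff $p$: a strategy profile in one game translates to the other because histories in $\mathcal{G}'$ carry no more information than histories in $\mathcal{G}$ (the subset component is a function of the $\mathcal{G}$-history), so one-shot deviations and their profitability match up, and Proposition~\ref{prop:equivWSPEetVWSPE} reduces checking weak SPE to checking very weak SPE in both games.

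Next I would invoke Proposition~\ref{prop:lasseFolkThm} on $(\mathcal{G}',v_0')$, whose objectives are the prefix-independent B\"uchi objectives: the constraint problem with thresholds $x \leq p \leq y$ has a positive answer iff there is a good symbolic witness $\Wit$ in $(\mathcal{G}',v_0')$ containing a lasso $\rho_{0,v_0'}$ whose payoff $p$ satisfies $x \leq p \leq y$. Such a witness consists of at most $|V'|\cdot|\Pi|+1$ lassoes, each of length at most $2\cdot|V'|^2$, and $|V'| = |V|\cdot 2^{|\Pi|}$ is exponential in the size of the input. So a naive guess-and-check is only in NEXPTIME; the point is instead to guess the witness and verify its goodness using alternation, exploiting PSPACE $=$ APTIME. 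Concretely, an alternating polynomial-time procedure existentially guesses, lasso by lasso but storing only a polynomial-size ``current state'' at a time, the components of $\Wit$ needed; the key observation is that a lasso of $\mathcal{G}'$ of length $\le 2|V'|^2$ can be described succinctly because each of its vertices is a pair (vertex of $\mathcal{G}$, subset of $\Pi$), the subset component evolves deterministically, and the positions and endpoints can be handled with polynomially many bits (indices up to $2|V'|^2$ are polynomial in $|V|$ and $|\Pi|$). The goodness condition of Definition~\ref{def:Good} is a statement of the form ``for all lassoes $\rho_{j,u}, \rho_{i,v'}\in\Wit$, for all vertices $v\in\rho_{j,u}$ with $v\in V_i$ and $v'\in\Succ(v)$, $\Gain_i(\rho_{j,u}) \ge \Gain_i(\rho_{i,v'})$'': this can be checked by a universal branching over the (polynomially many, when indexed) choices of the two lassoes and the position $v$, followed by a comparison of two B\"uchi payoffs, each of which is computable in polynomial time from a succinct description of a lasso (one checks whether the loop part of the lasso visits a vertex whose subset component contains~$i$). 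The payoff-threshold constraint $x \le p \le y$ on $\rho_{0,v_0'}$ is a similarly cheap check.

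The main obstacle, and the step I would spend the most care on, is making the alternating procedure genuinely run in polynomial time despite the exponential object $\Wit$: one cannot write down all of $\Wit$. The resolution is that the verifier never needs two full lassoes simultaneously in an ``expensive'' way — the goodness check, after the universal branch fixes which pair $(\rho_{j,u},\rho_{i,v'})$ and which position $v$ are under scrutiny, only needs to (re-)derive those two particular lassoes and the one relevant vertex, each obtainable from a polynomial-size seed (the pair $(j,u)$ resp. $(i,v')$ in the index set $I$, plus the polynomially-bounded length and the ``clever cycle-elimination'' shape from~\cite[Proposition~3.1]{BouyerBMU15}) by a deterministic polynomial-time unfolding. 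So the algorithm's working tape at any moment holds: the index set data for $O(1)$ lassoes currently being reconstructed, an $O(\log|V'|) = O(\log|V| + |\Pi|)$-bit position counter, and the current vertex, all polynomial in the input size; the existential phase guesses the seeds, the universal phase ranges over the (succinctly encoded) triples, and a deterministic polynomial-time subroutine recomputes and compares B\"uchi payoffs. This places the problem in APTIME $=$ PSPACE. I would close by remarking that for Safety objectives one argues identically, using the complementary B\"uchi-style encoding (``the second component never contains $i$'', equivalently a co-B\"uchi/safety condition on the product), which is again prefix-independent on $\mathcal{G}'$, so the same alternating-verification argument yields PSPACE-membership.
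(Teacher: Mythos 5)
Your overall strategy --- the product construction with $2^{\Pi}$ turning Reachability into B\"uchi objectives, then good symbolic witnesses on the product game, then $\mathrm{APTIME}=\mathrm{PSPACE}$ --- is exactly the paper's. But there is a genuine gap in the step you yourself identify as the main obstacle. You keep the generic length bound $2\cdot|V'|^2$ on the lassoes of the witness, and $2\cdot|V'|^2 = 2\cdot|V|^2\cdot 4^{|\Pi|}$ is exponential in the input (your parenthetical claim that ``indices up to $2|V'|^2$ are polynomial in $|V|$ and $|\Pi|$'' is false: such indices take polynomially many \emph{bits}, but their value is exponential). Consequently a lasso of the witness cannot be ``re-derived by a deterministic polynomial-time unfolding'' from a polynomial-size seed: you never exhibit a succinct encoding from which the lasso, or even its B\"uchi payoff, is computable in polynomial time, and an alternating polynomial-time machine cannot traverse an exponentially long object along a single computation path. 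The paper closes exactly this hole with an extra argument: because the recorded subset $I_k$ is monotone ($I_k\subseteq I_{k+1}$) and eventually stabilizes, one can eliminate cycles in the transient part and shorten the periodic part so that every lasso of the witness has length at most $(|\Pi|+1)\cdot|V|$, which is polynomial; only then do proposing a lasso and computing its payoff become polynomial-time operations.

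A second, related issue is the shape of your alternation. If the existential phase guesses all seeds before the universal branch, it must write down up to $|V|\cdot 2^{|\Pi|}\cdot|\Pi|+1$ of them, which is exponential; if instead the two relevant lassoes are guessed only after the universal branch fixes the pair under scrutiny, you lose consistency: the same index $(i,v')$ could be answered with different lassoes in different branches, and the resulting family need not assemble into a single good witness, since $\rho_{i,v'}$ participates in many constraints simultaneously. The paper avoids this by organizing the computation as a sequential alternating game (player~$\vee$ proposes a lasso, player~$\wedge$ picks a vertex of it and a successor, player~$\vee$ answers with a new lasso, and so on) and uses the monotonicity of $I$ once more to bound the number of useful rounds by $|\Pi|^2\cdot|V|$, so the whole execution has polynomial length. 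Both of these monotonicity-based bounds are needed to make the argument go through; as written, your verification step does not run in alternating polynomial time.
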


\begin{proof}
First, we transform the Boolean game $(\mathcal{G},v_0)$ with Reachability objectives into a Boolean game $(\mathcal{G}',v'_0)$ with B\"uchi objectives. This construction is classical, it stores inside the vertices the set of players who have already satisfied their objective. Suppose that in $(\mathcal{G},v_0)$, each player~$i$ aims at reaching a vertex of $F_i$, then we build $\mathcal{G}' = (\Pi, V', E', (V'_i)_{i \in \Pi}, (\Gain'_i)_{i\in \Pi})$ such that

\begin{itemize}
\item $V' = V \times 2^{\Pi}$;
\item $((v,I),(u,I')) \in E'$ if and only if $(v,u)\in E$ and $I' =  I  \cup \{i\in \Pi \mid v' \in F_i \}$;
\item $(v,I) \in V'_i$ if and only if $v \in V_i$;
\item $\Gain'_i$ corresponds to the Büchi objective $F'_i = \{ (v, I) \mid v \in V, i \in I \}$; and
\item $v'_0 = (v_0, I_0)$ is the initial vertex such that $I_0 = \{ i \in \Pi \mid v_0 \in F_i \}$.
\end{itemize}

Clearly there is a one-to-one correspondence between plays $\rho = v_1v_2 \ldots v_k \ldots$ in $\mathcal{G}$ from $v_1$ and $\rho' = (v_1,I_1)(v_2,I_2) \ldots (v_k,I_k) \ldots$ in $\mathcal{G}'$ from $(v_1,I_1)$, with the important property that
\begin{align}
I_k \subseteq I_{k+1} \text{ for all } k\geq 1. \label{eq:Ik}
\end{align} 
In particular, there exists a weak SPE with payoff $p_0$ in $(\mathcal{G},v_0)$ if and only if there is one in $(\mathcal{G}',v'_0)$ if and only if there is a good symbolic witness $\mathcal{P}$ containing a lasso $\rho_{0,v'_0}$ with payoff $p_0$ in $(\mathcal{G}',v'_0)$ (by Proposition~\ref{prop:lasseFolkThm}).

Second, let us study the lengths of the lassos $\rho_{i,v'}$, with $i \in \Pi \cup \{0\}$, $v' \in V'$, of $\mathcal{P}$. There exists a better bound than the bound $2 \cdot |V'|^2 = 2 \cdot |V|^2 \cdot 2^{2 \cdot |\Pi|}$ of Definition~\ref{def:symbolicWitnesses}: each lasso $\rho_{i,v'}$ of $\mathcal P$ has a polynomial length bounded by
\begin{align}
(|\Pi| + 1) \cdot |V|. \label{eq:lasso}
\end{align} 
Indeed recall that these lassoes are constructed from some $(p,k^*)$-labeled plays $\rho$ with payoff $p$ from $v'$ (see the proof of implication $(2 \Rightarrow 3)$ of Proposition~\ref{prop:lasseFolkThm}). We adapt this construction as follows. Let $\rho = (v_1,I_1) (v_2,I_2) \ldots (v_k, I_k) \ldots$ be such a play from $v'$. By \eqref{eq:Ik}, there exists $I \subseteq \Pi$ and $k \in \mathbb{N}$ such that for all $k' \geq k$, $I_{k'} = I$. Hence from $\rho$, we can construct a lasso $\rho'$ of length bounded by \eqref{eq:lasso} such that $\First(\rho') = \First(\rho)$, $\Occ(\rho') \subseteq \Occ(\rho)$, and $\Gain(\rho') = \Gain(\rho)$,
\begin{itemize}
\item by eliminating all cycles in the history $(v_1,I_1) (v_2,I_2) \ldots (v_{k-1}, I_{k-1})$ (leading to a history of length at most $|\Pi| \cdot |V|$), and
\item by detecting in the play $(v_k,I) (v_{k+1},I) \ldots$ the first repeated vertex $(v_{k'},I) = (v_{k'+\ell +1},I)$ and replacing this play by the lasso $(v_k,I) (v_{k+1},I) \ldots ((v_{k'},I) \ldots (v_{k'+\ell},I) )^\omega$ of length at most $|V|$. 
\end{itemize}
In this way, if $\rho$ is a $(p,k^*)$-labeled play with payoff $p$ from $v'$, then the constructed lasso $\rho'$ is also a $(p,k^*)$-labeled play with payoff $p$ from $v'$.

Third we prove PSPACE-membership of the constraint problem by proving that it is in APTIME. Given the game $(\mathcal{G}',v'_0)$ and two thresholds $x,y \in \{0,1\}^{|\Pi|}$, the alternating Turing machine works as follows. Existential and universal states (respectively controlled by player~$\vee$ and player~$\wedge$) alternate along an execution of the machine. Player~$\vee$ proposes a lasso $\rho_{j,u'}$ of length bounded by $(|\Pi| + 1) \cdot |V|$ (in the initial state, he proposes a lasso $\rho_{0,v'_0}$). Then Player~$\wedge$ chooses a vertex $w' \in V'_i$ of $\rho_{j,u'}$ and proposes to move to $v'$ such that $(w',v') \in E'$. Player~$\vee$ reacts by proposing a lasso $\rho_{i,v'}$ of length bounded by $(|\Pi| + 1) \cdot |V|$, and so on. The execution stops after
\begin{align}
2 \cdot |\Pi|^2\cdot |V| + 1 \text{ turns. } \label{eq:turns}
\end{align}
Such an execution is accepting if:
\begin{itemize}
\item for the payoff $p_0$ of the initial lasso $\rho_{0,v'_0}$, we have $x \leq p_0 \leq y$;
\item for each lasso $\rho_{j,u'}$ proposed by player~$\vee$, for the corresponding move $(w',v') \in E'$ with $w' \in V'_i$ made by player~$\wedge$, and the answer $\rho_{i,v'}$ of player~$\vee$, we have $\Gain_i(\rho_{j,u'}) \geq \Gain_i(\rho_{i,v'})$.
\end{itemize}

The intuition is that if there exists in $(\mathcal{G}',v'_0)$ a good symbolic witness $\mathcal{P}$ containing a lasso $\rho_{0,v'_0}$ with payoff $p_0$   such that $x \leq p_0 \leq y$, then player~$\vee$ will play with the lassoes of $\mathcal{P}$ according to Definition~\ref{def:Good}. Notice that along an execution of the Turing machine, player~$\wedge$ has no interest to choose twice the same pair $(i,v')$ since player~$\vee$ will react with the same lasso $\rho_{i,v'}$. Remembering property $\eqref{eq:Ik}$, the maximum number of times that player~$\wedge$ has to play is 
\begin{align}
|\Pi|^2\cdot |V|. 
\end{align}
Indeed for a fixed $I \subseteq \Pi$, player~$\wedge$ can choose at most $|\Pi| \cdot |V|$ different pairs $(i,v')$ with $v'$ of the form $(v,I)$, and the size of $I$ can only increase. This explains the number of turns of any execution of the machine (see \eqref{eq:turns}): an initial lasso proposed by player~$\vee$ followed by $|\Pi|^2\cdot |V|$ alternations between moves of both players~$\vee$ and~$\wedge$.

Checking whether an execution is accepting is done in polynomial time since player~$\vee$ proposes lassoes of polynomial size by \eqref{eq:lasso}, there is a polynomial numbers of turns by \eqref{eq:turns}, and computing and comparing payoffs of lassoes is done in polynomial time. So the constraint problem is in APTIME $=$ PSPACE.
\end{proof}

The constraint problem for Safety objectives is solved similarly by transforming the given Boolean game into one with co-B\"uchi (instead of B\"uchi) objectives.

\begin{prop}
\label{prop:PSPACEsafety}
The constraint problem for Boolean games with Safety objectives is in PSPACE.
\end{prop}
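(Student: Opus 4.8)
The plan is to mirror the proof of Proposition~\ref{prop:PSPACE}, replacing the reduction to B\"uchi objectives by a reduction to co-B\"uchi objectives. First I would transform the Boolean game $(\mathcal{G},v_0)$ with Safety objectives --- say player~$i$ must avoid the set $F_i$ --- into a Boolean game $(\mathcal{G}',v'_0)$ with co-B\"uchi objectives, using the same vertex expansion $V' = V \times 2^{\Pi}$. The component $I \subseteq \Pi$ now records the set of players who have \emph{already lost}, i.e., who have already visited their forbidden set: an edge $((v,I),(u,I')) \in E'$ requires $(v,u) \in E$ and $I' = I \cup \{ i \in \Pi \mid u \in F_i \}$, ownership is inherited from $v$ (so $(v,I) \in V'_i$ iff $v \in V_i$), the initial vertex is $v'_0 = (v_0, I_0)$ with $I_0 = \{ i \mid v_0 \in F_i \}$, and the co-B\"uchi objective of player~$i$ is given by $F'_i = \{ (v,I) \mid i \in I \}$. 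Since a player who has entered his forbidden set stays ``lost'' forever, the sets $I_k$ along a play are still nondecreasing, exactly as in \eqref{eq:Ik}, and a play $\rho$ of $\mathcal{G}$ satisfies player~$i$'s Safety objective iff $i \notin I_k$ for all $k$ iff the corresponding play $\rho'$ of $\mathcal{G}'$ never visits $F'_i$ iff $\Inf(\rho') \cap F'_i = \emptyset$. Hence the one-to-one correspondence between plays of $\mathcal{G}$ and $\mathcal{G}'$ is payoff preserving, and since co-B\"uchi objectives are prefix-independent, Proposition~\ref{prop:lasseFolkThm} applies to $(\mathcal{G}',v'_0)$: there is a weak SPE with payoff $p_0$ in $(\mathcal{G},v_0)$ iff there is a good symbolic witness $\mathcal{P}$ in $(\mathcal{G}',v'_0)$ containing a lasso $\rho_{0,v'_0}$ with payoff $p_0$.

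Second, I would reuse verbatim the length-reduction argument: any $(p,k^*)$-labeled play from $v'$ in $\mathcal{G}'$ eventually stabilizes its $I$-component at some $I \subseteq \Pi$ by \eqref{eq:Ik}, so one can remove all cycles from the prefix where $I$ still changes (which has at most $|\Pi|$ distinct $I$-values, hence length at most $|\Pi| \cdot |V|$) and replace the stable suffix by a simple lasso over vertices of the form $(v,I)$ (length at most $|V|$), yielding a $(p,k^*)$-labeled lasso of length bounded by \eqref{eq:lasso}, i.e.\ $(|\Pi|+1)\cdot|V|$, with the same payoff, same first vertex, and occurrence set contained in that of $\rho$. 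Then the APTIME machine is exactly the one of Proposition~\ref{prop:PSPACE}: player~$\vee$ proposes lassoes of polynomial length (starting with $\rho_{0,v'_0}$), player~$\wedge$ picks a vertex of the current lasso owned by some player~$i$ and a successor $v'$, player~$\vee$ answers with $\rho_{i,v'}$, the execution runs for $2 \cdot |\Pi|^2 \cdot |V| + 1$ turns, and it is accepting iff $x \leq p_0 \leq y$ and $\Gain_i(\rho_{j,u'}) \geq \Gain_i(\rho_{i,v'})$ at every step. As before, monotonicity of the $I$-component bounds the number of useful moves of player~$\wedge$ by $|\Pi|^2 \cdot |V|$ (for each fixed $I$ there are at most $|\Pi| \cdot |V|$ relevant pairs $(i,v')$, and $I$ can only grow). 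Computing and comparing co-B\"uchi payoffs of polynomial-length lassoes is polynomial, so the whole procedure is in APTIME $=$ PSPACE.

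I do not expect a genuine obstacle here; the only points needing care are checking that the co-B\"uchi encoding faithfully captures Safety --- in particular that ``loss is permanent'' yields the monotonicity $I_k \subseteq I_{k+1}$ and that $\Inf(\rho') \cap F'_i = \emptyset$ is equivalent to player~$i$'s forbidden set being avoided along the whole of $\rho$ --- and confirming that co-B\"uchi payoffs of lassoes are computable in polynomial time, both of which are routine.
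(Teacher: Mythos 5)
Your proposal is correct and follows exactly the route the paper takes: the paper's own proof of this proposition is a one-line remark that one transforms the Safety game into a co-B\"uchi game and proceeds as in Proposition~\ref{prop:PSPACE}, which is precisely what you do, and the details you supply (monotonicity of the $I$-component, equivalence of ``never visits $F_i$'' with the co-B\"uchi condition, and the reuse of the lasso-length bound and the APTIME machine) all check out.
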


\subsubsection{PSPACE-hardness}

We now prove that the constraint problem is PSPACE-hard for Reachability objectives, and we then show how to adapt the proof for Safety objectives.

\begin{prop}
	\label{prop:reachPSPACE-hard}
	The constraint problem for Boolean games with Reachability objectives is PSPACE-hard.
\end{prop}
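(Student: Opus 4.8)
The plan is to reduce from QBF (the validity problem for quantified Boolean formulas), following the standard template but exploiting the subgame-perfect structure, since a plain NE reduction would only give NP-hardness. Given a closed QBF $\psi = Q_1 x_1 Q_2 x_2 \cdots Q_m x_m\, \varphi$ with $\varphi$ in (say) CNF, I would build a Boolean game $\QBFgamei$ with Reachability objectives in which the players ``play out'' an assignment to $x_1,\dots,x_m$ one variable at a time, so that a weak SPE with a prescribed payoff exists if and only if $\psi$ is true. The key idea is to make the existentially quantified variables controlled (effectively) by a player whose objective is to reach a ``formula satisfied'' target, and the universally quantified variables controlled by the structure in such a way that the SPE requirement forces the relevant player to have a good response \emph{for every} choice at a universal variable — this is exactly where subgame perfection, rather than plain Nash, does the work, since an SPE must be an NE in every subgame, i.e. after every partial assignment.

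Concretely, I would use one ``quantifier gadget'' per variable $x_i$: at the vertex for $x_i$ there is a choice between a ``true'' branch and a ``false'' branch. For an existential $x_i$, a dedicated player $P_\exists$ owns the choice vertex and is given a Reachability objective that is only satisfied when the overall play eventually reaches a designated vertex $v_{\mathrm{sat}}$ coding ``$\varphi$ is satisfied under the chosen assignment.'' For a universal $x_i$, I would instead arrange that \emph{both} the true and the false branch must be ``viable'' in any SPE: the cleanest way is to give a fresh player $A_i$ who owns a vertex just before the $x_i$-choice with a two-way branch, one side going to the ``$x_i$ true'' continuation and one to the ``$x_i$ false'' continuation, and to rig payoffs (via the threshold constraint $x \le p \le y$) so that $A_i$ is indifferent but the subgame-perfection condition forces a weak SPE to behave consistently on both sides — alternatively, use the deviation-immunity of weak SPE directly: a profitable one-shot deviation at $A_i$'s vertex toward the other branch must not exist, which forces the continuation after both choices of $x_i$ to let the formula-player win. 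After all $m$ variables are fixed, the play enters a ``verification'' gadget: a clause-selector player (or a sequence of vertices) that, given the assignment recorded along the history, can reach $v_{\mathrm{sat}}$ iff every clause is satisfied; here one exploits that histories record the chosen literals, so a positional-looking check suffices. I would set the thresholds $x,y$ so that the demanded payoff profile is ``$P_\exists$ (and the formula-player) wins, auxiliary players get their forced values, and nobody else's value is constrained,'' and then prove the two directions.

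For soundness ($\psi$ true $\Rightarrow$ a constrained weak SPE exists), I would take a winning Skolem strategy for the $\exists$-player assigning each existential $x_i$ as a function of the earlier variables, turn it into a strategy profile in $\QBFgame$ where $P_\exists$ plays according to the Skolem functions and every other player plays ``cooperatively'' along the intended outcome, and then verify it is a very weak SPE by checking one-shot deviations (Proposition~\ref{prop:equivWSPEetVWSPE}): in every subgame, whatever partial assignment the history encodes, the $\exists$-player can still complete it to a satisfying assignment, so no player — in particular no $A_i$ — has a profitable one-shot deviation, because the continuation still reaches $v_{\mathrm{sat}}$. For completeness ($\psi$ false $\Rightarrow$ no constrained weak SPE), I would argue contrapositively: a weak SPE with the demanded payoff, read via its good symbolic witness (Proposition~\ref{prop:lasseFolkThm}), would, by subgame perfection at each universal gadget, yield a response of the $\exists$-player that satisfies $\varphi$ under \emph{every} universal choice and \emph{some} existential choice recorded consistently across subgames — i.e. a winning quantifier strategy — contradicting $\psi$ being false. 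I would also record that the construction is polynomial in $|\psi|$ (number of players $\mathcal{O}(m)$, vertices $\mathcal{O}(m + |\varphi|)$), and that since these objectives are Reachability, by Proposition~\ref{prop:reach} the same hardness holds for SPEs, and a small dualization gives the Safety case, yielding Corollary~\ref{cor:PSPACE} as well.

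The main obstacle I anticipate is engineering the universal-quantifier gadget so that subgame perfection \emph{genuinely} forces viability of both branches without letting some player deviate profitably in an unintended way — in particular, making sure the auxiliary players $A_i$ cannot, in some subgame, find a one-shot deviation that ``escapes'' to a branch where they do better, while simultaneously ensuring that in the $\psi$-false case the SPE constraint really propagates back through all the nested quantifiers. This is delicate because weak SPE only forbids \emph{finite} deviations, so I must be careful that the ``punishment'' continuations and the intended outcomes are arranged so that the relevant profitable deviations, if they existed, would be finite (one-shot suffices by Proposition~\ref{prop:equivWSPEetVWSPE}); getting the payoff bookkeeping (which auxiliary players win on which branch, and how the thresholds $x,y$ pin this down) exactly right is the bulk of the work, and is also where one must check that the reduction does not accidentally collapse to something an NE could already certify.
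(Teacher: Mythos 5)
Your overall plan — a polynomial reduction from QBF in which subgame perfection (rather than mere Nash-resistance) is what forces the formula to be satisfied under every universal choice — is exactly the approach the paper takes, and your soundness direction (Skolem functions for the existential player, checking one-shot deviations via Proposition~\ref{prop:equivWSPEetVWSPE}) matches the paper's. However, there is a genuine gap in the two gadgets you leave unresolved, and the specific universal-quantifier gadget you sketch would not work. If the auxiliary player $A_i$ at a universal choice is \emph{indifferent} between the two branches, then no one-shot deviation at his vertex is profitable regardless of what happens downstream, so the weak-SPE condition imposes no constraint at all there: indifference buys you nothing. The paper's resolution is different and is the key missing idea: there is a \emph{single} universal player ($n+2$) owning all universal quantifier vertices, whose Reachability objective ($\{t_1,\dots,t_n\}$) is exactly complementary to that of the existential player $n+1$ (who wants $t_{n+1}$). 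The payoff constraint $p_{n+1}=1$ then forces $p_{n+2}=0$ on the equilibrium outcome, and — because the quantifier part of the graph is acyclic, so \emph{every} deviating strategy of player $n+2$ is finitely deviating — deviation-immunity forces player $n+2$ to still get $0$ under every universal assignment he could switch to, i.e.\ the play must reach $t_{n+1}$ after every universal choice consistent with the existential player's strategy.

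The second unresolved piece is the verification gadget. A vertex $v_{\mathrm{sat}}$ that is reachable ``iff every clause is satisfied'' cannot be enforced structurally without blowing up the graph, since the assignment is recorded only in the history; the check must itself be an equilibrium phenomenon. The paper does this with one player per clause $C_k$ owning a vertex $c_k$ on the path to $t_{n+1}$, with objective ``visit a literal of $C_k$ or visit $t_k$'' and an escape edge $c_k \to t_k$ (where $t_k$ is winning for player $n+2$ and losing for player $n+1$). In any subgame reaching $c_k$ after an assignment falsifying $C_k$, player $k$ has a profitable one-shot deviation to $t_k$; so a weak SPE in which the play reaches $t_{n+1}$ certifies that every clause is satisfied along that history. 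Combining the two mechanisms gives both directions without needing the symbolic-witness machinery of Proposition~\ref{prop:lasseFolkThm} in the completeness direction — a direct deviation argument suffices. Your instinct about where the difficulty lies (``getting the payoff bookkeeping exactly right'') is accurate, but the proposal as written does not yet contain the construction that resolves it.
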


To prove this proposition, we give a polynomial reduction from the QBF problem that is PSPACE-complete. This problem is to decide whether a fully quantified Boolean formula $\psi$ is true. The formula $\psi$ can be assumed to be in prenex normal form $Q_1x_1Q_2x_2 \ldots Q_mx_m \,\phi(X)$ such that the quantifiers are alternating existential and universal quantifiers ($Q_1 = \exists$, $Q_2 = \forall$, $Q_3 = \exists, \ldots$), $X = \{ x_1, x_2, \ldots, x_m\}$ is the set of quantified variables, and $\phi(X) = C_1 \wedge \ldots \wedge C_n$ is an unquantified Boolean formula over $X$ equal to the conjunction of the clauses $C_1, \ldots, C_n$. 

Such a formula $\psi$ is true if there exists a value of $x_1$ such that for all values of $x_2$, there exists a value of $x_3$ $\ldots$, such that the resulting valuation $\nu$ of all variables of $X$ evaluates $\phi(X)$ to true. Formally, for each odd (resp. even) $k$, $1 \leq k \leq m$, let us denote by $f_k: \{0,1\}^{k-1} \rightarrow \{0,1\}$ (resp. $g_k: \{0,1\}^{k-1} \rightarrow \{0,1\}$) a valuation of variable $x_k$ given a valuation of previous variables $x_1, \ldots, x_{k-1}$\footnote{Notice that $f_1: \emptyset \rightarrow \{0,1\}$.}. Given theses sequences $f= f_1, f_3, \ldots$ and $g = g_2, g_4, \ldots$, let us denote by $\nu = \nu_{(f,g)}$ the valuation of all variables of $X$ such that $\nu(x_1) = f_1$, $\nu(x_2) = g_2(\nu(x_1))$, $\nu(x_3) = f_3(\nu(x_1)\nu(x_2))$, $\ldots$. Then 

\begin{center}
$\psi = Q_1x_1Q_2x_2 \ldots Q_mx_m \,\phi(X)$ is true \\
if and only if \\
there exist $f= f_1, f_3, \ldots$ such that for all $g = g_2,g_4, \ldots$, the valuation $\nu_{f,g}$ evaluates $\phi(X)$ to true.
\end{center} 

\begin{proof}[Proof of Proposition~\ref{prop:reachPSPACE-hard}]
Let us detail a polynomial reduction from the QBF problem to the contraint problem for Boolean games with Reachability objectives. Let $\psi = Q_1x_1Q_2x_2 \ldots Q_mx_m \,\phi(X)$ with $\phi(X) = C_1 \wedge \ldots \wedge C_n$ be a fully quantified Boolean formula in prenex normal form. We build the following Boolean game $\mathcal{G}_{\psi} = (\Pi,V,(V_i)_{i\in \Pi}, E, (\Gain_i)_{i\in \Pi})$ (see Figure~\ref{figure:reachPSPACEh}):

\begin{itemize}
\item	the set $V$ of vertices:
	\begin{itemize}
	\item for each variable $x_k \in X$ under quantifier $Q_k$, there exist vertices $x_k$, $\neg x_k$ and $q_k$;  
	\item for each clause $C_k$, there exist vertices $c_k$ and $t_k$;
	\item there exists an additionnal vertex $t_{n+1}$;
	\end{itemize}
\item the set $E$ of edges:
	\begin{itemize}
	\item from each vertex $q_k$ there exist an edge to $x_k$ and an edge to $\neg x_k$;
	\item from each vertex $x_k$ and $\neg x_k$, there exists an edge to $q_{k+1}$, except for $k=m$ where this edge is to $c_1$;
	\item from each vertex $c_k$, there exist an edge to $t_k$ and an edge to $c_{k+1}$, except for $k=n$ where there exist an edge to $t_n$ and an edge to $t_{n+1}$;
	\item there exists a loop on each $t_k$;
	\end{itemize}
\item the set $\Pi$ of $n+2$ players:
	\begin{itemize}
		\item each player $i$, $1 \leq i \leq  n$, owns vertex $c_i$;
		\item player $n+1$ (resp. $n+2$) is the player who owns the vertices $q_i$ for each existential (resp. universal) quantifier $Q_i$;
		\item as all other vertices have only one outgoing edge, it does not matter which player owns them;
	\end{itemize}
\item each function $\Gain_i$ is associated with the objective of visiting the set $F_i$ defined as follows:
	\begin{itemize}
	\item for all $i$, $1 \leq i \leq n$, $F_i = \{ \ell \in V \mid \ell \text{ is a literal of clause } C_i \}  \cup \{t_i\}$;
	\item $F_{n+1} = \{t_{n+1}\}$;
	\item $F_{n+2}=\{t_1,\ldots, t_n\}$.
	\end{itemize}
\end{itemize}


\begin{figure}[h!]
\centering
\scalebox{0.9}{
\begin{tikzpicture}
\node[draw] (Q1) at (0,0){$q_1$};
\node[draw] (Q2) at (3,0){$q_2$};
\node[draw] (Q3) at (6,0){$q_3$};
\node (empty) at (7.5,0){$\ldots$};
\node[draw] (Qm) at (9,0){$q_m$};
\node[draw] (C1) at (12,0){$c_1$};
\node (empty2) at (13.25,0){$\ldots$};
\node[draw](Cn) at (14.5,0){$c_n$};
\node[draw](T0) at (16.5,0){$t_{n+1}$};

\node[draw] (x1) at (1.5,1.5){$x_1$};
\node[draw] (nx1) at (1.5,-1.5){$\neg x_1$};

\node[draw] (x2) at (4.5,1.5){$x_2$};
\node[draw] (nx2) at (4.5,-1.5){$\neg x_2$};

\node[draw] (xm) at (10.5, 1.5){$x_m$};
\node[draw] (nxm) at (10.5, -1.5){$\neg x_m$};

\node[draw] (T1) at (12,2){$t_1$};
\node[draw] (Tn) at (14.5,2){$t_n$};

\draw[->] (Q1) to (x1);
\draw[->] (Q1) to (nx1);

\draw[->] (x1) to (Q2);
\draw[->] (nx1) to (Q2);

\draw[->] (Q2) to (x2);
\draw[->] (Q2) to (nx2);

\draw[->] (x2) to (Q3);
\draw[->] (nx2) to (Q3);

\draw[->](Qm) to (xm);
\draw[->](Qm) to (nxm);

\draw[->](xm) to (C1);
\draw[->](nxm) to (C1);

\draw[->](C1) to (empty2);
\draw[->](empty2) to (Cn);

\draw[->](Cn) to (T0);

\draw[->](C1) to (T1);
\draw[->](Cn) to (Tn);

\draw[->] (T1) edge [loop above] (T1);
\draw[->] (Tn) edge [loop above] (Tn);
\draw[->] (T0) edge [loop right] (T0);
\end{tikzpicture}}
\caption{Reduction from the formula $\psi$ to the Boolean game $\mathcal{G}_{\psi}$}
\label{figure:reachPSPACEh}
\end{figure}
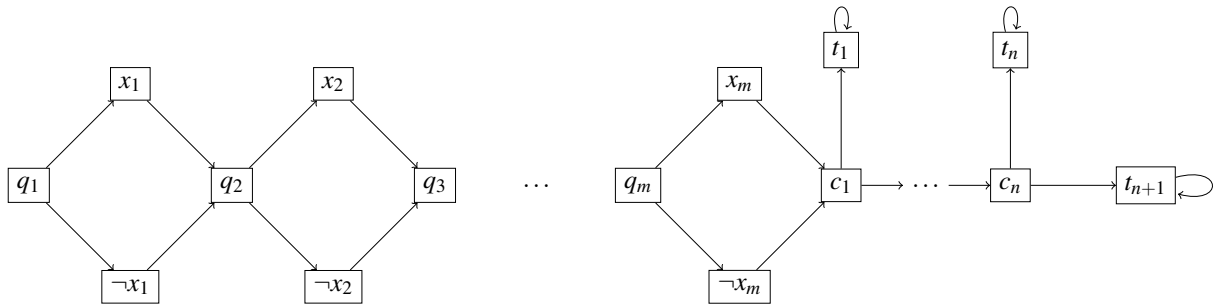

\begin{remark}
	\label{rem:winningStrat}
(1) Notice that a sequence $f$ of functions $f_k: \{0,1\}^{k-1} \rightarrow \{0,1\}$, with $k$ odd, $1 \leq k \leq m$, as presented above, can be translated into a strategy $\sigma_{n+1}$ of player~$n+1$ in the initialized game $(\mathcal{G}_{\psi},q_1)$, and conversely. Similarly, a sequence $g$ of functions $g_k: \{0,1\}^{k-1} \rightarrow \{0,1\}$, with $k$ even, $1 \leq k \leq m$ is nothing else than a strategy $\sigma_{n+2}$ of player~$n+2$. (2) Notice also that if $\rho$ is a play in $(\mathcal{G}_{\psi},q_1)$, then $\Gain_{n+1}(\rho) = 1$ if and only if $\Gain_{n+2}(\rho) = 0$. Moreover, suppose that $\rho$ visits $t_{n+1}$, then for all $i$, $1 \leq i \leq n$, $\Gain_{i}(\rho) = 1$ if and only if for all $i$, $1 \leq i \leq n$, $\rho$ visits a vertex that is a litteral of $C_i$ if and only if there is a valuation of all variables of $X$ that evaluates $\phi(X)$ to true.
\end{remark}

The game $\mathcal{G}_{\psi}$ can be constructed from $\psi$ in polynomial time. Let us now show that $\psi$ is true if and only if there exists a weak SPE in $(\mathcal{G}_{\psi},q_1)$ with a payoff $p \geq (0, \ldots, 0, 1, 0)$ (that is, such that $p_{n+1} = 1$). 

\medskip 

 $(\Rightarrow)$ Suppose that $\psi$ is true. Then there exists a sequence $f$ of functions $f_k: \{0,1\}^{k-1} \rightarrow \{0,1\}$, with $k$ odd, $1 \leq k \leq m$, such that for all sequences $g$ of functions $g_k: \{0,1\}^{k-1} \rightarrow \{0,1\}$, with $k$ even, $1 \leq k \leq m$, the valuation $\nu_{f,g}$ evaluates $\phi(X)$ to true. We define a strategy profile $\sigma$ as follows:
	\begin{itemize}
		\item for player~$n+1$, his strategy $\sigma_{n+1}$ is the strategy corresponding to the sequence $f$ (by Remark~\ref{rem:winningStrat});
		\item for player~$n+2$, his strategy is an arbitrary strategy $\sigma_{n+2}$; we denote by $g$ the corresponding sequence $g_k: \{0,1\}^{k-1} \rightarrow \{0,1\}$, with $k$ even, $1 \leq k \leq m$ 	(by Remark~\ref{rem:winningStrat});
		\item for each player~$i$, $1 \leq i \leq n$, 
		\begin{itemize}
			\item if $hv \in \Hist_{i}(q_1)$ with $v = c_i$, is consistent with $\sigma_{n+1}$, then $\sigma_i(hv)= c_{i+1}$ if $i \neq n$ and $t_{n+1}$ otherwise
			\item else $\sigma_i(hv) = t_i$.
		\end{itemize}
	\end{itemize}
	
Let us prove that $\sigma$ is a weak SPE, that is, for each history $hv \in \Hist(q_1)$, there is no one-shot deviating strategy in the subgame $(\mathcal G_{\psi\restriction h},v)$ that is profitable to the player who owns vertex $v$ (by Proposition~\ref{prop:equivWSPEetVWSPE}). This is clearly true for all $v = t_i$, $1 \leq i \leq n+1$, since $t_i$ has only one outgoing edge. For the other vertices $v$, we study two cases:

\begin{itemize}

	\item $hv$ is consistent with $\sigma_{n+1}$: By hypothesis, the valuation $\nu_{f,g}$ evaluates $\phi(X)$ to true, that is, it evaluates all clauses $C_i$ to true. Hence by Remark~\ref{rem:winningStrat}, the play $\rho = h \outcome{\rest{\sigma}{h}}{v}$ visits a vertex of $F_i$ for all $i$, $1 \leq i \leq n$, and by definition of $\sigma$, $\rho$ eventually loops on $t_{n+1}$. This means that $\Gain(\rho) = (1,1,\ldots,1,0)$. Notice that as $\sigma_{n+2}$ is arbitrary, using another strategy $\sigma'_{n+2}$ in place of $\sigma_{n+2}$ leads to a play $\rho'$ such that $\Gain(\rho') = \Gain(\rho)$. As $\Gain(\rho) = (1,1,\ldots,1,0)$, only player~$n+2$ has an incentive to deviate to increase his gain. Nevertheless, as just explained, using another strategy does not change his gain. 

	\item $hv$ is not consistent with $\sigma_{n+1}$: Suppose that $v = c_k$. Then by definition of $\sigma$, the play $h \outcome{\rest{\sigma}{h}}{v}$ eventually loops on $t_k$ leading to a gain of $1$ for player~$k$. This player has thus no incentive to deviate with a one-shot deviation in the subgame $(\mathcal G_{\psi\restriction h},v)$.  

Suppose that $v = q_k$. Then by definition of $\sigma$, the play $\rho = h \outcome{\rest{\sigma}{h}}{v}$ eventually loops on $t_1$. It follows that $\Gain_{n+1}(\rho) = 0$ and $\Gain_{n+2}(\rho) = 1$. As we only have to consider one-shot deviating strategies, if $q_k \in V_{n+2}$, player~$n+2$ has no incentive to deviate, and if $q_k \in V_{n+1}$, player~$n+1$ could try to use a one-shot deviating strategy, however the resulting  play still eventually loops on $t_1$.
	\end{itemize}	

This proves that $\sigma$ is a weak SPE. Its payoff is equal to $p = (1,1,\ldots,1,0)$ as explained previously. Therefore it satisfies the constraint $p \geq (0, \ldots, 0, 1, 0)$.

\medskip

$(\Leftarrow)$ Suppose that there exists a weak SPE $\sigma$ in $(\mathcal{G}_{\psi},q_1)$ with outcome $\rho$ and payoff $\Gain(\rho) \geq (0, \ldots, 0, 1, 0)$, that is, $\Gain_{n+1}(\rho) = 1$. By Remark~\ref{rem:winningStrat}, it follows that $\Gain_{n+2}(\rho) = 0$. We have to prove that $\psi$ is true. To this end, consider the sequence $f$ of functions  $f_k: \{0,1\}^{k-1} \rightarrow \{0,1\}$, with $k$ odd, $1 \leq k \leq m$, that corresponds to strategy $\sigma_{n+1}$ of player~$n+1$ by Remark~\ref{rem:winningStrat}. Let us show that for all sequences $g$ of functions $g_k: \{0,1\}^{k-1} \rightarrow \{0,1\}$, with $k$ even, $1 \leq k \leq m$, the valuation $\nu_{f,g}$ evaluates $\phi(X)$ to true.

By contradiction assume that it is not the case for some sequence $g'$ and consider the related strategy $\sigma'_{n+2}$ of player~$n+2$ by Remark~\ref{rem:winningStrat}. Notice that $\sigma'_{n+2}$  is a finitely deviating strategy. Let us consider the outcome $\rho'$ of the strategy profile $(\sigma'_{n+2},\sigma_{-(n+2)})$ from $q_1$. As $\Gain_{n+2}(\rho) = 0$, we must have $\Gain_{n+2}(\rho') = 0$, otherwise $\sigma'_{n+2}$ is a profitable deviation for player~$n+2$ whereas $\sigma$ is a weak SPE. It follows that $\Gain_{n+1}(\rho') = 1$ by Remark~\ref{rem:winningStrat}, that is, $\rho'$ eventually loops on $t_{n+1}$. 

Now recall that the valuation $\nu_{f,g'}$ evaluates $\phi(X)$ to false, which means that it evaluates some clause $C_k$ of $\phi(X)$ to false. Consider the history $hc_k < \rho'$. As strategy $\sigma'_{n+2}$ only acts on the left part of the underlying graph of  $\mathcal G_{\psi}$, we have $\rho' = \outcome{\sigma'_{n+2},\sigma_{-(n+2)}}{q_1} = h \outcome{\rest{\sigma}{h}}{c_k}$. In the subgame $(\mathcal{G}_{\psi\restriction h},c_k)$, the outcome of $\rest{\sigma}{h}$ gives a gain of $0$ to player~$k$ because $\rho' = h \outcome{\rest{\sigma}{h}}{c_k}$ does not visit $t_k$ and $\nu_{f,g'}$ evaluates $C_k$ to false. In this subgame, player~$k$ has thus a profitable one-shot deviation that consists to move to $t_k$. It follows that $\sigma$ is not a weak SPE which is impossible. Then $\psi$ is true.
\end{proof}

%

For Boolean games with Safety objectives, we can use the same reduction and the same kind of arguments as for Reachability objectives.

\begin{prop}
\label{prop:pspacehSafety}
	The constraint problem for Boolean games with Safety objectives is PSPACE-hard. 
\end{prop}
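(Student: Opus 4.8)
The plan is to follow the proof of Proposition~\ref{prop:reachPSPACE-hard} and give a polynomial-time reduction from QBF to the constraint problem for Boolean games with Safety objectives. Given $\psi = Q_1x_1\cdots Q_mx_m\,\phi(X)$ with $\phi(X) = C_1\wedge\cdots\wedge C_n$, I would build a Safety game $\mathcal{G}^S_\psi$ keeping the two-block shape of $\mathcal{G}_\psi$: a \emph{variable-selection} block in which player~$n+1$ (resp.\ player~$n+2$) moves at the vertices $q_k$ under an existential (resp.\ universal) quantifier, thereby fixing a valuation $\nu$ of $X$, followed by a \emph{clause-verification} block in which player~$i$ ($1\leq i\leq n$) owns a vertex $c_i$ from which it may either \emph{escape} to a private sink $t_i$ or let the play run towards the sink $t_{n+1}$. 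Since a Safety objective can only punish a player for \emph{reaching} a vertex, whereas ``$C_i$ is satisfied'' is a reachability event, I would insert on the non-escape branch leaving $c_i$ a small gadget whose only role is to drive the play through a dedicated vertex $b_i$ exactly when the play did not escape at $c_i$ and $\nu$ falsifies $C_i$; taking $S_i=\{b_i\}$ then makes player~$i$ safe iff it escaped to $t_i$ or $C_i$ is satisfied, mirroring the target $F_i$ of $\mathcal{G}_\psi$. For players $n+1$ and $n+2$, whose targets in $\mathcal{G}_\psi$ already amount to ``the play ends in $t_{n+1}$'' and ``the play ends in some $t_j$, $j\leq n$'', I would dualize by taking $S_{n+1}=\{t_1,\ldots,t_n\}$ and $S_{n+2}=\{t_{n+1}\}$, so that (since the play visits exactly one sink) the analogue of Remark~\ref{rem:winningStrat} holds: $\Gain_{n+1}(\rho)=1$ iff $\rho$ ends in $t_{n+1}$ iff $\Gain_{n+2}(\rho)=0$.

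I would then prove that $\psi$ is true iff $(\mathcal{G}^S_\psi,q_1)$ admits a weak SPE with payoff $p\geq(0,\ldots,0,1,0)$, i.e.\ $p_{n+1}=1$, along the two implications of Proposition~\ref{prop:reachPSPACE-hard} and invoking Proposition~\ref{prop:equivWSPEetVWSPE} so that only one-shot deviations are tested. For $(\Rightarrow)$: from a winning choice function $f$ for the existential player, let player~$n+1$ follow $f$, let player~$n+2$ play an arbitrary strategy, and let each player~$i$ continue past $c_i$ on histories consistent with $f$ and escape to $t_i$ otherwise; since $\nu_{f,g}\models\phi$ for every $g$, on the consistent histories $b_i$ is never reached and the outcome ends in $t_{n+1}$, so players $1,\ldots,n+1$ are safe, and the usual case split (on the consistent part only player~$n+2$ is unsafe, but its gain is insensitive to its own moves; off it, escaping keeps player~$i$ safe) shows that no one-shot deviation is profitable. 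For $(\Leftarrow)$: from a weak SPE $\sigma$ with $p_{n+1}=1$, whose outcome $\rho$ therefore ends in $t_{n+1}$, read off $\sigma_{n+1}$ a choice function $f$; if $f$ were not winning, some universal choice sequence $g'$ would give a valuation $\nu_{f,g'}$ falsifying some clause $C_k$, and the corresponding strategy $\sigma'_{n+2}$ of player~$n+2$ is finitely deviating from $\sigma_{n+2}$ (it acts only on the finite variable-selection block). As $\Gain_{n+2}(\rho)=0$, also $\Gain_{n+2}(\rho')=0$ for the outcome $\rho'$ of $(\sigma'_{n+2},\sigma_{-(n+2)})$, else $\sigma'_{n+2}$ would be profitable; hence $\rho'$ ends in $t_{n+1}$, so in the subgame of $\rho'$ just before $c_k$ player~$k$ is about to be driven through $b_k$, and moving to $t_k$ is a profitable one-shot deviation for player~$k$ — contradicting that $\sigma$ is a weak SPE. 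Thus $\psi$ is true.

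The hard part will be the design of the per-clause gadget that turns ``$\nu$ falsifies $C_i$'' into reachability of the single vertex $b_i$: the play has already committed to $\nu$ in the variable-selection block, so the gadget needs to re-examine the (at most three) literal values of $C_i$ without blowing the graph up to exponential size — for instance by routing the literal-certification choices through vertices shared with, or mirroring, the variable-selection block and penalising inconsistent re-choices — and one must check that the result stays polynomial. Just as importantly, one must verify that the dualization does not flatten the quantifier alternation: a naive reinterpretation of $\mathcal{G}_\psi$ itself with Safety objectives collapses, since then players $n+1$ and $n+2$ become indifferent to $\nu$ and the reduction degenerates to SAT, so the gadget must genuinely keep the moves of the existential player relevant to the clause owners — which is exactly the point of coupling the escape vertices $t_i$ with the bad vertices $b_i$.
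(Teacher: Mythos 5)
Your high-level plan (reduce from QBF, reuse the two-block structure of $\mathcal{G}_{\psi}$, dualize the objectives of players~$n+1$ and~$n+2$) matches the paper, but the proposal has a genuine gap at its core: the per-clause gadget that routes the play through a bad vertex $b_i$ exactly when $\nu$ falsifies $C_i$ is never constructed, and you yourself flag it as ``the hard part.'' It is not a routine detail. At the vertex $c_i$ the graph has no memory of which literal vertices were traversed in the variable-selection block, so no fixed polynomial-size gadget hanging off $c_i$ can deterministically ``re-examine'' $\nu$; any workaround (players re-certifying literal values, penalties for inconsistent re-choices) introduces new strategic choices whose equilibrium behaviour must itself be analyzed, and none of that analysis is present. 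As written, the reduction is incomplete.

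The paper avoids the gadget entirely by \emph{not} trying to preserve the semantics ``player~$i$ wins iff it escaped to $t_i$ or $C_i$ is satisfied.'' Instead it keeps the graph of Figure~\ref{figure:reachPSPACEh} unchanged and sets the forbidden sets to $F'_i = \{\ell \mid \ell \text{ a literal of } C_i\} \cup \{t_{n+1}\}$ for $1 \leq i \leq n$, together with $F'_{n+1} = \{t_1,\ldots,t_n\}$ and $F'_{n+2} = \{t_{n+1}\}$. The literal vertices are already in the graph and are already visited during variable selection, so ``$\nu$ satisfies $C_i$'' directly becomes a safety \emph{violation} for player~$i$ — no detection gadget is needed. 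This inverts the clause players' win condition (player~$i$ now wins iff $C_i$ is falsified \emph{and} the play avoids $t_{n+1}$), but the threat structure driving the $(\Leftarrow)$ direction is preserved: if $\nu_{f,g'}$ falsifies $C_k$ and the outcome heads to $t_{n+1}$, player~$k$ has a profitable one-shot deviation to $t_k$. Your correct observation that a naive reinterpretation ``collapses'' led you to over-engineer; the fix is to change the forbidden sets, not the graph.
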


\begin{proof}
Given a fully quantified Boolean formula $\psi$, we construct the same game as in the proof of Proposition~\ref{prop:reachPSPACE-hard} (see Figure~\ref{figure:reachPSPACEh}), except that each player~$i$, $1 \leq i \leq n+2$, aims at avoiding the set $F'_i$ (instead of visiting the set $F_i$) defined as follows:
\begin{itemize}
	\item for all $i$, $1 \leq i \leq n$, $F'_i = \{ \ell \in V \mid \ell \text{ is a literal of clause } C_i \}  \cup \{t_{n+1}\}$;
	\item $F'_{n+1} = \{t_1, t_2, \ldots t_n\}$;
	\item $F'_{n+2} = \{t_{n+1}\}$.
\end{itemize}
Recall how the sets $F_i$ were defined: $F_{n+1} = \{t_{n+1}\}$, $F_{n+2} = \{t_1, t_2, \ldots t_n\}$, and for all $i$, $1 \leq i \leq n$, $F_i = \{ \ell \in V \mid \ell \text{ is a literal of clause } C_i \}  \cup \{t_{i}\}$. Hence we have a clear duality for players~$n+1$ and~$n+2$: a play $\rho$ visits $F_{n+1}$ (resp. $F_{n+2}$) if and only if $\rho$ avoids $F'_{n+1}$ (resp. $F'_{n+2}$). This is not the case for the others players, but one can check that the proof works in the same way as for Boolean games.   
\end{proof}

\subsubsection{Proof of Corollaries~\ref{cor:finitememory} and~\ref{cor:PSPACE}}
\label{sec:cor}

We conclude Section~\ref{sec:classes} with the proof of two previous corollaries. We begin with Corollary ~\ref{cor:PSPACE} stating that the constraint problem for SPEs (instead of weak SPEs) in Boolean games with Reachability and Safety objectives is PSPACE-complete.

\begin{proof}[Proof of Corollary ~\ref{cor:PSPACE}]
As weak SPEs and SPEs are equivalent notions for Reachability objectives (Proposition~\ref{prop:reach}), by Theorem~\ref{thm:PSPACE}, the constraint problem for SPEs for Boolean games with Reachability objectives is PSPACE-complete. 

Let us study the case of Safety objectives. The reduction from QBF proposed in the proof of Proposition~\ref{prop:pspacehSafety} uses the game $\mathcal{G}_{\psi}$ of Figure~\ref{figure:reachPSPACEh}. Due to the structure of the underlying graph, all weak SPEs of $\mathcal{G}_{\psi}$ are SPEs since any deviating strategy from a given strategy is necessarily finitely deviating. This shows that the constraint problem for SPEs is PSPACE-hard for Safety objectives. It is proved in~\cite{Ummels05} that this problem is in PSPACE. 
\end{proof}

We now turn to Corollary~\ref{cor:finitememory} that states that if there exists a weak SPE in a Boolean game, then there exists a finite-memory weak SPE with the same payoff such that the strategy memory sizes are polynomial for all objectives, except for Reachability and Safety objectives where the sizes are exponential.

\begin{proof}[Proof of Corollary ~\ref{cor:finitememory}]
For the objectives that are prefix-independent, this is an immediate consequence of Proposition~\ref{prop:lasseFolkThm} with strategy sizes in $\mathcal{O}(|V|^3 \cdot |\Pi|)$. For Reachability and Safety objectives, we need to transform the Boolean game with Reachability (resp. Safety) objectives into a Boolean game with B\"uchi (resp. Co-B\"uchi) objectives, as done in the proof of Proposition~\ref{prop:PSPACE} (resp. Proposition~\ref{prop:PSPACEsafety}). Recall that the set of vertices of the latter game is equal to $V' = V \times 2^{|\Pi|}$. We can thus again apply Proposition~\ref{prop:lasseFolkThm} and obtain strategy sizes in $\mathcal{O}(|V'|^3 \cdot |\Pi|) = \mathcal{O}(|V|^3 \cdot |\Pi| \cdot 2^{3 \cdot |\Pi|})$.
\end{proof}

\section{Fixed parameter tractability}
\label{sec:FPT}

In this section, we show that the constraint problem is $P$-complete for Explicit Muller objectives, that it is fixed parameter tractable for the other classical $\omega$-regular objectives, and that it becomes polynomial when the number of players is fixed. 

These results do not rely on the concept of good symbolic witness (as in Section~\ref{sec:classes}) but rather on the following algorithm based on Theorem~\ref{folkThm} to solve the constraint problem. Given a Boolean game $(\mathcal{G},v_0)$ and thresholds $x, y \in \{0,1\}^{|\Pi|}$, 
\begin{itemize}
\item Compute the initial sets $\rP_0(v)$, $v \in V$, and repeat the Remove-Adjust procedure (see Definition~\ref{def:remove}) until reaching the fixpoint $\rP_{k^*}(v)$, $v \in V$, 
\item Then check whether $\rP_{k^*}(v) \neq \emptyset$ for all $v \in \Succ^*(v_0)$ and whether there exists a payoff $p \in \rP_{k^*}(v_0)$ such that $x \leq p \leq y$.
\end{itemize}
We call this algorithm the \emph{decision algorithm} and its first part computing the fixpoint the \emph{fixpoint algorithm}.

\subsection{Complexity of the decision algorithm}
\label{section:complexityFixpoint}

We here study the time complexity of the decision algorithm. We express it in terms of three parameters:

\begin{itemize}
\item $\mathcal{O}(\init)$: the complexity of computing $\rP_0(v)$ for some given vertex $v$, 
\item $m = \Max_{v\in V} |\rP_0(v)|$: the maximum number of payoffs in the sets $\rP_0(v)$, $v \in V$,
\item $\mathcal{O}(path)$: the complexity of determining whether there exists a play with a given payoff $p$ from a given vertex $v$. (This test is required in both the computation of $\rP_0(v)$ and the Adjust operation.)
\end{itemize}

\begin{lemma} \label{lem:compl}
The time complexity of the decision algorithm is in $\mathcal{O}(m^3 \cdot |V| \cdot |\Pi| \cdot init \cdot path \cdot (|V| + |E|))$.
\end{lemma}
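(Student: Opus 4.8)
The plan is to bound separately the number of iterations of the Remove-Adjust procedure and the cost of a single iteration, then multiply, and finally add the cost of the initialisation phase and of the post-processing test.

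First I would count iterations. Each time the Remove operation actually removes something, it deletes one payoff from exactly one set $\rP_{k-1}(v)$; since every $\rP_0(v)$ has at most $m$ payoffs and there are $|V|$ vertices, the Remove operation can be productive at most $m \cdot |V|$ times. Between two consecutive productive Remove steps there is exactly one (possibly productive) Adjust step, and the fixpoint proposition guarantees the procedure halts one Remove step after the last productive one. Hence the total number of Remove-Adjust rounds is $\mathcal{O}(m \cdot |V|)$. (This is the step I expect to be the main obstacle: one must be careful that a \emph{single} Remove step removes only one payoff, so that the accounting is tight, and that the interleaving of Remove and Adjust does not blow up the count; the definition in Definition~\ref{def:remove} is tailored precisely so that each parity-$k$ step touches at most one payoff.)

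Next I would bound the cost of one round. A Remove step scans, for each vertex $v \in V_i$, each payoff $p \in \rP_{k-1}(v)$ and each successor $v' \in \Succ(v)$, and compares $p_i$ against $p'_i$ for every $p' \in \rP_{k-1}(v')$; this costs $\mathcal{O}(m^2 \cdot |\Pi| \cdot (|V| + |E|))$ in a crude reading, but since only one removal is needed we can stop at the first witness, giving $\mathcal{O}(m^2 \cdot |\Pi| \cdot (|V|+|E|))$ as a safe bound. An Adjust step, for the unique removed payoff $p$, iterates over all $u \in V$ with $p \in \rP_{k-1}(u)$ and, for each, runs the reachability-of-a-labeled-play test at cost $\mathcal{O}(path)$; this costs $\mathcal{O}(|V| \cdot path)$. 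So one round costs $\mathcal{O}(m^2 \cdot |\Pi| \cdot (|V|+|E|) + |V| \cdot path)$. Multiplying by the $\mathcal{O}(m \cdot |V|)$ rounds, and absorbing the additive terms into a product (using $path \ge 1$, $init \ge 1$, and the fact that the reachability test dominates the purely combinatorial scan in the intended instantiations), yields $\mathcal{O}(m^3 \cdot |V| \cdot |\Pi| \cdot path \cdot (|V|+|E|))$ for the fixpoint algorithm, and an extra factor $init$ once one also accounts for computing the $|V|$ initial sets $\rP_0(v)$, each at cost $\mathcal{O}(init)$ (the total $\mathcal{O}(|V| \cdot init)$ being dominated by the product already present).

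Finally I would observe that the post-processing — checking $\rP_{k^*}(v) \neq \emptyset$ for all $v \in \Succ^*(v_0)$ (a reachability computation in $G$, so $\mathcal{O}(|V|+|E|)$) and searching $\rP_{k^*}(v_0)$ for a payoff $p$ with $x \le p \le y$ (at most $m$ comparisons of $|\Pi|$-vectors) — is dominated by the fixpoint cost. Combining all contributions gives the claimed bound $\mathcal{O}(m^3 \cdot |V| \cdot |\Pi| \cdot init \cdot path \cdot (|V|+|E|))$. The argument is essentially a bookkeeping one; the only genuinely delicate point is the iteration count, which relies on reading Definition~\ref{def:remove} as removing a single payoff per step and on Proposition stating the fixpoint is reached one Remove step after stabilisation.
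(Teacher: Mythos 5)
Your proposal is correct and follows essentially the same route as the paper's proof: bound the number of Remove-Adjust rounds by $m\cdot|V|$ (one payoff removed per productive round, at most $m\cdot|V|$ payoff–vertex pairs), bound a single Remove step by $\mathcal{O}(m^2\cdot|\Pi|\cdot(|V|+|E|))$ and a single Adjust step by a term dominated by $\mathcal{O}((|V|+|E|)\cdot m\cdot|\Pi|\cdot path)$, then absorb the $\mathcal{O}(|V|\cdot init)$ initialisation and the $\mathcal{O}(|V|+|E|+m\cdot|\Pi|)$ post-processing into the product. The only cosmetic difference is that your Adjust bound omits the cost of restricting the graph to the vertices still labeled by $p$ before running the $path$ test, which the paper accounts for explicitly but which is harmlessly absorbed into the same final bound.
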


Expressing the complexity in this way will be useful in Section~\ref{sec:ExplicitMuller} (dedicated to Boolean games with Explicit Muller objectives and to the fixed parameter tractability of the constraint problem). We do not claim that the given complexity is the tightest one but this is enough for our purpose. 

The gain function $\Gain_i$ takes its values in $\{0,1\}$ for all $i$, hence $m$ is bounded by $2^{|\Pi|}$. Moreover by definition of $\rP_0(v)$, $\mathcal{O}(\init)$ is in $\mathcal{O}(2^{|\Pi|} \cdot path)$. For Boolean games with Explicit Muller objectives, we will provide a polynomial bound for both $m$ and  $\mathcal{O}(\init)$ (see Section~\ref{sec:ExplicitMuller}). Next Section~\ref{sec:path} is devoted to the study of $\mathcal{O}(path)$.

\begin{proof}[Proof of Lemma~\ref{lem:compl}] 
We suppose that the graph $G$ is given as $|V|$ lists of successors and the sets $\rP_k(v)$ are given as $|V|$ lists of payoffs. The comparison between two payoffs is in $\mathcal{O}(|\Pi|)$ time.

We first study the time complexity of the fixpoint algorithm.
\begin{itemize}
\item The first step of the algorithm is to compute the sets $\rP_0(v)$ for all $v$, that takes $\mathcal{O}(|V| \cdot init)$ time.

\item Then the algorithm repeats the Remove-Adjust procedure until reaching a fixpoint. There are at most $m \cdot |V|$ repetitions of this procedure since it removes at least one payoff from some $\rP_k(v)$.

\item The computation of one Remove operation is in $\mathcal{O}((|V| + |E|)\cdot m^2 \cdot |\Pi|)$ time. Indeed we potentially have to consider all the vertices $v$ and their successors $v'$ to check whether there exists $p \in \rP_k(v)$ such that $p_i < p'_i$ for all $p' \in \rP_k(v')$ (with $v \in V_i$). This takes $\mathcal{O}((|V| + |E|)\cdot m^2 \cdot |\Pi|)$ time. If the check is positive, we have to remove $p$ from $\rP_k(v)$ that takes $\mathcal{O}(m \cdot |\Pi|)$ time.

\item The computation of one Adjust operation is in $\mathcal{O}((|V| + |E|)\cdot m \cdot |\Pi| \cdot path)$ time. Indeed if $p$ is the payoff removed by the Remove operation just before, for all $u$ such that $p \in \rP_k(u)$, we have to check whether there exists a $(p,k)$-labeled play with payoff $p$ from $u$ and remove $p$ from $\rP_k(u)$ in case of non existence of such a play. This can be done by computing a graph $G'$ from $G$ by restricting $V$ to the vertices $u$ such that $p \in \rP_k(u)$ (in $\mathcal{O}((|V|\cdot m \cdot |\Pi| + (|V|+|E|))$ time), and for each $u$ by checking in $G'$ whether there exists a play with payoff $p$ from $u$ and then remove $p$ from $\rP_k(u)$ in case of non existence (in $\mathcal{O}(|V| \cdot (path + m \cdot |\Pi|))$ time). 
\end{itemize}

Therefore the total time complexity of the fixpoint algorithm is in $\mathcal{O}(|V| \cdot init + m \cdot |V| \cdot [(|V| + |E|)\cdot m^2 \cdot |\Pi| + (|V| + |E|)\cdot m \cdot |\Pi|\cdot path ]) = \mathcal{O}(|V| \cdot init + m^3 \cdot |V| \cdot |\Pi| \cdot path \cdot (|V| + |E|))$. This is bounded by $\mathcal{O}(m^3 \cdot |V| \cdot |\Pi| \cdot init \cdot path \cdot (|V| + |E|))$.

To get the time complexity of the decision algorithm, it remains to add the time complexities to test if $\rP_{k^*}(v) \neq \emptyset$ for all $v \in \Succ^*(v_0)$ and if there exists a payoff $p \in \rP_{k^*}(v_0)$ such that $x \leq p \leq y$. The first part is done in $\mathcal{O}((|V|+|E|))$ time by a depth-first search of $G$ from $v_0$, and the second part is done in $\mathcal{O}(m \cdot |\Pi|)$ time. The overall complexity of the decision algorithm is thus in $\mathcal{O}(m^3 \cdot |V| \cdot |\Pi| \cdot init \cdot path \cdot (|V| + |E|))$ time as announced.
\end{proof}

\subsection{Existence of a play with a given payoff}
\label{sec:path}

The purpose of this section is to prove the next lemma stating the complexity $\mathcal{O}(path)$ for all kinds of $\omega$-regular objectives.  

\begin{lemma} \label{lem:FPT}  
Let $\mathcal{G}$ be a Boolean game. Let $p \in \{0,1\}^{|\Pi|}$ and $v \in V$.
\begin{enumerate}
\item Determining whether there exists a play with payoff $p$ from $v$ is
\begin{itemize}
\item in polynomial time for B\"uchi, co-B\"uchi, Explicit Muller, and Parity objectives,
\item in $\mathcal{O}(2^{|\Pi|} (|V| + |E|))$ time for Reachability and Safety objectives, and
\item in ${\cal O}(2^L \cdot M + (L^L \cdot |V|)^5 )$ time for Rabin, Streett, and Muller objectives, where $L = 2^\ell$ and
\begin{itemize}
\item $\ell = \Sigma_{i = 1}^{|\Pi|} 2 \cdot k_i$ and $M = \mathcal{O}(\Sigma_{i = 1}^{|\Pi|} 2 \cdot k_i)$ such that for each player~$i \in \Pi$, $k_i$ is the number of his pairs $(G^i_j,R^i_j)_{1 \leq j \leq k_i}$ in the case of Rabin and Streett objectives, and
\item $\ell = \Sigma_{i = 1}^{|\Pi|} d_i$ and $M = \mathcal{O}(\Sigma_{i = 1}^{|\Pi|} |\mathcal{F}_i| \cdot d_i)$ such that for each player~$i \in \Pi$, $d_i$ (resp. $|\mathcal{F}_i|$), is the number of his colors (the size of his family of subsets of colors) in the case of Muller objectives.
\end{itemize}
\end{itemize}
\item When the number $|\Pi|$ of players is fixed, for all these kinds of objectives, the existence of a play with payoff $p$ from $v$ can be solved in polynomial time.
\end{enumerate}
\end{lemma}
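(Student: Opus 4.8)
The plan is to reduce, in each case, the question ``is there a play $\rho$ from $v$ with $\Gain(\rho)=p$?'' to a reachability/emptiness question on a suitable product construction, and then carefully bound the size of that product. Write $P = \{ i \in \Pi \mid p_i = 1\}$ and $N = \{i\in\Pi\mid p_i=0\}$. A play has payoff $p$ iff it satisfies $\bigcap_{i\in P}\Win_i \cap \bigcap_{i\in N}\overline{\Win_i}$. The first observation is that for the prefix-independent objectives (B\"uchi, co-B\"uchi, Parity, Explicit Muller, Muller, Rabin, Streett), the complement of each $\Win_i$ is again an objective of a type that dualizes nicely (co-B\"uchi $\leftrightarrow$ B\"uchi, Parity $\leftrightarrow$ Parity with shifted colours, Rabin $\leftrightarrow$ Streett, Muller $\leftrightarrow$ Muller with complemented family, Explicit Muller $\leftrightarrow$ Explicit Muller with complemented family), so the conjunction is itself a conjunction of objectives of at most two ``dual'' types. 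Since all these objectives depend only on $\Inf(\rho)$, I would first compute, for the subgraph $G_v$ of vertices reachable from $v$, its strongly connected components; a play from $v$ with a prescribed infinity set exists iff some SCC $S$ (reachable from $v$, with at least one edge) has $S$ satisfying all the constraints simultaneously, where ``$S$ satisfies constraint $i$'' is read off directly from $S$ as a vertex set. For B\"uchi/co-B\"uchi/Explicit Muller this is a direct polynomial check over SCCs; for Parity one checks the parity of the maximal colour in $S$ for each player and for its negation; this gives the polynomial-time claims of item~(1), first bullet.

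For Reachability and Safety the objectives are not prefix-independent, so I would instead do the standard product with $2^{\Pi}$: a vertex becomes $(w, I)$ where $I\subseteq\Pi$ records which reachability sets have already been visited (equivalently, which safety sets have been violated). There are $|V|\cdot 2^{|\Pi|}$ such vertices and $|E|\cdot 2^{|\Pi|}$ edges, but as in the proof of Proposition~\ref{prop:PSPACE} the component $I$ is monotone along a play, so a play with payoff $p$ exists iff in this product there is a path from $(v,I_0)$ reaching a bottom part where $I$ has stabilised to exactly the set $\{i \mid p_i=1\}$ (for Reachability; dually for Safety), and from there a cycle avoiding all sets $F_i$ with $i\in P$ for Safety, respectively remaining consistent for Reachability. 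This is one reachability computation in a graph of size $\mathcal{O}(2^{|\Pi|}(|V|+|E|))$, which is exactly the announced bound. For Rabin/Streett/Muller with an unbounded number of pairs or colours per player, the conjunction $\bigcap_{i\in P}\Win_i\cap\bigcap_{i\in N}\overline{\Win_i}$ is a conjunction of (generalized) Rabin and Streett conditions; I would take a deterministic parity/Rabin automaton recognising this Boolean combination, whose number of priorities/pairs is governed by $\ell$ (the total number of pairs $\sum 2k_i$, resp. the total number of colours $\sum d_i$) and whose number of states is $L=2^{\ell}$ up to the index-appearance records needed for a Streett-to-Rabin/parity determinisation; the product with $G_v$ then has $\mathcal{O}(L^L\cdot|V|)$ states with the index records, and emptiness of a Rabin/parity condition on a graph of $s$ states costs $\mathcal{O}(s^{\,O(1)})$, giving the $\mathcal{O}(2^L\cdot M + (L^L\cdot|V|)^5)$ bound (the $2^L\cdot M$ term accounting for building the automaton, $M$ being the size of the specification).

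Item~(2) is then immediate: when $|\Pi|$ is fixed, $2^{|\Pi|}$ is a constant, so the Reachability/Safety bound is polynomial; and for Rabin/Streett/Muller with a fixed number of players the parameters $\ell$ and $L$ are still bounded by polynomials in $k_i$, $d_i$, $|\mathcal F_i|$, but one no longer has the luxury of them being constants — so instead I would argue directly that $\bigcap_{i\in P}\Win_i\cap\bigcap_{i\in N}\overline{\Win_i}$, being a fixed finite intersection of $\omega$-regular conditions each of which has polynomial-size deterministic Streett/Rabin recognisers, is itself recognised by a deterministic Rabin (equivalently parity) automaton of polynomial size, by a constant number of Streett/Rabin product-and-determinise steps (constant because $|\Pi|$ is fixed); nonemptiness of the product with $G_v$ is then polynomial. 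The main obstacle I anticipate is the bookkeeping in the Rabin/Streett/Muller case: getting the exponents in $\mathcal{O}(2^L\cdot M + (L^L\cdot|V|)^5)$ to come out exactly as stated requires being precise about which determinisation (index-appearance record / Safra-style) is used and how its state-count depends on $\ell$ versus on the number of acceptance pairs, and about the fact that one must first union the per-player conditions (with possible complementation) before determinising; the prefix-independent ``read it off the SCCs'' cases and the $2^{|\Pi|}$-product for Reachability/Safety are routine by comparison.
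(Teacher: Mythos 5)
Your overall reduction --- combine the $\Win_i$ and their complements into a single $\omega$-regular path-existence question on $G$ --- is the same as the paper's, and your $2^{\Pi}$-product for Reachability and Safety is a sound, slightly more hands-on way to get the $\mathcal{O}(2^{|\Pi|}(|V|+|E|))$ bound (the paper instead quotes the known bound for generalized reachability). But two of your steps fail as written. First, for the prefix-independent cases you cannot ``read the constraints off the SCCs'': the set $\Inf(\rho)$ of a play is an arbitrary strongly connected \emph{subset} of an SCC, not the SCC itself. For Parity, checking the parity of the maximal colour of each SCC is only a sufficient condition --- a play may recur in a proper strongly connected subset whose maximal colour has a different parity --- and the conjunction of parity conditions is a Streett condition, whose path-nonemptiness requires the recursive Emerson--Lei decomposition (delete offending vertices, recompute SCCs, recurse), which is what the paper invokes. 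The same issue affects your handling of the negated (co-B\"uchi-type) constraints unless you first delete the forbidden vertices, and of Explicit Muller with payoff $(0,\dots,0)$, where one must decide the \emph{complement} of an explicit Muller condition rather than scan SCCs.

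Second, your Part~2 argument for Rabin, Streett and Muller does not go through: a deterministic Streett condition with $k$ pairs admits no polynomial-size deterministic Rabin or parity recogniser (the index-appearance-record blow-up is of order $k!$), and $k=k_i$ remains part of the input even when $|\Pi|$ is fixed, so ``a constant number of product-and-determinise steps'' is not a polynomial-time operation. The missing idea, which is the paper's argument, is more elementary: write the Boolean combination $\phi$ of B\"uchi conditions in disjunctive normal form; when $|\Pi|$ is fixed the number of disjuncts is $\prod_{i\mid p_i=1}k_i\cdot\prod_{i\mid p_i=0}2k_i$ (resp.\ the analogous product over the $|\mathcal{F}_i|$ and $d_i$), hence polynomial, and each disjunct is a conjunction of a Generalized B\"uchi and a co-B\"uchi condition, decidable in polynomial time. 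Finally, for the stated $\mathcal{O}(2^L\cdot M+(L^L\cdot|V|)^5)$ bound in Part~1 your Safra-style sketch would still have to be made to match the exponents; the paper obtains this bound directly by casting $\W$ as a Boolean combination of B\"uchi objectives of size $M$ with $\ell$ variables and citing the known algorithm for such objectives.
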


The general approach to prove this lemma is the following one. A play with payoff $p$ from $v$ in a Boolean game $\mathcal{G}$ is a play satisfying an objective $\W$ equal to the conjunction of objectives $\Win_i$ (when $p_i = 1$) and of objectives $V^\omega \setminus \Win_i$ (when $p_i = 0$). It is nothing else than an infinite path in the underlying graph $G = (V,E)$ satisfying some particular $\omega$-regular objective $\W$. The existence of such paths is a well studied problem; we gather in the next proposition the known results that we need for proving Lemma~\ref{lem:FPT}. Recall that a \emph{Generalized Reachability} (resp. \emph{Generalized B\"uchi}) objective $\W$ is a conjunction of Reachability (resp. B\"uchi) objectives. Moreover, an objective $\W$ equal to a Boolean combination of B\"uchi objectives, called a \emph{BC B\"uchi objective}, is defined as follows. Let $F_1, \ldots, F_\ell$ be $\ell$ subsets of $V$, and $\phi$ be a Boolean formula over variables $f_1, \ldots, f_\ell$. We say that an infinite path $\rho$ in $G$ satisfies $(\phi,F_1, \ldots, F_\ell)$ if the truth assignment 

\centerline{$f_i = 1$ if and only if $\Inf(\rho) \cap F_i \neq \emptyset$, and $f_i = 0$ otherwise}

\noindent
satisfies $\phi$. All operators $\vee$, $\wedge$, $\neg$ are allowed in a BC B\"uchi objective. However we denote by $|\phi|$ the size of $\phi$ equal to the number of disjunctions and conjunctions inside $\phi$, and we say that the BC B\"uchi objective $(\phi,F_1, \ldots, F_\ell)$ is of size $|\phi|$ and with $\ell$ variables.

\begin{prop}
\label{prop:path}
Let $G = (V,E)$ be a graph, $v \in V$ be one of its vertices, and $\W \subseteq V^\omega$ be an objective. Then deciding the existence of an infinite path from $v$ in $G$ that satisfies $\W$ is
\begin{itemize}
\item in polynomial time for $\W$ equal to either a Streett objective $\W$, or an Explicit Muller objective, or the opposite of an Explicit Muller objective, or a conjunction of a Generalized B\"uchi objective and a co-B\"uchi objective,
\item in $\mathcal{O}(2^\ell (|V| + |E|))$ time for $\W$ equal to a conjunction of a Generalized Reachability objective and a Safety objective, where $\ell$ is the number of reachability objectives, 
\item in ${\cal O}(2^L \cdot |\phi| + (L^L \cdot |V|)^5 )$ time for a BC B\"uchi objective $\W = (\phi,F_1, \ldots, F_\ell)$, where $L = 2^\ell$.

\end{itemize}
\end{prop}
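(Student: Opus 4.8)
The plan is to reduce each case to an emptiness question for a one-player $\omega$-automaton over $G$ and then to invoke, or re-derive, the appropriate classical graph algorithm. The common tool for the prefix-independent conditions is the standard observation that an infinite path from $v$ whose acceptance depends only on $\Inf(\rho)$ exists iff there is a strongly connected set $S$ with $G[S]$ containing at least one edge, with $S$ reachable from $v$, such that the constant-$\Inf$ play with $\Inf(\rho)=S$ satisfies the condition. So in each of the first four cases it suffices to search among such sets $S$.

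For the first item I would handle the four conditions one at a time. For a Streett objective, emptiness is the classical polynomial-time problem solved by the recursive SCC procedure that, whenever a reachable nontrivial SCC $S$ has a violated pair $j$ (that is $S\cap G_j\neq\emptyset$ but $S\cap R_j=\emptyset$), deletes $S\cap G_j$ and recurses on the SCCs of $G[S\setminus G_j]$; I would simply cite \cite{2001automata}. For an Explicit Muller objective $\Inf(\rho)\in\mathcal F$, with $\mathcal F$ given explicitly, it is enough to test for each $F\in\mathcal F$ whether $G[F]$ is strongly connected with an edge and whether $F\subseteq\Succ^*(v)$, giving $\mathcal O(|\mathcal F|\,(|V|+|E|))$. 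The subtle one is the \emph{opposite} $\Inf(\rho)\notin\mathcal F$, where $2^V\setminus\mathcal F$ is too large to enumerate: here I would run a memoized recursion $\mathrm{BAD}(S)$ that returns true if $S\notin\mathcal F$ and otherwise recurses on every nontrivial SCC of $G[S\setminus\{w\}]$ over all $w\in S$, launched from every reachable nontrivial maximal SCC of $G$; correctness holds because a bad strongly connected set $T$ strictly inside $S$ avoids some $w\in S$ and so lies inside a nontrivial SCC of $G[S\setminus\{w\}]$, and the point is that a set is re-expanded only when it belongs to $\mathcal F$, so with memoization there are at most $|\mathcal F|$ expansions of polynomial cost each. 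Finally, for the conjunction of a Generalized B\"uchi objective $(F_1,\dots,F_k)$ with a co-B\"uchi objective $F$, I would delete $F$ from $G$ and look in the reachable part for a nontrivial SCC meeting every $F_i$, which is polynomial.

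For the second item I would restrict to $G'=G[V\setminus F]$ (the Safety part; answer no if $v\in F$) and look for an infinite path of $G'$ from $v$ that visits every $F_i\setminus F$ at least once, using the usual subset-tracking product $H$ on $(V\setminus F)\times 2^{\{1,\dots,\ell\}}$ that records the reachability goals already met (an edge of $G'$ updates the second component by adding $\{i:u'\in F_i\}$); after precomputing the set $W$ of vertices of $G'$ from which some infinite path starts (those reaching a nontrivial SCC of $G'$), the answer is yes iff some state $(u,\{1,\dots,\ell\})$ with $u\in W$ is reachable in $H$, a reachability test in $\mathcal O(2^\ell(|V|+|E|))$. For the third item, I would color each vertex $u$ by $c(u)=\{i:u\in F_i\}\subseteq\{1,\dots,\ell\}$, so there are $L=2^\ell$ colors, and note that a path satisfies $(\phi,F_1,\dots,F_\ell)$ iff $\phi$ holds under $f_i=1\iff i\in\bigcup\{\,c:c\in\Inf(c(\rho))\,\}$; thus acceptance is a Muller condition $\mathcal M$ on the $L$ colors, computable in $\mathcal O(2^L\cdot|\phi|)$ by evaluating $\phi$ on each candidate set of colors. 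Then I would determinize this Muller condition by an appearance-record (Zielonka / index-appearance-record) construction, obtaining a deterministic automaton with at most $L^L$ states and a Rabin (equivalently parity) acceptance with $\mathcal O(L)$ pairs, take its synchronous product with $G$ (at most $L^L\cdot|V|$ states), and decide the existence of an accepting lasso in this Rabin graph, which is polynomial in its size and so bounded by $(L^L|V|)^5$; the total is $\mathcal O(2^L\cdot|\phi|+(L^L|V|)^5)$.

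I expect the main obstacle to be the opposite-Explicit-Muller case, because one must not enumerate the exponential complement family, and the key insight keeping the algorithm polynomial is precisely that the recursion re-expands only sets that actually belong to $\mathcal F$; a secondary point requiring care is arranging the determinization in the BC B\"uchi case so that the state count is exactly the $L^L$ appearing in the statement rather than a larger tower.
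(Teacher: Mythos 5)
Your proof is correct, but it takes a genuinely different route from the paper. The paper's proof is essentially a sequence of reductions to \emph{one-player instances of two-player zero-sum games}, each case discharged by citation: Streett emptiness via Emerson--Lei, Explicit Muller and its opposite via Horn's polynomial algorithm for explicit Muller games (the opposite case obtained by swapping the two players), Generalized B\"uchi $\wedge$ co-B\"uchi via a counter reduction to B\"uchi $\wedge$ co-B\"uchi and de Alfaro--Faella, Generalized Reachability via Fijalkow--Horn, and BC B\"uchi via the bound of Bruy\`ere--Hautem--Raskin (which is where the $\mathcal{O}(2^L\cdot|\phi| + (L^L\cdot|V|)^5)$ expression in the statement literally comes from). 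You instead give self-contained graph algorithms built on the observation that, for prefix-independent conditions, a witnessing path exists iff some reachable strongly connected set $S$ with an edge in $G[S]$ satisfies the condition when $\Inf(\rho)=S$. Your memoized recursion for the opposite of an Explicit Muller objective is a correct and nice replacement for the citation of Horn: completeness holds because a bad strongly connected $T\subsetneq S$ omits some $w\in S$ and hence sits inside a nontrivial SCC of $G[S\setminus\{w\}]$, and polynomiality holds because only members of $\mathcal{F}$ are ever expanded. Your subset-tracking product for Generalized Reachability $\wedge$ Safety and your LAR-based treatment of BC B\"uchi also match the stated bounds (up to the polynomial-in-$L$ slack you already flag, which is harmless given the fifth power). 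What the paper's approach buys is brevity and reuse of two-player results needed elsewhere; what yours buys is a self-contained, purely one-player argument that makes the source of each complexity bound visible.

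One caveat: in the Generalized B\"uchi $\wedge$ co-B\"uchi case, co-B\"uchi only requires $\Inf(\rho)\cap F=\emptyset$, so the path may traverse $F$ finitely often before settling into its terminal SCC. You must therefore look for a nontrivial SCC of $G[V\setminus F]$ meeting every $F_i$ that is reachable from $v$ \emph{in $G$}, not merely in $G[V\setminus F]$; as written, ``delete $F$ and look in the reachable part'' risks discarding witnesses whose only access route passes through $F$. (Your handling of the Safety case in the second item, where deletion of $F$ is exact, is fine.) This is a one-line fix and does not affect the overall correctness of your argument.
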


\begin{proof}
Let $\W$ be an objective. If it is a Streett objective, then the result is proved in~\cite{EmersonL87}. 

For the other objectives, we use known results about \emph{two-player zero-sum games}, where player~$A$ aims at satisfying a certain objective $\W$ whereas player $B$ tries to prevent him to satisfy it.  A classical problem is to decide whether player~$A$ has a winning strategy that allows him to satisfy $\W$ against any strategy of player~$B$, see for instance~\cite{Bruyere17,2001automata}. When player~$A$ is the only one to play, the existence of a winning strategy for him is equivalent to the existence of a path satisfying $\W$ (see \cite[Section 3.1]{Bruyere17}). This is exactly the problem that we want to solve. In the sequel of the proof, we mean by $(G,\W)$ a two-player zero-sum game, where player~$A$ (resp. player~$B$) aims at satisfying $\W$ (resp. $V^\omega \setminus \W$).

If $\W$ is an Explicit Muller objective, then deciding the existence of a winning strategy for player~$A$ (resp. player~$B$) $(G,\W)$ can be done in polynomial time by~\cite{Horn08}. Thus the case where $\W$ is the opposite of an Explicit Muller objective is also proved (by exchanging players~$A$ and $B$).

Suppose that $\W$ is the conjunction of a Generalized B\"uchi objective and a co-B\"uchi objective. By a classical reduction, the game $(G,\W)$ can be polynomially transformed into a game $(G',\W')$ with an objective $\W'$ equal to the conjunction of a B\"uchi objective and a co-B\"uchi objective. The existence of a winning strategy for player~$A$ in the latter game can be tested in polynomial time~\cite{AlfaroF07}. 

Suppose that $\W$ is the conjunction of a Generalized Reachability objective and a Safety objective, such that $\ell$ is the number of Reachability objectives and $F$ is the set of vertices to be avoided in the Safety objective. We first treat separately the Safety objective by removing from $G$ all the vertices of $F$. This can be done in $\mathcal{O}(|V| + |E|)$ time. In the resulting graph $G'$, we then test the existence of a winning strategy for player~$A$ in the game $(G',\W')$ with $\W'$ being the Generalized Reachability objective. This can be done in $\mathcal{O}(2^{\ell} (|V| + |E|))$ time~\cite{FijalkowH13}. 

If $\W$ is a BC B\"uchi objective $(\phi,F_1, \ldots, F_\ell)$, then deciding the existence of a winning strategy for player~$A$ in the game $(G,\W)$ can be done in ${\cal O}(2^L \cdot |\phi| + (L^L \cdot |V|)^5 )$ time with $L = 2^\ell$ by~\cite{BruyereHR17}. 
\end{proof}

\begin{proof}[Proof of Lemma~\ref{lem:FPT}]
Let us prove Part 1 of the lemma. A play with payoff $p$ from $v$ in $\mathcal{G}$ is a play satisfying an objective $\W$ equal to the conjunction of objectives $\Win_i$ (when $p_i = 1$) and of objectives $V^\omega \setminus \Win_i$ (when $p_i = 0$). For each type of Boolean objectives $ \Win_i$, we first explain what kind of objective $\W$ we obtain and we then apply Proposition~\ref{prop:path}. 
\begin{itemize}

\item Consider a Boolean game $\mathcal{G}$ with Parity objectives. In this case, as $\Win_i$ is a Parity objective for all $i \in \Pi$, each $V^\omega \setminus \Win_i$ is again a Parity objective, and $\W$ is thus a conjunction of Parity objectives which is a Streett objective~\cite{ChatterjeeHP07}. Therefore the existence of a play with payoff $p$ in $\mathcal{G}$ can be tested in polynomial time by Proposition~\ref{prop:path}.

\item Consider the case of B\"uchi objectives. Then, the intersection of B\"uchi objectives $\Win_i$ (when $p_i = 1$) is a Generalized B\"uchi objective and the intersection of co-B\"uchi objectives $V^\omega \setminus \Win_i$ (when $p_i = 0$) is again a co-B\"uchi objective. Hence $\W$ is the conjunction of a Generalized B\"uchi objective and a co-B\"uchi objective. The existence of a play with payoff $p$ in $\mathcal{G}$ can be tested in polynomial time~by Proposition~\ref{prop:path}. 

Notice that the case of Boolean games with co-B\"uchi objectives is solved exactly in the same way. Indeed we have the same kind of objective $\W$ since $\Win_i$ is a co-B\"uchi objective if and only if $V^\omega \setminus \Win_i$ is a B\"uchi objective. 

\item Consider a Boolean game with Reachability objectives. The intersection of Reachability objectives $\Win_i$ (when $p_i = 1$) is a Generalized Reachability objective and the intersection of Safety objectives $V^\omega \setminus \Win_i$ (when $p_i = 0$) is again a Safety objective. The existence of a play with payoff $p$ in $\mathcal{G}$ can be tested in $\mathcal{O}(2^{|\Pi|} (|V| + |E|))$ time by Proposition~\ref{prop:path} as there are at most $|\Pi|$ Reachability objectives. 

The case of Boolean games with Safety objectives is solved in the same way. 

\item Consider a Boolean game with Rabin objectives (with the related families $(G^i_j,R^i_j)_{1 \leq j \leq k_i}$, $i \in \Pi$). In this case, the objective $\W$ is the conjunction of Rabin objectives (when $p_i = 1$) and of Streett objectives (when $p_i = 0$), that is, $\W$ is a BC B\"uchi objective $(\phi,(G^i_j,R^i_j)_{1 \leq j \leq k_i}, i \in \Pi)$ such that
\begin{eqnarray} \label{eq:Rabin}
\phi &=& \bigwedge_{i \mid p_i = 1} \bigvee_{j = 1}^{k_i} ( g^i_j \wedge \neg r^i_j)  ~~\wedge~~ \bigwedge_{i \mid p_i = 0} \bigwedge_{j = 1}^{k_i}  ( \neg g^i_j \vee  r^i_j)
\end{eqnarray}
In this formula, the variable $g^i_j$ (resp. $r^i_j$) is associated with the set $G^i_j$ (resp. $R^i_j$), and $\phi$ has size $\mathcal{O}(\Sigma_{i = 1}^{|\Pi|} 2 \cdot k_i)$ and has $\Sigma_{i = 1}^{|\Pi|} 2 \cdot k_i$ variables. The announced complexity for deciding the existence of a play with payoff $p$ follows from Proposition~\ref{prop:path}.

The case of Boolean games with Streett objectives is solved in the same way.

\item The case of Boolean games with Muller objectives (with the related coloring functions $\Omega_i : V \rightarrow \{1,\ldots,d_i\}$ and families $\mathcal{F}_i \subseteq 2^{\Omega_i(V)}$, $i \in \Pi$) is treated as in the previous item. Indeed a play satisfies the Muller objective $\Win_i$ if there exists an element $F$ of $\mathcal{F}_i$ such that all colors of $F$ are seen infinitely often along the play while no other color is seen infinitely often. Therefore, as the objective $\W$ is a conjunction of Muller objectives and of the opposite of Muller objectives, $\W$ is a BC B\"uchi objective $(\phi,(F^i_c)_{c \in \{1,\ldots,d_i\}, i \in \Pi})$ described by the following formula $\phi$ 
\begin{eqnarray} \label{eq:Muller}
\phi &=& \bigwedge_{i \mid p_i = 1} \bigvee_{F \in \mathcal{F}_i} ( \bigwedge_{c \in F} f^i_c ~\wedge \bigwedge_{c \not\in F} \neg f^i_c)  
~~\wedge~~ \bigwedge_{i \mid p_i = 0} \bigwedge_{F \in \mathcal{F}_i} ( \bigvee_{c \in F} \neg f^i_c ~\vee \bigvee_{c \not\in F} f^i_c) 
\end{eqnarray}
In this formula, the variable $f^i_c$ is associated with the subset $F^i_{c} = \{ v \in V \mid \Omega_i(v) = c \}$ of vertices colored by color $c \in \{1,\ldots,d_i\}$, $i \in \Pi$.
This formula has size $\mathcal{O}(\Sigma_{i = 1}^{|\Pi|} |\mathcal{F}_i| \cdot d_i)$ and has $\Sigma_{i = 1}^{|\Pi|} d_i$ variables. 

\item It remains to treat the case of Boolean games with Explicit Muller objectives (with the related families $\mathcal{F}_i \subseteq 2^V$, $i \in \Pi$). The approach is a little different in a way to get a polynomial algorithm. By definition, there exists a play $\rho$ with payoff $p$ if and only if for all $i$, $F = \Inf(\rho) \in \mathcal{F}_i$ exactly when $p_i = 1$. 

If $p \neq (0, \ldots, 0)$, such potential sets $F$ can be computed as follows. Initially let $\W$ be an empty set. Then for each $F \in \cup_{i\in \Pi} \mathcal{F}_i$, we compute $q \in \{0,1\}^{|\Pi|}$ such that $q_i = 1$ if and only if $F \in \mathcal{F}_i$, and if $p = q$ we add $F$ to $\W$. Notice that $\W$ can be computed in polynomial time. Hence to test the existence of a play with payoff $p$ in $\mathcal G$, we test the existence of a path in $G$ satisfying the Explicit Muller objective $\W$. This can be done in polynomial time by Proposition~\ref{prop:path}.

If $p = (0, \ldots, 0)$, there exists a play $\rho$ with payoff $p$ if and only if no $F \in \W' = \cup_{i\in \Pi} \mathcal{F}_i$ is equal to $\Inf(\rho)$, \emph{i.e.}, if and only if there exists a path in $G$ satisfying the opposite of the Explicit Muller objective $\W'$. This can be tested in polynomial time by Proposition~\ref{prop:path}.
\end{itemize}

We now turn to Part 2 of the lemma. Suppose that the number $|\Pi|$ of players is fixed. In case of Boolean games with Reachability, Safety, B\"uchi, co-B\"uchi, Explicit Muller, and Parity objectives, by Part~$1$ of the lemma, we get a polynomial time algorithm for deciding whether there exists a play with payoff $p$. In case of Boolean games with Rabin, Streett, and Muller objectives, we need another argument in view of the complexities of Part~$1$ of the lemma. 

Let us begin with Rabin objectives (the argument is similar for Streett objectives). Recall that we are faced with a BC B\"uchi objective described by formula $\phi$ given in (\ref{eq:Rabin}). This formula is a conjunction of disjunctions of subformulas of the form either $g^i_j \wedge \neg r^i_j$, or $\neg g^i_j$, or $r^i_j$. It can be rewritten as a disjunction of conjunctions of those subformulas: $\bigvee_{r}\bigwedge_{s}\phi_{rs}$. In the latter formula, 
\begin{itemize}
\item the number of subformulas $\bigwedge_{s}\phi_{rs}$ is polynomial since it is equal to $\Pi_{i \mid p_i = 1} k_i \cdot \Pi_{i \mid p_i = 0} 2k_i$, and 
\item each subformula $\bigwedge_{s}\phi_{rs}$ describes a polynomial conjunction of B\"uchi and co-B\"uchi objectives, that is, an objective $\W_r$ equal to the conjunction of a Generalized B\"uchi objective and a co-B\"uchi objective. 
\end{itemize}
By Proposition~\ref{prop:path}, testing whether there exists a path satisfying $\W_r$ can be done in polynomial time. Therefore testing the existence of a path satisfying the objective described by $\phi$ reduces to a polynomial number of tests (disjunction $\bigvee_{r}$) that can be done in polynomial time (objective $\W_r$).

The argument for Muller objectives is similar. The formula $\phi$ given in (\ref{eq:Muller}) is a conjunction of disjunctions of subformulas of the form either $ \bigwedge_{c \in F} f^i_c ~\wedge \bigwedge_{c \not\in F} \neg f^i_c$, or $\neg f^i_c$, or $f^i_c$. Rewriting $\phi$ as a disjunction of conjunctions of those subformulas, that is, as $\bigvee_{r}\bigwedge_{s}\phi_{rs}$, we get again a polynomial number ($= \Pi_{i \mid p_i = 1} | \mathcal{F}_i | \cdot \Pi_{i \mid p_i = 0}( |\mathcal{F}_i| \cdot d_i)$) of subformulas $\bigwedge_{s}\phi_{rs}$, each of them describing a polynomial conjunction of B\"uchi and co-B\"uchi objectives.
\end{proof}

\subsection{P-completeness for Explicit Muller objectives and FPT for the other objectives}
\label{sec:ExplicitMuller}

With the complexity of the decision algorithm given in Lemma~\ref{lem:compl} and the study of $\mathcal{O}(path)$ made in  Lemma~\ref{lem:FPT}, we are now ready to show that the constraint problem is P-complete for Explicit Muller objectives, and that for the other objectives, it is fixed parameter tractable and becomes polynomial when the number $|\Pi|$ of players is fixed.

\begin{thm}
	The constraint problem for multiplayer Boolean games with Explicit Muller objectives is P-complete.	
\end{thm}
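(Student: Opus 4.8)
The statement splits into membership in P, which the plan is to obtain from the decision algorithm of Section~\ref{sec:FPT} together with the bounds of Lemmas~\ref{lem:compl} and~\ref{lem:FPT}, and P-hardness, which I would obtain by a (logspace) reduction from a canonical P-complete problem.

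\textbf{Membership in P.} The plan is to instantiate Lemma~\ref{lem:compl}: its bound $\mathcal{O}(m^3 \cdot |V| \cdot |\Pi| \cdot \init \cdot path \cdot (|V|+|E|))$ on the running time of the decision algorithm is polynomial as soon as the parameters $m$, $\mathcal{O}(\init)$ and $\mathcal{O}(path)$ are polynomial in the size of the input, which for Explicit Muller objectives comprises the explicitly listed families $\mathcal{F}_i \subseteq 2^V$, $i \in \Pi$. For $\mathcal{O}(path)$ this is already granted by Lemma~\ref{lem:FPT}. For $m = \Max_{v \in V}|\rP_0(v)|$ I would note that $\Gain(\rho)$ depends only on $\Inf(\rho)$, and moreover only on whether $\Inf(\rho)$ belongs to each $\mathcal{F}_i$: if $\Inf(\rho) \notin \bigcup_{i \in \Pi}\mathcal{F}_i$ the payoff is $(0,\dots,0)$, and otherwise it is one of at most $\sum_{i \in \Pi}|\mathcal{F}_i|$ distinct vectors; hence $m \le 1 + \sum_{i\in\Pi}|\mathcal{F}_i|$ is polynomial. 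The same enumeration shows that computing a single set $\rP_0(v)$ (that is, $\mathcal{O}(\init)$) is polynomial: list the $\le 1 + \sum_i|\mathcal{F}_i|$ candidate payoffs and, for each, invoke the $\mathcal{O}(path)$ test of Lemma~\ref{lem:FPT}. Plugging these polynomial bounds into Lemma~\ref{lem:compl} shows the decision algorithm runs in polynomial time; since Explicit Muller objectives are prefix-independent, Theorem~\ref{folkThm} guarantees it decides the constraint problem correctly, so the problem is in P.

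\textbf{P-hardness.} I would reduce from the problem of deciding the winner of a two-player turn-based \emph{reachability} game on a finite \emph{acyclic} arena (equivalently, evaluating an \textsc{And}/\textsc{Or} DAG), which is P-complete, e.g.\ via the monotone circuit value problem. Given such a game with protagonist $A$, antagonist $B$, target sinks $T$ and non-target sinks $\bar T$ (add a self-loop on each sink so the arena meets our totality requirement), I would build the two-player Boolean game on the same arena in which player $A$ has the Explicit Muller objective $\mathcal{F}_A = \{\{t\} \mid t \in T\}$ and player $B$ the Explicit Muller objective $\mathcal{F}_B = \{\{t\} \mid t \in \bar T\}$, with thresholds $x = (1,0)$, $y=(1,1)$. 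The correctness claim is: $A$ wins from $v_0$ iff there is a weak SPE with payoff $p$ satisfying $x \le p \le y$, i.e.\ with $p_A = 1$. For the ``only if'' direction I would use the positional profile in which, from every vertex, both players follow a fixed positional winning strategy of whichever of $A$, $B$ wins the zero-sum game from that vertex; a routine check (using positional determinacy of reachability games) shows this is an SPE, hence a weak SPE, whose outcome from $v_0$ reaches $T$ whenever $A$ wins. For the ``if'' direction I would first observe that on an acyclic arena every unilateral deviation is \emph{finitely} deviating — a strategy can differ only on the finitely many histories ending in a non-sink vertex, moves at sinks being forced — so weak SPEs coincide with SPEs and in particular are NEs at $v_0$ (this is consistent with Proposition~\ref{prop:equivWSPEetVWSPE}); then, if a weak SPE $\sigma$ has $p_A = 1$, its component $\sigma_A$ must be a winning strategy for $A$, for otherwise $B$ would have a counter-strategy reaching $\bar T$, raising his gain from $0$ to $1$ (the two sink regions being complementary on every play), contradicting the NE condition. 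The whole construction is computable in logarithmic space.

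\textbf{Main obstacle.} The delicate part is the hardness direction, specifically the observation that acyclicity of the arena forces every deviation to be finitely deviating, so that weak SPEs carry the full strength of subgame perfection needed to turn ``$p_A = 1$'' into a genuine winning strategy for $A$; arranging this while keeping the objectives Explicit Muller is exactly why the sinks get self-loops and why $\mathcal{F}_A, \mathcal{F}_B$ are taken to be families of singletons.
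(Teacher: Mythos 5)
Your proposal is correct and follows essentially the same route as the paper: membership in P comes from instantiating Lemma~\ref{lem:compl} with the polynomial bounds on $m$, $\mathcal{O}(\init)$ and $\mathcal{O}(path)$ derived from the explicitly listed families $\mathcal{F}_i$, and hardness comes from AND--OR graph reachability. The only difference is that the paper delegates the hardness reduction to the one in~\cite[Corollary 6.22]{Ummels05}, whereas you spell out an explicit (and correct) self-contained version of it with singleton Explicit Muller objectives on self-looped sinks.
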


\begin{proof}
We denote by $\mathcal{F}_i \subseteq 2^V$, the family of each player $i \in \Pi$ for his Explicit Muller objective. Let us prove the P-easyness.
By Lemma~\ref{lem:compl}, the decision algorithm is in $\mathcal{O}(m^3 \cdot |V| \cdot |\Pi| \cdot init \cdot path \cdot (|V| + |E|))$ time, where $m = \Max_{v\in V} |\rP_0(v)|$, $\mathcal{O}(\init)$ is the time complexity of computing $\rP_0(v)$ for some given vertex $v$, and $\mathcal{O}(path)$ is the time complexity of testing whether there exists a play with a given payoff $p$ from a given vertex $v$. Lemma~\ref{lem:FPT} states that $\mathcal{O}(path)$ is polynomial for Boolean games with Explicit Muller objectives. To establish the P-easiness, it remains to prove that $\mathcal{O}(m)$ and $\mathcal{O}(init)$ are also polynomial. 
First, if there exists a play $\rho$ with payoff $p$ from $v$, then either $\Inf(\rho) \in \cup_{i\in \Pi} {\mathcal{F}_i}$ or $p = (0,\ldots,0)$. Thus 
$$\rP_0(v) \subseteq P = \{q \in \{0,1\}^{|\Pi|} \mid \exists F \in \cup_{i\in \Pi} {\mathcal{F}_i}, q_i = 1 \Leftrightarrow F \in \mathcal{F}_i \} \cup \{(0,\ldots,0)\}.$$    
(This kind of argument was already used in the proof of Lemma~\ref{lem:FPT} for Explicit Muller objectives). Therefore, $|\rP_0(v)|  \leq  |P| \leq |\cup_{i\in \Pi} \mathcal{F}_i| + 1$, showing that $\mathcal{O}(m)$ is polynomial. 
Second, to compute $\rP_0(v)$, we check for each $p \in P$ whether there exists a play with payoff $p$ from $v$. It follows that $\mathcal{O}(init)$ is polynomial by Lemma~\ref{lem:FPT}.

The P-hardness is obtained thanks to a reduction from the AND-OR graph reachability problem that is P-complete~\cite{Immerman81}. Indeed, the P-hardness of the constraint problem for SPEs (instead of weak SPEs) in Boolean games with Reachability objectives is proved in~\cite[Corollary 6.22]{Ummels05} thanks to such a reduction, and it is not difficult to see that the same reduction also holds for weak SPEs and Explicit Muller objectives. 
\end{proof}


We recall that a \emph{parameterized language} $L$ is a subset of $\Sigma^* \times \mathbb{N}$, where $\Sigma$ is a finite alphabet, the second component being the parameter of the language. It is called \emph{fixed parameter tractable} if there is an algorithm that determines whether $(x,t) \in L$ in $f(t) \cdot |x|^c$ time, where $c$ is a constant independent of the parameter $t$ and $f$ is a computable function depending on $t$ only. We also say that $L$ belongs to the class FPT. Intuitively, a language is in FPT if there is an algorithm running in polynomial time with respect to the input size times some computable function on the parameter. 
We refer the interested reader to \cite{DowneyF99} for more details on parameterized complexity.

\begin{thm} \label{thm:FPT}
Let $\mathcal{G}$ be a Boolean game.
\begin{enumerate}
\item The constraint problem is in FPT for Reachability, Safety, B\"uchi, co-B\"uchi, Parity, Muller, Rabin, and Streett objectives. The parameters are
\begin{itemize}
\item the number $|\Pi|$ of players for Reachability, Safety, B\"uchi, co-B\"uchi, and Parity objectives,
\item the number $|\Pi|$ of players and the numbers $k_i$, $i \in \Pi$, of pairs $(G^i_j,R^i_j)_{1 \leq j \leq k_i}$, for Rabin and Streett objectives, and
\item the number $|\Pi|$ of players, the numbers $d_i$, $i \in \Pi$, of colors and the sizes $|\mathcal{F}_i|$, $i \in \Pi$, of the families of subsets of colors for Muller objectives.
\end{itemize}
\item When the number $|\Pi|$ of players is fixed, for all these kinds of objectives, the constraint problem can be solved in polynomial time.
\end {enumerate}
\end{thm}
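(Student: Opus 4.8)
The plan is to combine the complexity bound for the decision algorithm (Lemma~\ref{lem:compl}) with the per-kind analysis of $\mathcal{O}(path)$ provided by Lemma~\ref{lem:FPT}. Recall that Lemma~\ref{lem:compl} gives a running time in $\mathcal{O}(m^3 \cdot |V| \cdot |\Pi| \cdot init \cdot path \cdot (|V| + |E|))$, where $m = \Max_{v \in V}|\rP_0(v)|$, $\mathcal{O}(init)$ is the cost of computing one set $\rP_0(v)$, and $\mathcal{O}(path)$ is the cost of testing the existence of a play with a given payoff from a given vertex. Since each $\Gain_i$ is Boolean, we always have $m \leq 2^{|\Pi|}$, and since $\rP_0(v)$ is obtained by running at most $2^{|\Pi|}$ instances of the "path with payoff $p$" test, we always have $\mathcal{O}(init) \subseteq \mathcal{O}(2^{|\Pi|} \cdot path)$. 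Hence the decision algorithm runs in time $\mathcal{O}(2^{4|\Pi|} \cdot |V| \cdot |\Pi| \cdot path^2 \cdot (|V| + |E|))$, and the whole question reduces to controlling $\mathcal{O}(path)$ as a function of the parameters for each class of objectives.

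\textbf{First I would} handle Reachability, Safety, B\"uchi, co-B\"uchi, and Parity objectives. For B\"uchi, co-B\"uchi, and Parity, Lemma~\ref{lem:FPT}(1) states that $\mathcal{O}(path)$ is \emph{polynomial} (independent of $|\Pi|$); plugging this into the bound above gives a running time of the form $2^{\mathcal{O}(|\Pi|)} \cdot \mathrm{poly}(|V|,|E|)$, which is exactly the FPT form $f(|\Pi|) \cdot |x|^c$ with parameter $|\Pi|$. For Reachability and Safety, Lemma~\ref{lem:FPT}(1) gives $\mathcal{O}(path)$ in $\mathcal{O}(2^{|\Pi|}(|V|+|E|))$; this is still of the form $2^{\mathcal{O}(|\Pi|)}\cdot \mathrm{poly}(|V|,|E|)$, so again the decision algorithm is FPT with parameter $|\Pi|$. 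This settles the first bullet of Part~1.

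\textbf{Next} I would treat Rabin, Streett, and Muller objectives, where Lemma~\ref{lem:FPT}(1) yields $\mathcal{O}(path)$ in ${\cal O}(2^L \cdot M + (L^L \cdot |V|)^5)$ with $L = 2^\ell$, and $\ell, M$ are polynomial in the extra parameters: $\ell = \Sigma_i 2k_i$, $M = \mathcal{O}(\Sigma_i 2k_i)$ for Rabin/Streett, and $\ell = \Sigma_i d_i$, $M = \mathcal{O}(\Sigma_i |\mathcal{F}_i|\cdot d_i)$ for Muller. The key observation is that $2^L \cdot M$ and $(L^L \cdot |V|)^5 = L^{5L}\cdot |V|^5$ both factor as (a computable function of $L$, hence of $\ell$, hence of the chosen parameters) times a polynomial in $|V|$. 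Combining with the outer $2^{\mathcal{O}(|\Pi|)}$ factor, the total running time is of the form $g(|\Pi|, (k_i)_i) \cdot \mathrm{poly}(|V|,|E|)$ for Rabin/Streett and $g(|\Pi|, (d_i)_i, (|\mathcal{F}_i|)_i) \cdot \mathrm{poly}(|V|,|E|)$ for Muller, which is exactly the claimed FPT membership with the enlarged parameter sets. (One must also check that $m$ and $init$ do not secretly depend on the input size in a bad way; but $m \le 2^{|\Pi|}$ and $init \le 2^{|\Pi|}\cdot path$ absorb cleanly into $g$.)

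\textbf{Finally}, for Part~2, fix $|\Pi|$ to be a constant. Then $m \leq 2^{|\Pi|}$ and the $2^{\mathcal{O}(|\Pi|)}$ prefactor are constants, so the running time of the decision algorithm is $\mathcal{O}(\mathrm{poly}(|V|,|E|) \cdot path^2)$. By Lemma~\ref{lem:FPT}(2), when $|\Pi|$ is fixed the existence of a play with a given payoff can be decided in polynomial time for \emph{all} the objective classes considered — including Rabin, Streett, and Muller, where the disjunction-of-conjunctions rewriting of the formula has only polynomially many disjuncts once $|\Pi|$ is bounded. Hence $\mathcal{O}(path)$ is polynomial and the whole decision algorithm runs in polynomial time. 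I do not expect any genuine obstacle here: the theorem is essentially a bookkeeping corollary assembling Lemmas~\ref{lem:compl} and~\ref{lem:FPT}; the only mild care needed is in verifying that every exponential factor in the final running-time expression is a function of the designated parameters alone and that the remaining factors are genuinely polynomial in $|V|$ and $|E|$.
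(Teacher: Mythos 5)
Your proposal is correct and follows exactly the paper's own argument: it instantiates the running-time bound of Lemma~\ref{lem:compl} using $m \leq 2^{|\Pi|}$ and $\mathcal{O}(init) = \mathcal{O}(2^{|\Pi|} \cdot path)$, and then reads off the FPT form (respectively polynomial form when $|\Pi|$ is fixed) from the case analysis of $\mathcal{O}(path)$ in Lemma~\ref{lem:FPT}. The only difference is that you spell out the per-class bookkeeping that the paper leaves implicit, which is harmless.
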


Notice that in this theorem, to obtain fixed parameter tractability for Rabin, Streett, and Muller objectives, in addition to the number of players, we also have to consider the parameter equal to the size of the objective description. Nevertheless, when the number of players is fixed, we get polynomial time complexity for all types of objectives.

\begin{proof}[Proof of Theorem~\ref{thm:FPT}]
We again use Lemmas~\ref{lem:compl} and~~\ref{lem:FPT} as in the previous proof. By Lemma~\ref{lem:compl}, the complexity of the \fixpointalgo\ is in $\mathcal{O}(m^3 \cdot |V| \cdot |\Pi| \cdot init \cdot path \cdot (|V| + |E|))$ 
time. The complexity $\mathcal{O}(path)$ is studied in Part 1 (resp. in Part 2 when $|\Pi|$ is fixed) of Lemma~\ref{lem:FPT}, and we have $m \leq 2^{|\Pi|}$ and $\mathcal{O}(\init) = \mathcal{O}(2^{|\Pi|} \cdot path)$. Hence we get both parts of Theorem~\ref{thm:FPT} as a consequence of these results. 
\end{proof}

\section{Conclusion and future work}
\label{sec:conc}
In this paper, we have studied the computational complexity of the constraint problem for weak SPEs. We were able to obtain precise complexities for all the classical classes of $\omega$-regular objectives (see Table~\ref{tab:complexity}), with one exception: we have proved NP-membership for B\"uchi objectives and failed to prove NP-hardness. We have also shown that the constraint problem can be solved in polynomial time when the number of players is fixed. Finally, we have provided some fixed parameter tractable algorithms when the number of players is considered as a parameter of the problem, for Reachability, Safety, B\"uchi, Co-B\"uchi, and Parity  objectives. For the other Rabin, Streett, and Muller objectives, we also had to consider the size of the objective description as a parameter to obtain fixed parameter tractability. In a future work, we want to understand if the use of this second parameter is really necessary.

By characterizing the exact complexity of the constraint problem for Reachability and Safety objectives, we have obtained that this problem for SPEs (as for weak SPEs) is PSPACE-complete for those objectives. In the future, we intend to investigate the complexity of the other classes of $\omega$-regular objectives for SPEs. It would be also interesting to extend the study to quantitative games. For instance the constraint problem for (weak) SPEs in reachability quantitative games is decidable~\cite{BrihayeBMR15} but its complexity is unknown.

\bibliographystyle{eptcs} 

\bibliography{biblio}

\begin{thebibliography}{10}
\providecommand{\bibitemdeclare}[2]{}
\providecommand{\surnamestart}{}
\providecommand{\surnameend}{}
\providecommand{\urlprefix}{Available at }
\providecommand{\url}[1]{\texttt{#1}}
\providecommand{\href}[2]{\texttt{#2}}
\providecommand{\urlalt}[2]{\href{#1}{#2}}
\providecommand{\doi}[1]{doi:\urlalt{http://dx.doi.org/#1}{#1}}
\providecommand{\bibinfo}[2]{#2}

\bibitemdeclare{inproceedings}{AlfaroF07}
\bibitem{AlfaroF07}
\bibinfo{author}{Luca \surnamestart de~Alfaro\surnameend} \&
  \bibinfo{author}{Marco \surnamestart Faella\surnameend}
  (\bibinfo{year}{2007}): \emph{\bibinfo{title}{An Accelerated Algorithm for
  3-Color Parity Games with an Application to Timed Games}}.
\newblock In: {\sl \bibinfo{booktitle}{{CAV}}}, {\sl \bibinfo{series}{LNCS}}
  \bibinfo{volume}{4590}, \bibinfo{publisher}{Springer}, pp.
  \bibinfo{pages}{108--120}.

\bibitemdeclare{inproceedings}{berwanger07}
\bibitem{berwanger07}
\bibinfo{author}{Dietmar \surnamestart Berwanger\surnameend}
  (\bibinfo{year}{2007}): \emph{\bibinfo{title}{Admissibility in Infinite
  Games}}.
\newblock In: {\sl \bibinfo{booktitle}{{STACS}}}, {\sl \bibinfo{series}{LNCS}}
  \bibinfo{volume}{4393}, \bibinfo{publisher}{Springer}, pp.
  \bibinfo{pages}{188--199}.

\bibitemdeclare{article}{BouyerBMU15}
\bibitem{BouyerBMU15}
\bibinfo{author}{Patricia \surnamestart Bouyer\surnameend},
  \bibinfo{author}{Romain \surnamestart Brenguier\surnameend},
  \bibinfo{author}{Nicolas \surnamestart Markey\surnameend} \&
  \bibinfo{author}{Michael \surnamestart Ummels\surnameend}
  (\bibinfo{year}{2015}): \emph{\bibinfo{title}{Pure Nash Equilibria in
  Concurrent Deterministic Games}}.
\newblock {\sl \bibinfo{journal}{Logical Methods in Computer Science}}
  \bibinfo{volume}{11}(\bibinfo{number}{2}).

\bibitemdeclare{inproceedings}{BrenguierCHPRRS16}
\bibitem{BrenguierCHPRRS16}
\bibinfo{author}{Romain \surnamestart Brenguier\surnameend},
  \bibinfo{author}{Lorenzo \surnamestart Clemente\surnameend},
  \bibinfo{author}{Paul \surnamestart Hunter\surnameend},
  \bibinfo{author}{Guillermo~A. \surnamestart P{\'{e}}rez\surnameend},
  \bibinfo{author}{Mickael \surnamestart Randour\surnameend},
  \bibinfo{author}{Jean{-}Fran{\c{c}}ois \surnamestart Raskin\surnameend},
  \bibinfo{author}{Ocan \surnamestart Sankur\surnameend} \&
  \bibinfo{author}{Mathieu \surnamestart Sassolas\surnameend}
  (\bibinfo{year}{2016}): \emph{\bibinfo{title}{Non-Zero Sum Games for Reactive
  Synthesis}}.
\newblock In: {\sl \bibinfo{booktitle}{{LATA}}}, {\sl \bibinfo{series}{LNCS}}
  \bibinfo{volume}{9618}, \bibinfo{publisher}{Springer}, pp.
  \bibinfo{pages}{3--23}.

\bibitemdeclare{inproceedings}{BRS-concur15}
\bibitem{BRS-concur15}
\bibinfo{author}{Romain \surnamestart Brenguier\surnameend},
  \bibinfo{author}{Jean-Fran{\c{c}}ois \surnamestart Raskin\surnameend} \&
  \bibinfo{author}{Ocan \surnamestart Sankur\surnameend}
  (\bibinfo{year}{2015}): \emph{\bibinfo{title}{{Assume-admissible
  synthesis}}}.
\newblock In: {\sl \bibinfo{booktitle}{{CONCUR}}}, \bibinfo{series}{LIPIcs 42},
  \bibinfo{publisher}{Schloss Dagstuhl--Leibniz-Zentrum fuer Informatik}, pp.
  \bibinfo{pages}{100--113}.

\bibitemdeclare{inproceedings}{BRS14}
\bibitem{BRS14}
\bibinfo{author}{Romain \surnamestart Brenguier\surnameend},
  \bibinfo{author}{Jean{-}Fran{\c{c}}ois \surnamestart Raskin\surnameend} \&
  \bibinfo{author}{Mathieu \surnamestart Sassolas\surnameend}
  (\bibinfo{year}{2014}): \emph{\bibinfo{title}{The complexity of admissibility
  in Omega-regular games}}.
\newblock In: {\sl \bibinfo{booktitle}{{CSL-LICS}}},
  \bibinfo{publisher}{{ACM}}, pp. \bibinfo{pages}{23:1--23:10}.

\bibitemdeclare{article}{BrihayeBDG12}
\bibitem{BrihayeBDG12}
\bibinfo{author}{Thomas \surnamestart Brihaye\surnameend},
  \bibinfo{author}{V{\'{e}}ronique \surnamestart Bruy{\`{e}}re\surnameend},
  \bibinfo{author}{Julie \surnamestart {De Pril}\surnameend} \&
  \bibinfo{author}{Hugo \surnamestart Gimbert\surnameend}
  (\bibinfo{year}{2012}): \emph{\bibinfo{title}{On Subgame Perfection in
  Quantitative Reachability Games}}.
\newblock {\sl \bibinfo{journal}{Logical Methods in Computer Science}}
  \bibinfo{volume}{9}.

\bibitemdeclare{inproceedings}{BrihayeBMR15}
\bibitem{BrihayeBMR15}
\bibinfo{author}{Thomas \surnamestart Brihaye\surnameend},
  \bibinfo{author}{V{\'{e}}ronique \surnamestart Bruy{\`{e}}re\surnameend},
  \bibinfo{author}{No{\'{e}}mie \surnamestart Meunier\surnameend} \&
  \bibinfo{author}{Jean{-}Fran{\c{c}}ois \surnamestart Raskin\surnameend}
  (\bibinfo{year}{2015}): \emph{\bibinfo{title}{Weak Subgame Perfect Equilibria
  and their Application to Quantitative Reachability}}.
\newblock In: {\sl \bibinfo{booktitle}{{CSL}}}, {\sl
  \bibinfo{series}{LIPIcs}}~\bibinfo{volume}{41}, \bibinfo{publisher}{Schloss
  Dagstuhl - Leibniz-Zentrum fuer Informatik}, pp. \bibinfo{pages}{504--518}.

\bibitemdeclare{inproceedings}{Bruyere17}
\bibitem{Bruyere17}
\bibinfo{author}{V{\'{e}}ronique \surnamestart Bruy{\`{e}}re\surnameend}
  (\bibinfo{year}{2017}): \emph{\bibinfo{title}{Computer Aided Synthesis: {A}
  Game-Theoretic Approach}}.
\newblock In: {\sl \bibinfo{booktitle}{{DLT}}}, {\sl \bibinfo{series}{LNCS}}
  \bibinfo{volume}{10396}, \bibinfo{publisher}{Springer}, pp.
  \bibinfo{pages}{3--35}.

\bibitemdeclare{article}{BruyereHR17}
\bibitem{BruyereHR17}
\bibinfo{author}{V{\'{e}}ronique \surnamestart Bruy{\`{e}}re\surnameend},
  \bibinfo{author}{Quentin \surnamestart Hautem\surnameend} \&
  \bibinfo{author}{Jean{-}Fran{\c{c}}ois \surnamestart Raskin\surnameend}
  (\bibinfo{year}{2017}): \emph{\bibinfo{title}{Games with lexicographically
  ordered {\(\omega\)}-regular objectives}}.
\newblock {\sl \bibinfo{journal}{CoRR}} \bibinfo{volume}{abs/1707.05968}.

\bibitemdeclare{inproceedings}{BMR14}
\bibitem{BMR14}
\bibinfo{author}{V{\'{e}}ronique \surnamestart Bruy{\`{e}}re\surnameend},
  \bibinfo{author}{No{\'{e}}mie \surnamestart Meunier\surnameend} \&
  \bibinfo{author}{Jean{-}Fran{\c{c}}ois \surnamestart Raskin\surnameend}
  (\bibinfo{year}{2014}): \emph{\bibinfo{title}{Secure equilibria in weighted
  games}}.
\newblock In: {\sl \bibinfo{booktitle}{{CSL-LICS}}},
  \bibinfo{publisher}{{ACM}}, pp. \bibinfo{pages}{26:1--26:26}.

\bibitemdeclare{inproceedings}{Bruyere0PR17}
\bibitem{Bruyere0PR17}
\bibinfo{author}{V{\'{e}}ronique \surnamestart Bruy{\`{e}}re\surnameend},
  \bibinfo{author}{St{\'{e}}phane~Le \surnamestart Roux\surnameend},
  \bibinfo{author}{Arno \surnamestart Pauly\surnameend} \&
  \bibinfo{author}{Jean{-}Fran{\c{c}}ois \surnamestart Raskin\surnameend}
  (\bibinfo{year}{2017}): \emph{\bibinfo{title}{On the Existence of Weak
  Subgame Perfect Equilibria}}.
\newblock In: {\sl \bibinfo{booktitle}{{FOSSACS}}}, {\sl
  \bibinfo{series}{LNCS}} \bibinfo{volume}{10203}, pp.
  \bibinfo{pages}{145--161}.

\bibitemdeclare{article}{KHJ06}
\bibitem{KHJ06}
\bibinfo{author}{K.~\surnamestart Chatterjee\surnameend},
  \bibinfo{author}{T.~A. \surnamestart Henzinger\surnameend} \&
  \bibinfo{author}{M.~\surnamestart Jurdzinski\surnameend}
  (\bibinfo{year}{2006}): \emph{\bibinfo{title}{Games with secure equilibria}}.
\newblock {\sl \bibinfo{journal}{Theoretical Computer Science}}
  \bibinfo{volume}{365}, pp. \bibinfo{pages}{67--82}.

\bibitemdeclare{inproceedings}{ChatterjeeHP07}
\bibitem{ChatterjeeHP07}
\bibinfo{author}{Krishnendu \surnamestart Chatterjee\surnameend},
  \bibinfo{author}{Thomas~A. \surnamestart Henzinger\surnameend} \&
  \bibinfo{author}{Nir \surnamestart Piterman\surnameend}
  (\bibinfo{year}{2007}): \emph{\bibinfo{title}{Generalized Parity Games}}.
\newblock In: {\sl \bibinfo{booktitle}{{FOSSACS}}}, {\sl
  \bibinfo{series}{LNCS}} \bibinfo{volume}{4423},
  \bibinfo{publisher}{Springer}, pp. \bibinfo{pages}{153--167}.

\bibitemdeclare{book}{DowneyF99}
\bibitem{DowneyF99}
\bibinfo{author}{Rodney~G. \surnamestart Downey\surnameend} \&
  \bibinfo{author}{M.~R. \surnamestart Fellows\surnameend}
  (\bibinfo{year}{2012}): \emph{\bibinfo{title}{Parameterized Complexity}}.
\newblock \bibinfo{publisher}{Springer Publishing Company, Incorporated}.

\bibitemdeclare{article}{EmersonL87}
\bibitem{EmersonL87}
\bibinfo{author}{E.~Allen \surnamestart Emerson\surnameend} \&
  \bibinfo{author}{Chin{-}Laung \surnamestart Lei\surnameend}
  (\bibinfo{year}{1987}): \emph{\bibinfo{title}{Modalities for Model Checking:
  Branching Time Logic Strikes Back}}.
\newblock {\sl \bibinfo{journal}{Science of Computer Programming}}
  \bibinfo{volume}{8}(\bibinfo{number}{3}), pp. \bibinfo{pages}{275--306}.

\bibitemdeclare{article}{FijalkowH13}
\bibitem{FijalkowH13}
\bibinfo{author}{Nathana{\"{e}}l \surnamestart Fijalkow\surnameend} \&
  \bibinfo{author}{Florian \surnamestart Horn\surnameend}
  (\bibinfo{year}{2013}): \emph{\bibinfo{title}{Les jeux d'accessibilit{\'{e}}
  g{\'{e}}n{\'{e}}ralis{\'{e}}e}}.
\newblock {\sl \bibinfo{journal}{Technique et Science Informatiques}}
  \bibinfo{volume}{32}(\bibinfo{number}{9-10}), pp. \bibinfo{pages}{931--949}.

\bibitemdeclare{inproceedings}{FismanKL10}
\bibitem{FismanKL10}
\bibinfo{author}{Dana \surnamestart Fisman\surnameend}, \bibinfo{author}{Orna
  \surnamestart Kupferman\surnameend} \& \bibinfo{author}{Yoad \surnamestart
  Lustig\surnameend} (\bibinfo{year}{2010}): \emph{\bibinfo{title}{Rational
  Synthesis}}.
\newblock In: {\sl \bibinfo{booktitle}{TACAS}}, {\sl \bibinfo{series}{LNCS}}
  \bibinfo{volume}{6015}, \bibinfo{publisher}{Springer}, pp.
  \bibinfo{pages}{190--204}.

\bibitemdeclare{article}{FleschKMSSV10}
\bibitem{FleschKMSSV10}
\bibinfo{author}{J{\'{a}}nos \surnamestart Flesch\surnameend},
  \bibinfo{author}{Jeroen \surnamestart Kuipers\surnameend},
  \bibinfo{author}{Ayala \surnamestart Mashiah{-}Yaakovi\surnameend},
  \bibinfo{author}{Gijs \surnamestart Schoenmakers\surnameend},
  \bibinfo{author}{Eilon \surnamestart Solan\surnameend} \&
  \bibinfo{author}{Koos \surnamestart Vrieze\surnameend}
  (\bibinfo{year}{2010}): \emph{\bibinfo{title}{Perfect-Information Games with
  Lower-Semicontinuous Payoffs}}.
\newblock {\sl \bibinfo{journal}{Mathematics of Operation Research}}
  \bibinfo{volume}{35}, pp. \bibinfo{pages}{742--755}.

\bibitemdeclare{book}{fudenberg1991game}
\bibitem{fudenberg1991game}
\bibinfo{author}{D.~\surnamestart Fudenberg\surnameend} \&
  \bibinfo{author}{J.~\surnamestart Tirole\surnameend} (\bibinfo{year}{1991}):
  \emph{\bibinfo{title}{Game Theory}}.
\newblock \bibinfo{publisher}{MIT Press}.

\bibitemdeclare{proceedings}{2001automata}
\bibitem{2001automata}
\bibinfo{editor}{Erich \surnamestart Gr{\"{a}}del\surnameend},
  \bibinfo{editor}{Wolfgang \surnamestart Thomas\surnameend} \&
  \bibinfo{editor}{Thomas \surnamestart Wilke\surnameend}, editors
  (\bibinfo{year}{2002}): \emph{\bibinfo{title}{Automata, Logics, and Infinite
  Games: {A} Guide to Current Research}}. {\sl \bibinfo{series}{LNCS}}
  \bibinfo{volume}{2500}, \bibinfo{publisher}{Springer}.

\bibitemdeclare{inproceedings}{GU08}
\bibitem{GU08}
\bibinfo{author}{Erich \surnamestart Gr{\"a}del\surnameend} \&
  \bibinfo{author}{Michael \surnamestart Ummels\surnameend}
  (\bibinfo{year}{2008}): \emph{\bibinfo{title}{{Solution Concepts and
  Algorithms for Infinite Multiplayer Games}}}.
\newblock In: {\sl \bibinfo{booktitle}{{New Perspectives on Games and
  Interaction}}}, \bibinfo{volume}{4}, \bibinfo{publisher}{Amsterdam University
  Press}, pp. \bibinfo{pages}{151--178}.

\bibitemdeclare{inproceedings}{Horn08}
\bibitem{Horn08}
\bibinfo{author}{Florian \surnamestart Horn\surnameend} (\bibinfo{year}{2008}):
  \emph{\bibinfo{title}{Explicit Muller Games are {PTIME}}}.
\newblock In: {\sl \bibinfo{booktitle}{{FSTTCS}}}, {\sl
  \bibinfo{series}{LIPIcs}}~\bibinfo{volume}{2}, \bibinfo{publisher}{Schloss
  Dagstuhl - Leibniz-Zentrum fuer Informatik}, pp. \bibinfo{pages}{235--243}.

\bibitemdeclare{article}{Immerman81}
\bibitem{Immerman81}
\bibinfo{author}{Neil \surnamestart Immerman\surnameend}
  (\bibinfo{year}{1981}): \emph{\bibinfo{title}{Number of Quantifiers is Better
  Than Number of Tape Cells}}.
\newblock {\sl \bibinfo{journal}{Journal of Computer and System Sciences}}
  \bibinfo{volume}{22}, pp. \bibinfo{pages}{384--406}.

\bibitemdeclare{inproceedings}{KupfermanPV14}
\bibitem{KupfermanPV14}
\bibinfo{author}{Orna \surnamestart Kupferman\surnameend},
  \bibinfo{author}{Giuseppe \surnamestart Perelli\surnameend} \&
  \bibinfo{author}{Moshe~Y. \surnamestart Vardi\surnameend}
  (\bibinfo{year}{2014}): \emph{\bibinfo{title}{Synthesis with Rational
  Environments}}.
\newblock In: {\sl \bibinfo{booktitle}{{EUMAS}}}, \bibinfo{series}{LNCS 8953},
  \bibinfo{publisher}{Springer}, pp. \bibinfo{pages}{219--235}.

\bibitemdeclare{inproceedings}{nash50}
\bibitem{nash50}
\bibinfo{author}{J.~F. \surnamestart Nash\surnameend} (\bibinfo{year}{1950}):
  \emph{\bibinfo{title}{Equilibrium points in $n$-person games}}.
\newblock In: {\sl \bibinfo{booktitle}{PNAS}}, \bibinfo{volume}{36},
  \bibinfo{publisher}{National Academy of Sciences}, pp.
  \bibinfo{pages}{48--49}.

\bibitemdeclare{book}{osbornebook}
\bibitem{osbornebook}
\bibinfo{author}{{Martin J.} \surnamestart Osborne\surnameend}
  (\bibinfo{year}{2004}): \emph{\bibinfo{title}{An introduction to game
  theory}}.
\newblock \bibinfo{publisher}{Oxford Univ. Press}.

\bibitemdeclare{inproceedings}{PnueliR89}
\bibitem{PnueliR89}
\bibinfo{author}{A.~\surnamestart Pnueli\surnameend} \&
  \bibinfo{author}{R.~\surnamestart Rosner\surnameend} (\bibinfo{year}{1989}):
  \emph{\bibinfo{title}{On the Synthesis of a Reactive Module}}.
\newblock In: {\sl \bibinfo{booktitle}{POPL}}, \bibinfo{publisher}{ACM Press},
  pp. \bibinfo{pages}{179--190}.

\bibitemdeclare{inproceedings}{Thomas95}
\bibitem{Thomas95}
\bibinfo{author}{Wolfgang \surnamestart Thomas\surnameend}
  (\bibinfo{year}{1995}): \emph{\bibinfo{title}{On the Synthesis of Strategies
  in Infinite Games}}.
\newblock In: {\sl \bibinfo{booktitle}{{STACS}}},
  \bibinfo{publisher}{Springer}, pp. \bibinfo{pages}{1--13}.

\bibitemdeclare{misc}{Ummels05}
\bibitem{Ummels05}
\bibinfo{author}{Michael \surnamestart Ummels\surnameend}
  (\bibinfo{year}{2005}): \emph{\bibinfo{title}{{Rational Behaviour and
  Strategy Construction in Infinite Multiplayer Games}}}.
\newblock \bibinfo{howpublished}{{Diploma thesis, RWTH Aachen}}.
\newblock \urlprefix\url{http://www.logic.rwth-aachen.de/~ummels/diplom.pdf}.

\bibitemdeclare{inproceedings}{Ummels06}
\bibitem{Ummels06}
\bibinfo{author}{Michael \surnamestart Ummels\surnameend}
  (\bibinfo{year}{2006}): \emph{\bibinfo{title}{Rational Behaviour and Strategy
  Construction in Infinite Multiplayer Games}}.
\newblock In: {\sl \bibinfo{booktitle}{{FSTTCS}}}, {\sl \bibinfo{series}{LNCS}}
  \bibinfo{volume}{4337}, \bibinfo{publisher}{Springer}, pp.
  \bibinfo{pages}{212--223}.

\bibitemdeclare{inproceedings}{Ummels08}
\bibitem{Ummels08}
\bibinfo{author}{Michael \surnamestart Ummels\surnameend}
  (\bibinfo{year}{2008}): \emph{\bibinfo{title}{The Complexity of Nash
  Equilibria in Infinite Multiplayer Games}}.
\newblock In: {\sl \bibinfo{booktitle}{{FOSSACS}}}, {\sl
  \bibinfo{series}{LNCS}} \bibinfo{volume}{4962},
  \bibinfo{publisher}{Springer}, pp. \bibinfo{pages}{20--34}.

\end{thebibliography}



	
\end{document}